\documentclass[
prl,
twocolumn,
showpacs,
superscriptaddress,
longbibliography]{revtex4-2}

\usepackage{amsfonts}
\usepackage{amsmath}
\usepackage{amssymb}
\usepackage{amsthm}
\usepackage{bm}
\usepackage{dcolumn}
\usepackage{graphicx}
\usepackage{graphics}
\usepackage[latin1]{inputenc}
\usepackage{latexsym}
\usepackage{rotating}
\usepackage{hyperref}
\usepackage[all]{hypcap} 
\usepackage{xspace} 
\usepackage[usenames,dvipsnames]{color}
\usepackage{mathrsfs}
\usepackage{subfigure}
\usepackage{aas_macros}
\usepackage{booktabs}
\usepackage{siunitx}
\usepackage{appendix}
\usepackage{tensor}

\definecolor{mred}{RGB}{127,0,25}
\definecolor{mdgr}{RGB}{51,51,51}
\definecolor{mag}{RGB}{211, 54, 130}
\definecolor{verm}{RGB}{164, 25, 0}

\hypersetup{colorlinks=true,citecolor=NavyBlue,linkcolor=NavyBlue,urlcolor=NavyBlue}


\newcommand{\D}{\nabla}

\newcommand{\pd}{{\partial}}

\newcommand{\svctv}[3]{\tensor[_{(#2)}]{#1}{^{#3}}}
\newcommand{\svcov}[3]{\tensor[^{(#2)}]{#1}{_{#3}}}

\newtheorem{lemma}{Lemma}


\begin{document}


\title{Square Peg in a Circular Hole: \\ Choosing the Right Ansatz for Isolated Black Holes in Generic Gravitational Theories}

\author{Yiqi Xie}
\email{yiqixie2@illinois.edu}
\affiliation{Illinois  Center  for  Advanced  Studies  of  the  Universe and Department of Physics, University of Illinois at Urbana-Champaign, Urbana, Illinois 61801, USA}

\author{Jun Zhang}
\email{jun.zhang@imperial.ac.uk}
\affiliation{Theoretical Physics, Blackett Laboratory, Imperial College, London, SW7 2AZ, United Kingdom}
\affiliation{Illinois  Center  for  Advanced  Studies  of  the  Universe and Department of Physics, University of Illinois at Urbana-Champaign, Urbana, Illinois 61801, USA}

\author{Hector O. Silva}
\email{hector.silva@aei.mpg.de}
\affiliation{Max-Planck-Institut f\"ur Gravitationsphysik (Albert-Einstein-Institut), Am M\"uhlenberg 1, D-14476 Potsdam, Germany}
\affiliation{Illinois  Center  for  Advanced  Studies  of  the  Universe and Department of Physics, University of Illinois at Urbana-Champaign, Urbana, Illinois 61801, USA}

\author{Claudia de Rham}
\email{c.de-rham@imperial.ac.uk}
\affiliation{Theoretical Physics, Blackett Laboratory, Imperial College, London, SW7 2AZ, United Kingdom}
\affiliation{CERCA, Department of Physics, Case Western Reserve University, 10900 Euclid Avenue, Cleveland, Ohio 44106, USA}

\author{Helvi Witek}
\email{hwitek@illinois.edu}
\affiliation{Illinois  Center  for  Advanced  Studies  of  the  Universe and Department of Physics, University of Illinois at Urbana-Champaign, Urbana, Illinois 61801, USA}

\author{Nicol\'as Yunes}
\email{nyunes@illinois.edu}
\affiliation{Illinois  Center  for  Advanced  Studies  of  the  Universe and Department of Physics, University of Illinois at Urbana-Champaign, Urbana, Illinois 61801, USA}

\date{\today}

\begin{abstract}
The metric of a spacetime can be greatly simplified if the spacetime is circular. 
We prove that in generic effective theories of gravity, the spacetime of a stationary, axisymmetric and asymptotically flat solution must be circular if the solution can be obtained perturbatively from a solution in the general relativity limit. This result applies to a broad class of gravitational theories that include arbitrary scalars and vectors in their light sector, so long as their nonstandard kinetic terms and nonmininal couplings to gravity are treated perturbatively.
\end{abstract}


\maketitle

\emph{Introduction.---}%
Despite the complexity and nonlinearity of the Einstein equations, rotating black holes in general relativity (GR) are described by a remarkably simple analytical solution obtained by Kerr~\cite{Kerr:1963ud,Kerr:2009GReGr}. A crucial 
step in finding the Kerr solution is that the ten unknown functions of four coordinate variables in the metric can be reduced to four unknown functions of only two variables. This simplification is only possible because stationary and axisymmetric vacuum solutions in GR belong to a specific class called ``circular spacetimes''~\cite{Papapetrou:1966zz}.
%
%
%
%
However, this is not necessarily the case in generic gravitational theories~\cite{Berti:2015itd}
and one should not expect \textit{a priori} that  black hole solutions in such theories will be circular.
%
%
%
%
In particular, one should expect 
the validity of the circularity assumption to play a role as important as it did in GR
to obtain rotating black hole solutions (either numerically or analytically) in such theories.
%
In turn, knowledge of these solutions constitutes the stepping stone on which many tests of strong-field gravity rely~\cite{Yagi:2016jml}. 
%
The use of an oversimplified ansatz based on the circularity condition can lead to spacetimes that are 
inconsistent with a given theory's field equations. 
This was recently observed, for instance, in the case of rotating black hole solutions with linearly time-dependent hair in cubic Galileon theories in which the circularity condition is not satisfied~\cite{VanAelst:2019kku}.
%
%

In this Letter, we investigate the circularity of stationary and axisymmetric solutions in generic gravitational theories, paving the way for finding rotating black hole solutions in GR and beyond. In order to remain generic on the gravitational theory, we work within the effective field theory (EFT) framework in which UV modifications of GR manifest as higher dimensional operators in the low-energy EFT and can be treated perturbatively.
The EFT framework works well for isolated astrophysical black holes, which have masses in the 
$\sim 5$--$10^{10} M_\odot$ range thanks to their low energy scale ($\lesssim 10^{-11}~{\rm eV}$).
The framework is also supported by the agreement of GR predictions 
%
%
with gravitational wave detections~\cite{LIGOScientific:2019fpa} and other electromagnetic observations~\cite{Abuter:2020dou}.
In particular, we focus on gravitational theories whose low-energy EFT represents extensions of GR involving additional (scalar) fields and other operators.
These EFTs include $f(R)$ gravity or more general scalar-tensor theories~\cite{Sotiriou:2008rp,Kobayashi:2019hrl} and quadratic gravity~\cite{Yagi:2015oca} such as dynamical Chern-Simons gravity~\cite{Jackiw:2003pm,Alexander:2009tp}
and Einstein-dilaton-Gauss-Bonnet gravity~\cite{Metsaev:1986yb,Kanti:1995vq}, as well as gravitational EFTs without light scalar fields, like those studied in~\cite{Endlich:2017tqa,Sennett:2019bpc,deRham:2020ejn}.

As the modifications of GR are small, black hole solutions in the EFTs can be obtained through a perturbative expansion around one (or more) coupling constants of such theories (see~\cite{Campbell:1990fu,Campbell:1990ai,Campbell:1991kz,Campbell:1992hc,Mignemi:1992nt,Yunes:2011we,Pani:2011gy,Yagi:2012ya,Ayzenberg:2014aka,Maselli:2015tta,Maselli:2015yva,Maselli:2017kic,Cardoso:2018ptl,Julie:2019sab,Cano:2019ore} for examples).
We show here that the spacetime of stationary, axisymmetric, and asymptotically flat solutions is circular in these EFTs, hence also in the corresponding high-energy gravitational theories.
In principle, there could be other branches of solutions that are not connected perturbatively to their GR counterparts (see~\cite{Doneva:2017bvd,Silva:2017uqg} for example), but these are not the focus of this Letter.
We use geometric units ($c=8 \pi G=1$) and employ the $(-,+,+,+)$ metric signature.

\emph{Circular spacetimes in GR.---}%
Consider a stationary and axisymmetric spacetime associated with two Killing vectors $\xi^\mu$ and $\chi^\mu$ that correspond to the two isometries, respectively.
Figure~\ref{fig:geom} gives a schematic illustration of this geometry.
%
Carter~\cite{Carter:1970ea} showed that the two Killing vectors commute, which means one can choose adapted coordinates $(t,r,\theta,\phi)$ on the spacetime such that $\xi = \pd_t$ and $\chi= \pd_\phi$. The isometries imply
\begin{eqnarray}\label{eq:ignorable}
\pd_t g_{\mu\nu}= 0 = \pd_\phi g_{\mu\nu} \,.
\end{eqnarray}
Moreover, there exist privileged 2-dimensional surfaces, called ``surfaces of transitivity,'' to which the Killing vectors are everywhere tangent except on the rotation axis where $\chi^\mu$ vanishes. In adapted coordinates, the surfaces of transitivity can be labeled by the values of $(r, \theta)$.
%
\begin{figure}[tbp]
    \centering
    \includegraphics[width=3.375in]{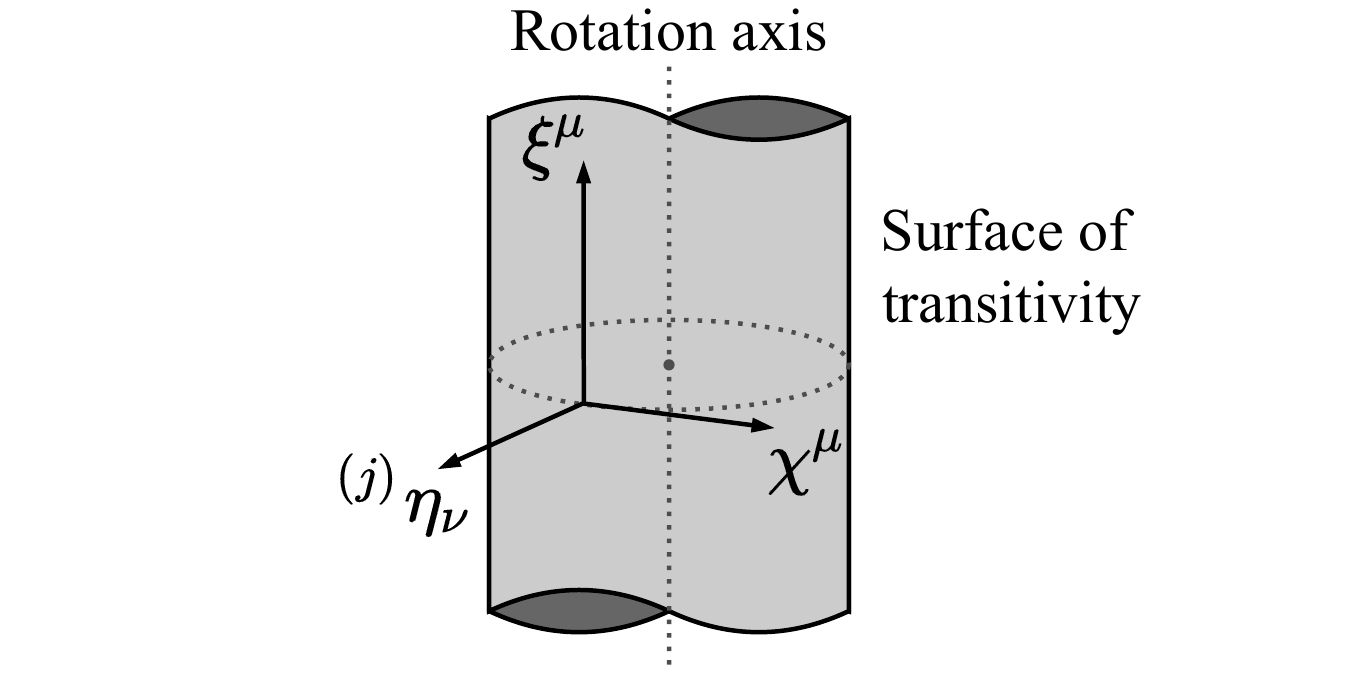}
    \caption{Geometry of a stationary and axisymmetric spacetime. The Killing vector $\xi^\mu$ is associated with time translation and $\chi^\mu$ is associated with rotations about the symmetry axis. Note that $\xi^\mu$ and $\chi^\mu$ are not necessarily orthogonal. The surface of transitivity is generated by $\xi^\mu$ and $\chi^\mu$ and is degenerate on the rotation axis where $\chi^\mu$ vanishes. The independent vectors ${}^{(j)}\eta_\nu~(j=3,4)$ are chosen to be orthogonal to the surface of transitivity. Here we only show one of the orthogonal vectors.}
    \label{fig:geom}
\end{figure}

A circular spacetime is a subclass of stationary and axisymmetric spacetimes for which, in addition to Eq.~\eqref{eq:ignorable}, there exists a family of 2-surfaces known as meridional surfaces that are everywhere orthogonal to the surfaces of transitivity. 
In this case, one can further choose the coordinates $r$ and $\theta$ such that
\begin{eqnarray}\label{eq:goff}
g_{tr}=g_{t\theta}=g_{\phi r}= g_{\phi \theta}=0 \,.
\end{eqnarray}
Without loss of generality, the metric can then take the following ansatz
\begin{align}\label{eq:metric}
g_{\mu\nu} dx^\mu dx^\nu &= - N^2 dt^2 + A^2\left( dr^2+ r^2 d\theta^2\right)
\nonumber \\
&\quad+ B^2r^2 \sin^2\theta \left(d\phi - \omega dt \right)^2 \,,
\end{align}
in ``quasi-isotropic coordinates,'' where $N$, $A$, $B$, and $\omega$ are functions of $r$ and $\theta$.

Papapetrou~\cite{Papapetrou:1966zz} (see also \cite{Wald:1984rg}) showed that a spacetime is circular if (i) $\xi_{[\mu} \chi_\nu \D_\rho \xi_{\sigma]}$ and $\xi_{[\mu} \chi_\nu \D_\rho \chi_{\sigma]}$ each vanish at least at one point of the spacetime, and (ii)
\begin{eqnarray}\label{eq:c-condition}
\xi^\mu \tensor{R}{_\mu^{[\nu}}\xi^\rho \chi^{\sigma]} =0, \quad \chi^\mu \tensor{R}{_\mu^{[\nu}}\xi^\rho \chi^{\sigma]} =0\,,
\end{eqnarray}
everywhere in spacetime, where the square brackets denote full antisymmetrization.
For asymptotically flat spacetimes, which we focus on, Carter further showed a rotation axis at which $\chi^\mu = 0$ exists~\cite{Carter:1970ea}; thus, the first condition is satisfied.

The Eq.~\eqref{eq:c-condition} condition is trivially satisfied if the Ricci tensor vanishes, which means that any stationary, axisymmetric and asymptotically flat vacuum solution in GR is circular, as well as those Ricci-flat solutions in modified gravity theories (e.g.~\cite{Yunes:2011we,Motohashi:2018wdq}). 
%
The Eq.~\eqref{eq:c-condition} condition can also be recast as a requirement of the Ricci tensor being
``invertible''~\cite{Carter:1969zz,Carter:1973rla,carter2009republication}.
%
%
Let $\svctv{\zeta}{i}{\mu}\,(i=1,2)$ be the two Killing vectors $\xi^\mu$ and $\chi^\mu$, and $\svcov{\eta}{j}{\nu}\,(j=3,4)$ be two independent vectors everywhere orthogonal to $\xi^\mu$ and $\chi^\mu$. 
A tensor is said to be invertible in the isometry group if the scalars obtained by contracting any combinations of the tensor's indices with any choice of $\svctv{\zeta}{i}{\mu}$ and $\svcov{\eta}{j}{\nu}$ vanish whenever the number of contracted $\svctv{\zeta}{i}{\mu}$ is odd.
In particular, the Ricci tensor is invertible if
\begin{eqnarray}\label{eq:c-condition_inv}
\tensor{R}{_\mu^\nu}\, \svctv{\zeta}{i}{\mu}\, \svcov{\eta}{j}{\nu} = 0\,,
\quad i=1,2\,,\quad j=3,4\,.
\end{eqnarray}
Heuristically, the Eq.~\eqref{eq:c-condition_inv} condition is equivalent to the Eq.~\eqref{eq:c-condition} condition because the latter is equivalent to requiring that $ \svctv{\zeta}{i}{\mu}\, \, \tensor{R}{_\mu^\nu}\,$ be tangent to the surface of transitivity (i.e., proportional to any linear combination of $\svctv{\zeta}{i}{\nu}$), and thus, that any part tangent to the meridional surface [i.e., proportional to any linear combination of $\svctv{\eta}{j}{\nu}$] vanish.
In the following, we shall omit the presub and superscript of $\zeta^\mu$ and $\eta_\nu$ and bear in mind that each of them represents a vector in a two vector set.

\emph{Circularity in generic gravitational theories.---}%
Let us consider a generic gravitational theory, potentially containing fields of arbitrary spin and coupling to gravity with the Lagrangian
\begin{eqnarray}\label{eq:Lgen}
{\cal L} = \frac{1}{2}R + {\cal L}_{\varphi}  + {\cal L}_{\psi} + {\cal L}_{\rm int}\left(\nabla_\rho,\, R_{\rho\sigma\alpha\beta},\, \varphi,\, \psi \right),
\end{eqnarray}
where the fields are classified as heavy fields $\psi$ or light fields $\varphi$ depending on whether their masses are above or below the curvature scale of the solution that we are interested in. Here, ${\cal L}_{\varphi}$ and ${\cal L}_{\psi}$ are the Lagrangians of $\varphi$ and $\psi$, while ${\cal L}_{\rm int}$ captures all the interactions between the fields as well as any nonminimal couplings to gravity. In particular, we assume that nonstandard kinetic terms of $\varphi$, if there is any in ${\cal L}_{\varphi}$, can be treated perturbatively.
At the energy scale of the solution, we can integrate out the heavy fields with masses larger than the curvature of the solution we are interested in,
\begin{eqnarray}
e^{i \int d^4x \sqrt{-g}\, {\cal L}_{\rm EFT}} = \int {\cal D} \psi \, e^{i \int d^4x \sqrt{-g}\, {\cal L}},
\end{eqnarray}
and obtain a low-energy EFT with the following Lagrangian (see Refs.~\cite{Avramidi:1990je,Avramidi:1986mj,deRham:2019ctd} for explicit examples):
\begin{eqnarray}\label{eq:Ltot}
{\cal L}_{\rm EFT} = \frac{1}{2}R + {\cal L}_0 \left(\varphi, \, g_{\mu\nu} \right) + \alpha\, {\cal L}_{\rm M} \left(\nabla_\rho,\, R_{\rho\sigma\alpha\beta},\, \varphi\right)\,,\
\end{eqnarray}
where operators are sorted according to their dimensions. In particular, ${\cal L}_0$ are operators constructed by the light fields $\varphi$ and their covariant derivatives with dimensions equal to or less than 4, while ${\cal L}_{\rm M}$ are higher dimension operators constructed by the Riemann tensor, the light fields, and derivatives of both, which therefore are suppressed by a small parameter $\alpha$. The heavy fields $\psi$ in Eq.~\eqref{eq:Lgen} have been integrated out and manifest themselves solely as higher curvatures and derivative corrections in ${\cal L}_{\rm M}$. The curvature scale of isolated astrophysical black holes is expected to be smaller than $10^{-11}$~eV. Hence, in realistic situations, the heavy fields $\psi$ include all massive particles of the standard model and beyond.
 
For now, we focus on the case in which the light fields, if any, are all scalar fields. We emphasize that $\varphi$ denotes all light fields in the EFT, which we shall not distinguish with additional labels, and thus, inner products require an internal space metric, which we will also suppress~\cite{Damour:1992we}. This EFT reduces identically and smoothly to GR as $\alpha \to 0$, i.e.,~in this limit Eq.~\eqref{eq:Ltot} reduces to the Einstein-Hilbert action minimally coupled to light scalar fields.

The modified Einstein equations in this theory are
\begin{eqnarray}\label{eq:EoM}
R_{\mu\nu}-\frac{1}{2} R\, g_{\mu\nu} = T_{\mu\nu} + \alpha M_{\mu\nu},
\end{eqnarray} 
where $T_{\mu\nu} \equiv - 2\delta \left(\sqrt{-g}{\cal L}_0\right)/\delta g^{\mu\nu}$ and $M_{\mu\nu} \equiv - 2\delta \left(\sqrt{-g}{\cal L}_{\rm M}\right)/\delta g^{\mu\nu}$ are the energy-momentum tensors associated with ${\cal L}_0$ and ${\cal L}_{\rm M}$, respectively. 
In particular, terms in $T_{\mu\nu}$ are either proportional to $g_{\mu\nu}$ or proportional to $\pd_\mu \varphi \pd_\nu \varphi$ due to the dimension of the operators in ${\cal L}_0$. Given the smallness of $\alpha$, a solution to Eq.~\eqref{eq:EoM} $\{g_{\mu\nu}, \varphi\}$ can be obtained order by order in $\alpha$. For concreteness, we use $\{g_{\mu\nu}^{(n)}, \varphi^{(n)}\}$ to denote the solution to the $n$th order in $\alpha$, i.e.,~$g_{\mu\nu} = g_{\mu\nu}^{(n)} +{\cal O}(\alpha^{n+1})$,
with ${\cal O}(\alpha^{n+1})$ accounting for all higher-order corrections.
We also label a quantity with subscript or superscript $(n)$, e.g.,~$T^{(n)}_{\mu\nu}$ or $g_{(n)}^{\mu\nu}$, if it is calculated up to the $n$th order in $\alpha$. The full solution is given by $\{g_{\mu\nu}^{(n)}, \varphi^{(n)}\}$ with $n$ approaching infinity for a sufficiently small $\alpha$.

In the following, we prove that the spacetime of a stationary, axisymmetric, and asymptotically flat solution is necessarily circular if the solution can be obtained order by order in $\alpha$.
Here we only consider solutions with stationary and axisymmetric scalar fields, which are not necessarily required for the spacetime also be stationary and axisymmetric as we 
discuss later. The proof can be done in three steps.

First, we prove that the solution is circular at zeroth order in $\alpha$, i.e.,~$g_{\mu\nu}^{(0)}$ is circular. At zeroth order, we get back to GR, and $g_{\mu\nu}^{(0)}$ is circular if $T_{\mu\nu}^{(0)}$ is invertible \cite{Carter:1969zz}. In order to show the invertibility, let us consider $\tensor{T}{_\mu^\nu} \tensor{\zeta}{^\mu}\, \tensor{\eta}{_\nu}$, where $\zeta^\mu$ are the two Killing vectors and $\eta_\nu$ are the two independent vectors orthogonal to $\zeta^\mu$. Since the scalar fields are stationary and axisymmetric, the vanishing of their Lie derivatives along $\zeta^\mu$ implies
\begin{eqnarray}\label{eq:lie_dphi}
{\pounds}_{\zeta} \varphi \equiv \zeta^{\mu} \pd_\mu \varphi = 0.
\end{eqnarray}
Thus, terms in $T_{\mu\nu}$ that are proportional to $\pd_\mu \varphi \pd_\nu \varphi$ vanish after contracting with $\zeta^\mu$. The rest of $T_{\mu\nu}$ is proportional to $g_{\mu\nu}$ and do not contribute to $\tensor{T}{_\mu^\nu} \tensor{\zeta}{^\mu}\, \tensor{\eta}{_\nu}$ given the orthogonality between $\zeta^\mu$ and $\tensor{\eta}{_\nu}$. Therefore,
\begin{eqnarray}\label{eq:T}
\tensor{T}{_\mu^\nu} \tensor{\zeta}{^\mu}\, \tensor{\eta}{_\nu}=0,
\end{eqnarray}
i.e.,~$\tensor{T}{_\mu_\nu}$ is invertible. At zeroth order in $\alpha$, Eq.~\eqref{eq:T} means $T_{\mu\nu}^{(0)}$ is invertible, and hence $g_{\mu\nu}^{(0)}$ is circular.

Next, we prove that if $g_{\mu\nu}^{(0)}$ is circular, then $g_{\mu\nu}^{(1)}$ is also circular. This can be proved if the Ricci tensor associated with $g_{\mu\nu}^{(1)}$ is invertible, or equivalently
\footnote{To be precise, Eq.~\eqref{eq:R1} does not immediately indicate $g_{\mu\nu}^{(1)}$ is circular. Instead, it indicates the metric corresponding to $\left(\tensor{R}{_\mu^\nu}\right)^{(1)}$, which could be different from $g_{\mu\nu}^{(1)}$ at ${\cal O}(\alpha^2)$, is circular. Nevertheless, it means there exist coordinates, in which certain components of the metric corresponding to $\left(\tensor{R}{_\mu^\nu}\right)^{(1)}$ vanish (cf. Eq.~\eqref{eq:goff}). Those components remain zero after truncating all ${\cal O}(\alpha^2)$ terms, in which case the metric corresponding to $\left(\tensor{R}{_\mu^\nu}\right)^{(1)}$ reduces to $g_{\mu\nu}^{(1)}$. Thus, the Ricci tensor associated with $g_{\mu\nu}^{(1)}$ is invertible and $g_{\mu\nu}^{(1)}$ is circular.},
\begin{eqnarray}\label{eq:R1}
\tensor{R}{_\mu^\nu} \zeta^\mu \,\tensor{\eta}{_\nu}= 0 + {\cal O}(\alpha^2).
\end{eqnarray}
Contracting Eq.~\eqref{eq:EoM} with $\zeta^\mu$ and $\eta^{\nu}$, we find
\begin{eqnarray}\label{eq:M}
\tensor{R}{_\mu^\nu} \zeta^\mu \,\tensor{\eta}{_\nu} = \alpha \tensor{M}{_\mu^\nu} \zeta^\mu\, \tensor{\eta}{_\nu},
\end{eqnarray}
where the second term on the left hand side of Eq.~\eqref{eq:EoM} does not contribute due to the orthogonality between $\zeta^\mu$ and $\eta_\nu$, and the first term on the right hand side of Eq.~\eqref{eq:EoM} also vanishes because of the invertibility of $T_{\mu\nu}$. 

\newcounter{aux_footnote}
On the other hand, since $g_{\mu\nu}^{(0)}$ is circular, the Riemann tensor associated with $g_{\mu\nu}^{(0)}$ is invertible (see the Supplemental Material \footnote{See Supplemental Material, which includes Refs. \cite{Carter:1969zz}, for a detailed discussion on tensor invertibility and its preservation through common operations.}\setcounter{aux_footnote}{\value{footnote}} for a proof).
Moreover, we show in the Supplemental Material \footnotemark[\value{aux_footnote}] that any tensor constructed from stationary, axisymmetric, and invertible tensors and their covariant derivatives associated with $g_{\mu\nu}^{(0)}$ is also itself invertible. Together with the assumption that the scalar fields $\varphi$ are stationary and axisymmetric, we conclude that $M_{\mu\nu}$ evaluated at zeroth order in $\alpha$ is invertible, and hence
\begin{eqnarray}\label{eq:M0}
\tensor{M}{_\mu^\nu} \zeta^\mu \eta_{\nu} = 0 + {\cal O}(\alpha).
\end{eqnarray}
Substituting Eq.~\eqref{eq:M0} into Eq.~\eqref{eq:M}, we find $\tensor{R}{_\mu^\nu} \zeta^\mu \,\tensor{\eta}{_\nu} $ vanishes to first order in $\alpha$, and therefore, $g_{\mu\nu}^{(1)}$ is circular.

Finally, we assume the solution is circular to the $n$th order in $\alpha$ and show that the solution to the $(n+1)$th order is circular. The proof is similar to that in the second step. In this case, $\tensor{M}{_\mu^\nu} \zeta^\mu \eta_{\nu}$ can be evaluated to the $n$th order in $\alpha$ with $g_{\mu\nu}^{(n)}$ and $\varphi^{(n)}$. The circularity of the $n$th order solution implies that
\begin{eqnarray}
\tensor{M}{_\mu^\nu} \zeta^\mu \eta_{\nu} = 0 + {\cal O}(\alpha^{n+1})\,.
\end{eqnarray}
Substituting this into Eq.~\eqref{eq:M}, we find $\tensor{R}{_\mu^\nu} \zeta^\mu \,\tensor{\eta}{_\nu}$ vanishes to the $(n+1)$th order in $\alpha$, and hence the solution to the $(n+1)$th order is circular. By induction, we conclude that the solution is circular to all orders in $\alpha$.

\emph{Extension to generalized light fields.---}%
Our proof can be further extended 
%
%
to theories with more general light fields, as long as the light fields and their leading-order stress-energy tensor $T_{\mu\nu}$ are invertible.

For light scalar fields, ${\cal L}_0$ may also include higher dimension operators that are arbitrary functions of $\varphi$, $\nabla_\mu \varphi \nabla^\mu \varphi$ and $\Box \varphi$. In this case, the resulting leading order stress-energy tensor is
\begin{align}\label{eq:Text}
T_{\mu\nu}=
&-\frac{\partial {\cal L}_0}{\partial(\nabla_\lambda\varphi\nabla^\lambda\varphi)}\nabla_\mu\varphi\nabla_\nu\varphi 
+\nabla_{(\mu}\left(\frac{\partial{\cal L}_0}{\partial(\Box\varphi)}\right)\nabla_{\nu)}\varphi \notag\\
&+\frac12 g_{\mu\nu}\left\{{\cal L}_0-\nabla_\lambda\left[\frac{\partial{\cal L}_0}{\partial(\Box\varphi)}\nabla^\lambda\varphi\right]\right\},
\end{align}
where terms proportional to $\nabla_\mu\varphi\nabla_\nu\varphi$ or $g_{\mu\nu}$ are invertible for the same reasons discussed above. Moreover, $\partial{\cal L}_0/\partial(\Box\varphi)$ inherits the symmetries of $\varphi$, so its Lie derivatives along $\zeta^\mu$ vanish.
Thus, the second term on the right hand side of Eq.~\eqref{eq:Text}, and hence the aggregated $T_{\mu\nu}$, is invertible, indicating stationary, axisymmetric and asymptotically flat vacuum solutions in such more general scalar-tensor theories are also circular.

In addition to the light scalars as described above, our proof can also be generalized to gravitational theories that include light vectors, as long as the nonstandard kinetic terms and nonminimal couplings to gravity may be treated perturbatively.
In particular, our proof can be extended to include light vectors with the following restrictions: (i) ${\cal L}_0$ is totally constructed from the vector fields $V_\mu$ and their exterior derivatives $F_{\mu\nu}=2 \nabla_{[\mu} V_{\nu]}$, and (ii) the vector fields $V_\mu$, apart from being stationary and axisymmetric, are invertible. 
In this case, since the exterior derivative does not depend on the metric, the energy-momentum tensor associated with ${\cal L}_0$ is completely constructed from $V_\mu$ and $F_{\mu\nu}$. We show in the Supplemental Material \footnotemark[\value{aux_footnote}] that $F_{\mu\nu}$ inherits the vector field's invertibility without assuming circularity. Therefore, $T_{\mu\nu}$ is invertible, and any such vector-tensor theory admits a circular ansatz for stationary and axisymmetric vacuum solutions. In addition, any generalized Proca theory, as introduced in~\cite{Tasinato:2014eka,Heisenberg:2014rta,Allys:2015sht,Allys:2016jaq,deRham:2020yet}, would inherit the same properties so long as the higher-order Lagrangians introduced in these theories are treated perturbatively.


\emph{Discussions.---}%
Our main result is a proof that the spacetime of stationary, axisymmetric, and asymptotically flat rotating black holes in a broad class of gravitational EFTs is circular. We emphasize that in addition to the light fields we have considered, the theory may also include any heavy field of arbitrary spin and coupling to gravity, as long as the mass of these fields is larger than the curvature scale of the black holes.
Our result is of immediate importance to the ongoing effort of testing the strong-field regime of gravity through gravitational waves~\cite{Gair:2012nm,Yunes:2013dva,Berti:2018cxi,Berti:2018vdi,Barausse:2020rsu} and electromagnetic observations~\cite{Krawczynski:2018fnw,Psaltis:2018xkc} in which black holes play a central role~\cite{Yagi:2016jml}.
These tests generally require knowledge of a rotating black hole solution (within a certain EFT) from which observable consequences are then deduced and then ultimately compared to observations.
Here, we proved that circularity is shared among a broad class of solutions, justifying the use of this ansatz when searching for analytical and numerical solutions. 

What are the implications of our result for some specific theories? Consider, for instance, dynamical Chern-Simons gravity in which a scalar field couples to the Pontryagin density~\cite{Jackiw:2003pm,Alexander:2009tp}. This theory must be treated as an EFT to admit a well-posed initial value problem~\cite{Delsate:2014hba}, and, in fact, this theory
is captured within the assumption of our proof.
Rotating black hole solutions in this theory are known numerically~\cite{Stein:2014xba,Delsate:2018ome} and analytically~\cite{Campbell:1990ai,Yunes:2009hc,Konno:2009kg,Ayzenberg:2014aka,Konno:2014qua} in 
a perturbative expansion in the coupling strength $\alpha$ and black hole spin $a \ll 1$ to ${\cal O}(\alpha^2 a^5)$~\cite{Yunes:2009hc,Yagi:2012ya,Maselli:2017kic} and in the extremal limit~\cite{McNees:2015srl}.
Our results indicate that the spacetime of rotating black holes in this theory is circular, justifying the use of the ansatz [Eq.~\eqref{eq:metric}] in numerical calculations.
The same applies to scalar Gauss-Bonnet gravity with shift-symmetric and dilatonic couplings where rotating black hole spacetimes are known analytically~\cite{Yunes:2011we,Pani:2011gy,Ayzenberg:2014aka,Maselli:2015tta,Maselli:2015yva} and numerically~\cite{Kleihaus:2011tg,Pani:2009wy,Delgado:2020rev,Sullivan:2020zpf}, including the final state of black holes that results at late times after highly dynamical black hole formation~\cite{Benkel:2016rlz,Ripley:2019aqj,Ripley:2019irj}.
%
In fact, it applies to 
any EFT extension of GR, including any low-energy EFT of gravity that includes massive fields of arbitrary spins.

Our results agree with those of~\cite{Nakashi:2020phm}, which suggested the nonexistence of rotating noncircular black holes in dynamical Chern-Simons gravity and shift-symmetric scalar-Gauss-Bonnet gravity, by working perturbatively to ${\cal O}(\alpha^2 a^2)$. Our conclusions extend to all orders in these two parameters.
Moreover, our results also apply to nonvacuum solutions in generic gravitational theories of the type discussed in this Letter, as long as the matter fields in the GR solution are stationary, axisymmetric, and possess an invertible stress-energy tensor. That is, our conclusion holds for a gravitational theory minimally coupled to an ordinary matter source, such as a perfect fluid that satisfies the same symmetries as the metric (i.e.~stationarity and axisymmetry).

We stress that our results only apply to solutions that reduce to a GR solution in the limit when the perturbative parameter $\alpha$ goes to zero. 
In general, this does not have to be the case, as other branches of solutions may be entropically favored, as is the case with theories that exhibit spontaneous black hole scalarization~\cite{Doneva:2017bvd,Silva:2017uqg}.

The requirement that the fields are stationary and axisymmetric (and invertible if of spin-1) is a sufficient but not a necessary condition for the solution to be circular, and it is not necessarily required by the isometries of the spacetime. There are cases in which the extra fields can be time- and angle-dependent, yet this dependence does not manifest itself in the gravitational equations. For example, there are hairy, nonlinear black hole solutions and solitonic solutions that arise in GR coupled to complex and massive (scalar) fields~\cite{Herdeiro:2014goa,Herdeiro:2015waa,Herdeiro:2018daq,Herdeiro:2019oqp}, where the metric is circular while the fields have time- or angle-dependent phases. Other examples are the stealth black holes of~\cite{Charmousis:2019vnf}, in which the scalar field has a linear time dependence, although such black hole solutions usually suffer from a strong coupling problem \cite{Babichev:2018uiw,deRham:2019gha,Ogawa:2015pea}. 

Our results do imply that if a theory satisfies the conditions of our theorem, then \textit{all} black hole solutions must have a circular spacetime, but the converse is not necessarily true. 
Imagine one were to find a black hole solution in a modified theory (in which our theorem does not apply) by requiring \textit{a priori} that the spacetime be circular. The existence of this solution does not then mean that other noncircular solutions do not exist.
For example, black hole solutions have been found in Einstein-Yang-Mills theories with~\cite{Kleihaus:2003sh} and without~\cite{Kleihaus:2000kg} a dilaton field, and in Einstein-\ae ther theory in the slow-rotation approximation~\cite{Barausse:2013nwa,Barausse:2015frm} assuming \textit{a priori} that the spacetime must be circular. In both cases, however, our theorem does not apply because either the Yang-Mills vector gauge field is noninvertible after gauge fixing or the \ae ther field is noninvertible because of its timelike constraint. Thus, the existence of those solutions does not imply that other noncircular black hole solutions do not exist in these theories, which could be explored further. 

\emph{Acknowledgements.---}%
We thank Lydia~Bieri, Daniela~Doneva, David~Garfinkle, Leonardo~Gualtieri, Carlos~A.~R.~Herdeiro, and Jutta~Kunz for discussions.
Y.X., H.O.S. and N.Y.~acknowledge financial support through NSF Grant
Nos.~PHY-1759615 and PHY-1949838 and NASA ATP Grant Nos.~17-ATP17-0225, NNX16AB98G, and 80NSSC17M0041.
C.d.R. and J.Z.~acknowledge financial support provided by the European Union's Horizon 2020 Research Council grant 724659 MassiveCosmo ERC-2016-COG.
H.W.~acknowledges financial support provided by the NSF Grant No. OAC-2004879 and the Royal Society Research Grant No.~RGF\textbackslash R1\textbackslash 180073.
C.d.R. also acknowledges financial support provided by STFC grants ST/P000762/1 and ST/T000791/1, by the Royal Society through a Wolfson Research Merit Award, by the Simons Foundation award ID 555326 under the Simons Foundation's Origins of the Universe initiative, ``Cosmology Beyond Einstein's Theory,'' and by the Simons Investigator award 690508. \vspace{-0.2cm}

\bibliography{references}

\begin{thebibliography}{86}%
\makeatletter
\providecommand \@ifxundefined [1]{%
 \@ifx{#1\undefined}
}%
\providecommand \@ifnum [1]{%
 \ifnum #1\expandafter \@firstoftwo
 \else \expandafter \@secondoftwo
 \fi
}%
\providecommand \@ifx [1]{%
 \ifx #1\expandafter \@firstoftwo
 \else \expandafter \@secondoftwo
 \fi
}%
\providecommand \natexlab [1]{#1}%
\providecommand \enquote  [1]{``#1''}%
\providecommand \bibnamefont  [1]{#1}%
\providecommand \bibfnamefont [1]{#1}%
\providecommand \citenamefont [1]{#1}%
\providecommand \href@noop [0]{\@secondoftwo}%
\providecommand \href [0]{\begingroup \@sanitize@url \@href}%
\providecommand \@href[1]{\@@startlink{#1}\@@href}%
\providecommand \@@href[1]{\endgroup#1\@@endlink}%
\providecommand \@sanitize@url [0]{\catcode `\\12\catcode `\$12\catcode
  `\&12\catcode `\#12\catcode `\^12\catcode `\_12\catcode `\%12\relax}%
\providecommand \@@startlink[1]{}%
\providecommand \@@endlink[0]{}%
\providecommand \url  [0]{\begingroup\@sanitize@url \@url }%
\providecommand \@url [1]{\endgroup\@href {#1}{\urlprefix }}%
\providecommand \urlprefix  [0]{URL }%
\providecommand \Eprint [0]{\href }%
\providecommand \doibase [0]{https://doi.org/}%
\providecommand \selectlanguage [0]{\@gobble}%
\providecommand \bibinfo  [0]{\@secondoftwo}%
\providecommand \bibfield  [0]{\@secondoftwo}%
\providecommand \translation [1]{[#1]}%
\providecommand \BibitemOpen [0]{}%
\providecommand \bibitemStop [0]{}%
\providecommand \bibitemNoStop [0]{.\EOS\space}%
\providecommand \EOS [0]{\spacefactor3000\relax}%
\providecommand \BibitemShut  [1]{\csname bibitem#1\endcsname}%
\let\auto@bib@innerbib\@empty
\bibitem [{\citenamefont {Kerr}(1963)}]{Kerr:1963ud}%
  \BibitemOpen
  \bibfield  {author} {\bibinfo {author} {\bibfnamefont {R.~P.}\ \bibnamefont
  {Kerr}},\ }\bibfield  {title} {\bibinfo {title} {{Gravitational field of a
  spinning mass as an example of algebraically special metrics}},\ }\href
  {https://doi.org/10.1103/PhysRevLett.11.237} {\bibfield  {journal} {\bibinfo
  {journal} {Phys. Rev. Lett.}\ }\textbf {\bibinfo {volume} {11}},\ \bibinfo
  {pages} {237} (\bibinfo {year} {1963})}\BibitemShut {NoStop}%
\bibitem [{\citenamefont {{Kerr}}\ and\ \citenamefont
  {{Schild}}(2009)}]{Kerr:2009GReGr}%
  \BibitemOpen
  \bibfield  {author} {\bibinfo {author} {\bibfnamefont {R.~P.}\ \bibnamefont
  {{Kerr}}}\ and\ \bibinfo {author} {\bibfnamefont {A.}~\bibnamefont
  {{Schild}}},\ }\bibfield  {title} {\bibinfo {title} {{Republication of: A new
  class of vacuum solutions of the Einstein field equations}},\ }\href
  {https://doi.org/10.1007/s10714-009-0857-z} {\bibfield  {journal} {\bibinfo
  {journal} {General Relativity and Gravitation}\ }\textbf {\bibinfo {volume}
  {41}},\ \bibinfo {pages} {2485} (\bibinfo {year} {2009})}\BibitemShut
  {NoStop}%
\bibitem [{\citenamefont {Papapetrou}(1966)}]{Papapetrou:1966zz}%
  \BibitemOpen
  \bibfield  {author} {\bibinfo {author} {\bibfnamefont {A.}~\bibnamefont
  {Papapetrou}},\ }\bibfield  {title} {\bibinfo {title} {{Champs
  gravitationnels stationnaires a symetrie axiale}},\ }\href@noop {} {\bibfield
   {journal} {\bibinfo  {journal} {Ann. Inst. H. Poincare Phys. Theor.}\
  }\textbf {\bibinfo {volume} {4}},\ \bibinfo {pages} {83} (\bibinfo {year}
  {1966})}\BibitemShut {NoStop}%
\bibitem [{\citenamefont {Berti}\ \emph {et~al.}(2015)\citenamefont {Berti}
  \emph {et~al.}}]{Berti:2015itd}%
  \BibitemOpen
  \bibfield  {author} {\bibinfo {author} {\bibfnamefont {E.}~\bibnamefont
  {Berti}} \emph {et~al.},\ }\bibfield  {title} {\bibinfo {title} {{Testing
  General Relativity with Present and Future Astrophysical Observations}},\
  }\href {https://doi.org/10.1088/0264-9381/32/24/243001} {\bibfield  {journal}
  {\bibinfo  {journal} {Class. Quant. Grav.}\ }\textbf {\bibinfo {volume}
  {32}},\ \bibinfo {pages} {243001} (\bibinfo {year} {2015})},\ \Eprint
  {https://arxiv.org/abs/1501.07274} {arXiv:1501.07274 [gr-qc]} \BibitemShut
  {NoStop}%
\bibitem [{\citenamefont {Yagi}\ and\ \citenamefont
  {Stein}(2016)}]{Yagi:2016jml}%
  \BibitemOpen
  \bibfield  {author} {\bibinfo {author} {\bibfnamefont {K.}~\bibnamefont
  {Yagi}}\ and\ \bibinfo {author} {\bibfnamefont {L.~C.}\ \bibnamefont
  {Stein}},\ }\bibfield  {title} {\bibinfo {title} {{Black Hole Based Tests of
  General Relativity}},\ }\href {https://doi.org/10.1088/0264-9381/33/5/054001}
  {\bibfield  {journal} {\bibinfo  {journal} {Class. Quant. Grav.}\ }\textbf
  {\bibinfo {volume} {33}},\ \bibinfo {pages} {054001} (\bibinfo {year}
  {2016})},\ \Eprint {https://arxiv.org/abs/1602.02413} {arXiv:1602.02413
  [gr-qc]} \BibitemShut {NoStop}%
\bibitem [{\citenamefont {Van~Aelst}\ \emph {et~al.}(2020)\citenamefont
  {Van~Aelst}, \citenamefont {Gourgoulhon}, \citenamefont {Grandcl\'ement},\
  and\ \citenamefont {Charmousis}}]{VanAelst:2019kku}%
  \BibitemOpen
  \bibfield  {author} {\bibinfo {author} {\bibfnamefont {K.}~\bibnamefont
  {Van~Aelst}}, \bibinfo {author} {\bibfnamefont {E.}~\bibnamefont
  {Gourgoulhon}}, \bibinfo {author} {\bibfnamefont {P.}~\bibnamefont
  {Grandcl\'ement}},\ and\ \bibinfo {author} {\bibfnamefont {C.}~\bibnamefont
  {Charmousis}},\ }\bibfield  {title} {\bibinfo {title} {{Hairy rotating black
  holes in cubic Galileon theory}},\ }\href
  {https://doi.org/10.1088/1361-6382/ab6391} {\bibfield  {journal} {\bibinfo
  {journal} {Class. Quant. Grav.}\ }\textbf {\bibinfo {volume} {37}},\ \bibinfo
  {pages} {035007} (\bibinfo {year} {2020})},\ \Eprint
  {https://arxiv.org/abs/1910.08451} {arXiv:1910.08451 [gr-qc]} \BibitemShut
  {NoStop}%
\bibitem [{\citenamefont {Abbott}\ \emph {et~al.}(2019)\citenamefont {Abbott}
  \emph {et~al.}}]{LIGOScientific:2019fpa}%
  \BibitemOpen
  \bibfield  {author} {\bibinfo {author} {\bibfnamefont {B.~P.}\ \bibnamefont
  {Abbott}} \emph {et~al.} (\bibinfo {collaboration} {LIGO Scientific,
  Virgo}),\ }\bibfield  {title} {\bibinfo {title} {{Tests of General Relativity
  with the Binary Black Hole Signals from the LIGO-Virgo Catalog GWTC-1}},\
  }\href {https://doi.org/10.1103/PhysRevD.100.104036} {\bibfield  {journal}
  {\bibinfo  {journal} {Phys. Rev. D}\ }\textbf {\bibinfo {volume} {100}},\
  \bibinfo {pages} {104036} (\bibinfo {year} {2019})},\ \Eprint
  {https://arxiv.org/abs/1903.04467} {arXiv:1903.04467 [gr-qc]} \BibitemShut
  {NoStop}%
\bibitem [{\citenamefont {Abuter}\ \emph {et~al.}(2020)\citenamefont {Abuter}
  \emph {et~al.}}]{Abuter:2020dou}%
  \BibitemOpen
  \bibfield  {author} {\bibinfo {author} {\bibfnamefont {R.}~\bibnamefont
  {Abuter}} \emph {et~al.} (\bibinfo {collaboration} {GRAVITY}),\ }\bibfield
  {title} {\bibinfo {title} {{Detection of the Schwarzschild precession in the
  orbit of the star S2 near the Galactic centre massive black hole}},\ }\href
  {https://doi.org/10.1051/0004-6361/202037813} {\bibfield  {journal} {\bibinfo
   {journal} {Astron. Astrophys.}\ }\textbf {\bibinfo {volume} {636}},\
  \bibinfo {pages} {L5} (\bibinfo {year} {2020})},\ \Eprint
  {https://arxiv.org/abs/2004.07187} {arXiv:2004.07187 [astro-ph.GA]}
  \BibitemShut {NoStop}%
\bibitem [{\citenamefont {Sotiriou}\ and\ \citenamefont
  {Faraoni}(2010)}]{Sotiriou:2008rp}%
  \BibitemOpen
  \bibfield  {author} {\bibinfo {author} {\bibfnamefont {T.~P.}\ \bibnamefont
  {Sotiriou}}\ and\ \bibinfo {author} {\bibfnamefont {V.}~\bibnamefont
  {Faraoni}},\ }\bibfield  {title} {\bibinfo {title} {{$f(R)$ Theories Of
  Gravity}},\ }\href {https://doi.org/10.1103/RevModPhys.82.451} {\bibfield
  {journal} {\bibinfo  {journal} {Rev. Mod. Phys.}\ }\textbf {\bibinfo {volume}
  {82}},\ \bibinfo {pages} {451} (\bibinfo {year} {2010})},\ \Eprint
  {https://arxiv.org/abs/0805.1726} {arXiv:0805.1726 [gr-qc]} \BibitemShut
  {NoStop}%
\bibitem [{\citenamefont {Kobayashi}(2019)}]{Kobayashi:2019hrl}%
  \BibitemOpen
  \bibfield  {author} {\bibinfo {author} {\bibfnamefont {T.}~\bibnamefont
  {Kobayashi}},\ }\bibfield  {title} {\bibinfo {title} {{Horndeski theory and
  beyond: a review}},\ }\href {https://doi.org/10.1088/1361-6633/ab2429}
  {\bibfield  {journal} {\bibinfo  {journal} {Rept. Prog. Phys.}\ }\textbf
  {\bibinfo {volume} {82}},\ \bibinfo {pages} {086901} (\bibinfo {year}
  {2019})},\ \Eprint {https://arxiv.org/abs/1901.07183} {arXiv:1901.07183
  [gr-qc]} \BibitemShut {NoStop}%
\bibitem [{\citenamefont {Yagi}\ \emph {et~al.}(2016)\citenamefont {Yagi},
  \citenamefont {Stein},\ and\ \citenamefont {Yunes}}]{Yagi:2015oca}%
  \BibitemOpen
  \bibfield  {author} {\bibinfo {author} {\bibfnamefont {K.}~\bibnamefont
  {Yagi}}, \bibinfo {author} {\bibfnamefont {L.~C.}\ \bibnamefont {Stein}},\
  and\ \bibinfo {author} {\bibfnamefont {N.}~\bibnamefont {Yunes}},\ }\bibfield
   {title} {\bibinfo {title} {{Challenging the Presence of Scalar Charge and
  Dipolar Radiation in Binary Pulsars}},\ }\href
  {https://doi.org/10.1103/PhysRevD.93.024010} {\bibfield  {journal} {\bibinfo
  {journal} {Phys. Rev. D}\ }\textbf {\bibinfo {volume} {93}},\ \bibinfo
  {pages} {024010} (\bibinfo {year} {2016})},\ \Eprint
  {https://arxiv.org/abs/1510.02152} {arXiv:1510.02152 [gr-qc]} \BibitemShut
  {NoStop}%
\bibitem [{\citenamefont {Jackiw}\ and\ \citenamefont
  {Pi}(2003)}]{Jackiw:2003pm}%
  \BibitemOpen
  \bibfield  {author} {\bibinfo {author} {\bibfnamefont {R.}~\bibnamefont
  {Jackiw}}\ and\ \bibinfo {author} {\bibfnamefont {S.~Y.}\ \bibnamefont
  {Pi}},\ }\bibfield  {title} {\bibinfo {title} {{Chern-Simons modification of
  general relativity}},\ }\href {https://doi.org/10.1103/PhysRevD.68.104012}
  {\bibfield  {journal} {\bibinfo  {journal} {Phys. Rev. D}\ }\textbf {\bibinfo
  {volume} {68}},\ \bibinfo {pages} {104012} (\bibinfo {year} {2003})},\
  \Eprint {https://arxiv.org/abs/gr-qc/0308071} {arXiv:gr-qc/0308071}
  \BibitemShut {NoStop}%
\bibitem [{\citenamefont {Alexander}\ and\ \citenamefont
  {Yunes}(2009)}]{Alexander:2009tp}%
  \BibitemOpen
  \bibfield  {author} {\bibinfo {author} {\bibfnamefont {S.}~\bibnamefont
  {Alexander}}\ and\ \bibinfo {author} {\bibfnamefont {N.}~\bibnamefont
  {Yunes}},\ }\bibfield  {title} {\bibinfo {title} {{Chern-Simons Modified
  General Relativity}},\ }\href {https://doi.org/10.1016/j.physrep.2009.07.002}
  {\bibfield  {journal} {\bibinfo  {journal} {Phys. Rept.}\ }\textbf {\bibinfo
  {volume} {480}},\ \bibinfo {pages} {1} (\bibinfo {year} {2009})},\ \Eprint
  {https://arxiv.org/abs/0907.2562} {arXiv:0907.2562 [hep-th]} \BibitemShut
  {NoStop}%
\bibitem [{\citenamefont {Metsaev}\ and\ \citenamefont
  {Tseytlin}(1987)}]{Metsaev:1986yb}%
  \BibitemOpen
  \bibfield  {author} {\bibinfo {author} {\bibfnamefont {R.~R.}\ \bibnamefont
  {Metsaev}}\ and\ \bibinfo {author} {\bibfnamefont {A.~A.}\ \bibnamefont
  {Tseytlin}},\ }\bibfield  {title} {\bibinfo {title} {{Curvature Cubed Terms
  in String Theory Effective Actions}},\ }\href
  {https://doi.org/10.1016/0370-2693(87)91527-9} {\bibfield  {journal}
  {\bibinfo  {journal} {Phys. Lett. B}\ }\textbf {\bibinfo {volume} {185}},\
  \bibinfo {pages} {52} (\bibinfo {year} {1987})}\BibitemShut {NoStop}%
\bibitem [{\citenamefont {Kanti}\ \emph {et~al.}(1996)\citenamefont {Kanti},
  \citenamefont {Mavromatos}, \citenamefont {Rizos}, \citenamefont {Tamvakis},\
  and\ \citenamefont {Winstanley}}]{Kanti:1995vq}%
  \BibitemOpen
  \bibfield  {author} {\bibinfo {author} {\bibfnamefont {P.}~\bibnamefont
  {Kanti}}, \bibinfo {author} {\bibfnamefont {N.~E.}\ \bibnamefont
  {Mavromatos}}, \bibinfo {author} {\bibfnamefont {J.}~\bibnamefont {Rizos}},
  \bibinfo {author} {\bibfnamefont {K.}~\bibnamefont {Tamvakis}},\ and\
  \bibinfo {author} {\bibfnamefont {E.}~\bibnamefont {Winstanley}},\ }\bibfield
   {title} {\bibinfo {title} {{Dilatonic black holes in higher curvature string
  gravity}},\ }\href {https://doi.org/10.1103/PhysRevD.54.5049} {\bibfield
  {journal} {\bibinfo  {journal} {Phys. Rev. D}\ }\textbf {\bibinfo {volume}
  {54}},\ \bibinfo {pages} {5049} (\bibinfo {year} {1996})},\ \Eprint
  {https://arxiv.org/abs/hep-th/9511071} {arXiv:hep-th/9511071} \BibitemShut
  {NoStop}%
\bibitem [{\citenamefont {Endlich}\ \emph {et~al.}(2017)\citenamefont
  {Endlich}, \citenamefont {Gorbenko}, \citenamefont {Huang},\ and\
  \citenamefont {Senatore}}]{Endlich:2017tqa}%
  \BibitemOpen
  \bibfield  {author} {\bibinfo {author} {\bibfnamefont {S.}~\bibnamefont
  {Endlich}}, \bibinfo {author} {\bibfnamefont {V.}~\bibnamefont {Gorbenko}},
  \bibinfo {author} {\bibfnamefont {J.}~\bibnamefont {Huang}},\ and\ \bibinfo
  {author} {\bibfnamefont {L.}~\bibnamefont {Senatore}},\ }\bibfield  {title}
  {\bibinfo {title} {{An effective formalism for testing extensions to General
  Relativity with gravitational waves}},\ }\href
  {https://doi.org/10.1007/JHEP09(2017)122} {\bibfield  {journal} {\bibinfo
  {journal} {JHEP}\ }\textbf {\bibinfo {volume} {09}},\ \bibinfo {pages}
  {122}},\ \Eprint {https://arxiv.org/abs/1704.01590} {arXiv:1704.01590
  [gr-qc]} \BibitemShut {NoStop}%
\bibitem [{\citenamefont {Sennett}\ \emph {et~al.}(2020)\citenamefont
  {Sennett}, \citenamefont {Brito}, \citenamefont {Buonanno}, \citenamefont
  {Gorbenko},\ and\ \citenamefont {Senatore}}]{Sennett:2019bpc}%
  \BibitemOpen
  \bibfield  {author} {\bibinfo {author} {\bibfnamefont {N.}~\bibnamefont
  {Sennett}}, \bibinfo {author} {\bibfnamefont {R.}~\bibnamefont {Brito}},
  \bibinfo {author} {\bibfnamefont {A.}~\bibnamefont {Buonanno}}, \bibinfo
  {author} {\bibfnamefont {V.}~\bibnamefont {Gorbenko}},\ and\ \bibinfo
  {author} {\bibfnamefont {L.}~\bibnamefont {Senatore}},\ }\bibfield  {title}
  {\bibinfo {title} {{Gravitational-Wave Constraints on an Effective
  Field-Theory Extension of General Relativity}},\ }\href
  {https://doi.org/10.1103/PhysRevD.102.044056} {\bibfield  {journal} {\bibinfo
   {journal} {Phys. Rev. D}\ }\textbf {\bibinfo {volume} {102}},\ \bibinfo
  {pages} {044056} (\bibinfo {year} {2020})},\ \Eprint
  {https://arxiv.org/abs/1912.09917} {arXiv:1912.09917 [gr-qc]} \BibitemShut
  {NoStop}%
\bibitem [{\citenamefont {de~Rham}\ \emph {et~al.}(2020)\citenamefont
  {de~Rham}, \citenamefont {Francfort},\ and\ \citenamefont
  {Zhang}}]{deRham:2020ejn}%
  \BibitemOpen
  \bibfield  {author} {\bibinfo {author} {\bibfnamefont {C.}~\bibnamefont
  {de~Rham}}, \bibinfo {author} {\bibfnamefont {J.}~\bibnamefont {Francfort}},\
  and\ \bibinfo {author} {\bibfnamefont {J.}~\bibnamefont {Zhang}},\ }\bibfield
   {title} {\bibinfo {title} {{Black Hole Gravitational Waves in the Effective
  Field Theory of Gravity}},\ }\href
  {https://doi.org/10.1103/PhysRevD.102.024079} {\bibfield  {journal} {\bibinfo
   {journal} {Phys. Rev. D}\ }\textbf {\bibinfo {volume} {102}},\ \bibinfo
  {pages} {024079} (\bibinfo {year} {2020})},\ \Eprint
  {https://arxiv.org/abs/2005.13923} {arXiv:2005.13923 [hep-th]} \BibitemShut
  {NoStop}%
\bibitem [{\citenamefont {Campbell}\ \emph {et~al.}(1991)\citenamefont
  {Campbell}, \citenamefont {Duncan}, \citenamefont {Kaloper},\ and\
  \citenamefont {Olive}}]{Campbell:1990fu}%
  \BibitemOpen
  \bibfield  {author} {\bibinfo {author} {\bibfnamefont {B.~A.}\ \bibnamefont
  {Campbell}}, \bibinfo {author} {\bibfnamefont {M.~J.}\ \bibnamefont
  {Duncan}}, \bibinfo {author} {\bibfnamefont {N.}~\bibnamefont {Kaloper}},\
  and\ \bibinfo {author} {\bibfnamefont {K.~A.}\ \bibnamefont {Olive}},\
  }\bibfield  {title} {\bibinfo {title} {{Gravitational dynamics with Lorentz
  Chern-Simons terms}},\ }\href {https://doi.org/10.1016/S0550-3213(05)80045-8}
  {\bibfield  {journal} {\bibinfo  {journal} {Nucl. Phys. B}\ }\textbf
  {\bibinfo {volume} {351}},\ \bibinfo {pages} {778} (\bibinfo {year}
  {1991})}\BibitemShut {NoStop}%
\bibitem [{\citenamefont {Campbell}\ \emph {et~al.}(1990)\citenamefont
  {Campbell}, \citenamefont {Duncan}, \citenamefont {Kaloper},\ and\
  \citenamefont {Olive}}]{Campbell:1990ai}%
  \BibitemOpen
  \bibfield  {author} {\bibinfo {author} {\bibfnamefont {B.~A.}\ \bibnamefont
  {Campbell}}, \bibinfo {author} {\bibfnamefont {M.~J.}\ \bibnamefont
  {Duncan}}, \bibinfo {author} {\bibfnamefont {N.}~\bibnamefont {Kaloper}},\
  and\ \bibinfo {author} {\bibfnamefont {K.~A.}\ \bibnamefont {Olive}},\
  }\bibfield  {title} {\bibinfo {title} {{Axion hair for Kerr black holes}},\
  }\href {https://doi.org/10.1016/0370-2693(90)90227-W} {\bibfield  {journal}
  {\bibinfo  {journal} {Phys. Lett. B}\ }\textbf {\bibinfo {volume} {251}},\
  \bibinfo {pages} {34} (\bibinfo {year} {1990})}\BibitemShut {NoStop}%
\bibitem [{\citenamefont {Campbell}\ \emph {et~al.}(1992)\citenamefont
  {Campbell}, \citenamefont {Kaloper},\ and\ \citenamefont
  {Olive}}]{Campbell:1991kz}%
  \BibitemOpen
  \bibfield  {author} {\bibinfo {author} {\bibfnamefont {B.~A.}\ \bibnamefont
  {Campbell}}, \bibinfo {author} {\bibfnamefont {N.}~\bibnamefont {Kaloper}},\
  and\ \bibinfo {author} {\bibfnamefont {K.~A.}\ \bibnamefont {Olive}},\
  }\bibfield  {title} {\bibinfo {title} {{Classical hair for Kerr-Newman black
  holes in string gravity}},\ }\href
  {https://doi.org/10.1016/0370-2693(92)91452-F} {\bibfield  {journal}
  {\bibinfo  {journal} {Phys. Lett. B}\ }\textbf {\bibinfo {volume} {285}},\
  \bibinfo {pages} {199} (\bibinfo {year} {1992})}\BibitemShut {NoStop}%
\bibitem [{\citenamefont {Campbell}\ \emph {et~al.}(1993)\citenamefont
  {Campbell}, \citenamefont {Kaloper}, \citenamefont {Madden},\ and\
  \citenamefont {Olive}}]{Campbell:1992hc}%
  \BibitemOpen
  \bibfield  {author} {\bibinfo {author} {\bibfnamefont {B.~A.}\ \bibnamefont
  {Campbell}}, \bibinfo {author} {\bibfnamefont {N.}~\bibnamefont {Kaloper}},
  \bibinfo {author} {\bibfnamefont {R.}~\bibnamefont {Madden}},\ and\ \bibinfo
  {author} {\bibfnamefont {K.~A.}\ \bibnamefont {Olive}},\ }\bibfield  {title}
  {\bibinfo {title} {{Physical properties of four-dimensional superstring
  gravity black hole solutions}},\ }\href
  {https://doi.org/10.1016/0550-3213(93)90620-5} {\bibfield  {journal}
  {\bibinfo  {journal} {Nucl. Phys. B}\ }\textbf {\bibinfo {volume} {399}},\
  \bibinfo {pages} {137} (\bibinfo {year} {1993})},\ \Eprint
  {https://arxiv.org/abs/hep-th/9301129} {arXiv:hep-th/9301129} \BibitemShut
  {NoStop}%
\bibitem [{\citenamefont {Mignemi}\ and\ \citenamefont
  {Stewart}(1993)}]{Mignemi:1992nt}%
  \BibitemOpen
  \bibfield  {author} {\bibinfo {author} {\bibfnamefont {S.}~\bibnamefont
  {Mignemi}}\ and\ \bibinfo {author} {\bibfnamefont {N.~R.}\ \bibnamefont
  {Stewart}},\ }\bibfield  {title} {\bibinfo {title} {{Charged black holes in
  effective string theory}},\ }\href {https://doi.org/10.1103/PhysRevD.47.5259}
  {\bibfield  {journal} {\bibinfo  {journal} {Phys. Rev. D}\ }\textbf {\bibinfo
  {volume} {47}},\ \bibinfo {pages} {5259} (\bibinfo {year} {1993})},\ \Eprint
  {https://arxiv.org/abs/hep-th/9212146} {arXiv:hep-th/9212146} \BibitemShut
  {NoStop}%
\bibitem [{\citenamefont {Yunes}\ and\ \citenamefont
  {Stein}(2011)}]{Yunes:2011we}%
  \BibitemOpen
  \bibfield  {author} {\bibinfo {author} {\bibfnamefont {N.}~\bibnamefont
  {Yunes}}\ and\ \bibinfo {author} {\bibfnamefont {L.~C.}\ \bibnamefont
  {Stein}},\ }\bibfield  {title} {\bibinfo {title} {{Non-Spinning Black Holes
  in Alternative Theories of Gravity}},\ }\href
  {https://doi.org/10.1103/PhysRevD.83.104002} {\bibfield  {journal} {\bibinfo
  {journal} {Phys. Rev. D}\ }\textbf {\bibinfo {volume} {83}},\ \bibinfo
  {pages} {104002} (\bibinfo {year} {2011})},\ \Eprint
  {https://arxiv.org/abs/1101.2921} {arXiv:1101.2921 [gr-qc]} \BibitemShut
  {NoStop}%
\bibitem [{\citenamefont {Pani}\ \emph {et~al.}(2011)\citenamefont {Pani},
  \citenamefont {Macedo}, \citenamefont {Crispino},\ and\ \citenamefont
  {Cardoso}}]{Pani:2011gy}%
  \BibitemOpen
  \bibfield  {author} {\bibinfo {author} {\bibfnamefont {P.}~\bibnamefont
  {Pani}}, \bibinfo {author} {\bibfnamefont {C.~F.~B.}\ \bibnamefont {Macedo}},
  \bibinfo {author} {\bibfnamefont {L.~C.~B.}\ \bibnamefont {Crispino}},\ and\
  \bibinfo {author} {\bibfnamefont {V.}~\bibnamefont {Cardoso}},\ }\bibfield
  {title} {\bibinfo {title} {{Slowly rotating black holes in alternative
  theories of gravity}},\ }\href {https://doi.org/10.1103/PhysRevD.84.087501}
  {\bibfield  {journal} {\bibinfo  {journal} {Phys. Rev. D}\ }\textbf {\bibinfo
  {volume} {84}},\ \bibinfo {pages} {087501} (\bibinfo {year} {2011})},\
  \Eprint {https://arxiv.org/abs/1109.3996} {arXiv:1109.3996 [gr-qc]}
  \BibitemShut {NoStop}%
\bibitem [{\citenamefont {Yagi}\ \emph {et~al.}(2012)\citenamefont {Yagi},
  \citenamefont {Yunes},\ and\ \citenamefont {Tanaka}}]{Yagi:2012ya}%
  \BibitemOpen
  \bibfield  {author} {\bibinfo {author} {\bibfnamefont {K.}~\bibnamefont
  {Yagi}}, \bibinfo {author} {\bibfnamefont {N.}~\bibnamefont {Yunes}},\ and\
  \bibinfo {author} {\bibfnamefont {T.}~\bibnamefont {Tanaka}},\ }\bibfield
  {title} {\bibinfo {title} {{Slowly Rotating Black Holes in Dynamical
  Chern-Simons Gravity: Deformation Quadratic in the Spin}},\ }\href
  {https://doi.org/10.1103/PhysRevD.86.044037} {\bibfield  {journal} {\bibinfo
  {journal} {Phys. Rev. D}\ }\textbf {\bibinfo {volume} {86}},\ \bibinfo
  {pages} {044037} (\bibinfo {year} {2012})},\ \bibinfo {note} {[Erratum:
  Phys.Rev.D 89, 049902 (2014)]},\ \Eprint {https://arxiv.org/abs/1206.6130}
  {arXiv:1206.6130 [gr-qc]} \BibitemShut {NoStop}%
\bibitem [{\citenamefont {Ayzenberg}\ and\ \citenamefont
  {Yunes}(2014)}]{Ayzenberg:2014aka}%
  \BibitemOpen
  \bibfield  {author} {\bibinfo {author} {\bibfnamefont {D.}~\bibnamefont
  {Ayzenberg}}\ and\ \bibinfo {author} {\bibfnamefont {N.}~\bibnamefont
  {Yunes}},\ }\bibfield  {title} {\bibinfo {title} {{Slowly-Rotating Black
  Holes in Einstein-Dilaton-Gauss-Bonnet Gravity: Quadratic Order in Spin
  Solutions}},\ }\href {https://doi.org/10.1103/PhysRevD.90.044066} {\bibfield
  {journal} {\bibinfo  {journal} {Phys. Rev. D}\ }\textbf {\bibinfo {volume}
  {90}},\ \bibinfo {pages} {044066} (\bibinfo {year} {2014})},\ \bibinfo {note}
  {[Erratum: Phys.Rev.D 91, 069905 (2015)]},\ \Eprint
  {https://arxiv.org/abs/1405.2133} {arXiv:1405.2133 [gr-qc]} \BibitemShut
  {NoStop}%
\bibitem [{\citenamefont {Maselli}\ \emph
  {et~al.}(2015{\natexlab{a}})\citenamefont {Maselli}, \citenamefont {Pani},
  \citenamefont {Gualtieri},\ and\ \citenamefont {Ferrari}}]{Maselli:2015tta}%
  \BibitemOpen
  \bibfield  {author} {\bibinfo {author} {\bibfnamefont {A.}~\bibnamefont
  {Maselli}}, \bibinfo {author} {\bibfnamefont {P.}~\bibnamefont {Pani}},
  \bibinfo {author} {\bibfnamefont {L.}~\bibnamefont {Gualtieri}},\ and\
  \bibinfo {author} {\bibfnamefont {V.}~\bibnamefont {Ferrari}},\ }\bibfield
  {title} {\bibinfo {title} {{Rotating black holes in
  Einstein-Dilaton-Gauss-Bonnet gravity with finite coupling}},\ }\href
  {https://doi.org/10.1103/PhysRevD.92.083014} {\bibfield  {journal} {\bibinfo
  {journal} {Phys. Rev. D}\ }\textbf {\bibinfo {volume} {92}},\ \bibinfo
  {pages} {083014} (\bibinfo {year} {2015}{\natexlab{a}})},\ \Eprint
  {https://arxiv.org/abs/1507.00680} {arXiv:1507.00680 [gr-qc]} \BibitemShut
  {NoStop}%
\bibitem [{\citenamefont {Maselli}\ \emph
  {et~al.}(2015{\natexlab{b}})\citenamefont {Maselli}, \citenamefont {Silva},
  \citenamefont {Minamitsuji},\ and\ \citenamefont {Berti}}]{Maselli:2015yva}%
  \BibitemOpen
  \bibfield  {author} {\bibinfo {author} {\bibfnamefont {A.}~\bibnamefont
  {Maselli}}, \bibinfo {author} {\bibfnamefont {H.~O.}\ \bibnamefont {Silva}},
  \bibinfo {author} {\bibfnamefont {M.}~\bibnamefont {Minamitsuji}},\ and\
  \bibinfo {author} {\bibfnamefont {E.}~\bibnamefont {Berti}},\ }\bibfield
  {title} {\bibinfo {title} {{Slowly rotating black hole solutions in Horndeski
  gravity}},\ }\href {https://doi.org/10.1103/PhysRevD.92.104049} {\bibfield
  {journal} {\bibinfo  {journal} {Phys. Rev. D}\ }\textbf {\bibinfo {volume}
  {92}},\ \bibinfo {pages} {104049} (\bibinfo {year} {2015}{\natexlab{b}})},\
  \Eprint {https://arxiv.org/abs/1508.03044} {arXiv:1508.03044 [gr-qc]}
  \BibitemShut {NoStop}%
\bibitem [{\citenamefont {Maselli}\ \emph {et~al.}(2017)\citenamefont
  {Maselli}, \citenamefont {Pani}, \citenamefont {Cotesta}, \citenamefont
  {Gualtieri}, \citenamefont {Ferrari},\ and\ \citenamefont
  {Stella}}]{Maselli:2017kic}%
  \BibitemOpen
  \bibfield  {author} {\bibinfo {author} {\bibfnamefont {A.}~\bibnamefont
  {Maselli}}, \bibinfo {author} {\bibfnamefont {P.}~\bibnamefont {Pani}},
  \bibinfo {author} {\bibfnamefont {R.}~\bibnamefont {Cotesta}}, \bibinfo
  {author} {\bibfnamefont {L.}~\bibnamefont {Gualtieri}}, \bibinfo {author}
  {\bibfnamefont {V.}~\bibnamefont {Ferrari}},\ and\ \bibinfo {author}
  {\bibfnamefont {L.}~\bibnamefont {Stella}},\ }\bibfield  {title} {\bibinfo
  {title} {{Geodesic models of quasi-periodic-oscillations as probes of
  quadratic gravity}},\ }\href {https://doi.org/10.3847/1538-4357/aa72e2}
  {\bibfield  {journal} {\bibinfo  {journal} {Astrophys. J.}\ }\textbf
  {\bibinfo {volume} {843}},\ \bibinfo {pages} {25} (\bibinfo {year} {2017})},\
  \Eprint {https://arxiv.org/abs/1703.01472} {arXiv:1703.01472 [astro-ph.HE]}
  \BibitemShut {NoStop}%
\bibitem [{\citenamefont {Cardoso}\ \emph {et~al.}(2018)\citenamefont
  {Cardoso}, \citenamefont {Kimura}, \citenamefont {Maselli},\ and\
  \citenamefont {Senatore}}]{Cardoso:2018ptl}%
  \BibitemOpen
  \bibfield  {author} {\bibinfo {author} {\bibfnamefont {V.}~\bibnamefont
  {Cardoso}}, \bibinfo {author} {\bibfnamefont {M.}~\bibnamefont {Kimura}},
  \bibinfo {author} {\bibfnamefont {A.}~\bibnamefont {Maselli}},\ and\ \bibinfo
  {author} {\bibfnamefont {L.}~\bibnamefont {Senatore}},\ }\bibfield  {title}
  {\bibinfo {title} {{Black Holes in an Effective Field Theory Extension of
  General Relativity}},\ }\href
  {https://doi.org/10.1103/PhysRevLett.121.251105} {\bibfield  {journal}
  {\bibinfo  {journal} {Phys. Rev. Lett.}\ }\textbf {\bibinfo {volume} {121}},\
  \bibinfo {pages} {251105} (\bibinfo {year} {2018})},\ \Eprint
  {https://arxiv.org/abs/1808.08962} {arXiv:1808.08962 [gr-qc]} \BibitemShut
  {NoStop}%
\bibitem [{\citenamefont {Juli\'e}\ and\ \citenamefont
  {Berti}(2019)}]{Julie:2019sab}%
  \BibitemOpen
  \bibfield  {author} {\bibinfo {author} {\bibfnamefont {F.-L.}\ \bibnamefont
  {Juli\'e}}\ and\ \bibinfo {author} {\bibfnamefont {E.}~\bibnamefont
  {Berti}},\ }\bibfield  {title} {\bibinfo {title} {{Post-Newtonian dynamics
  and black hole thermodynamics in Einstein-scalar-Gauss-Bonnet gravity}},\
  }\href {https://doi.org/10.1103/PhysRevD.100.104061} {\bibfield  {journal}
  {\bibinfo  {journal} {Phys. Rev. D}\ }\textbf {\bibinfo {volume} {100}},\
  \bibinfo {pages} {104061} (\bibinfo {year} {2019})},\ \Eprint
  {https://arxiv.org/abs/1909.05258} {arXiv:1909.05258 [gr-qc]} \BibitemShut
  {NoStop}%
\bibitem [{\citenamefont {Cano}\ and\ \citenamefont
  {Ruip\'erez}(2019)}]{Cano:2019ore}%
  \BibitemOpen
  \bibfield  {author} {\bibinfo {author} {\bibfnamefont {P.~A.}\ \bibnamefont
  {Cano}}\ and\ \bibinfo {author} {\bibfnamefont {A.}~\bibnamefont
  {Ruip\'erez}},\ }\bibfield  {title} {\bibinfo {title} {{Leading
  higher-derivative corrections to Kerr geometry}},\ }\href
  {https://doi.org/10.1007/JHEP05(2019)189} {\bibfield  {journal} {\bibinfo
  {journal} {JHEP}\ }\textbf {\bibinfo {volume} {05}},\ \bibinfo {pages}
  {189}},\ \bibinfo {note} {[Erratum: JHEP 03, 187 (2020)]},\ \Eprint
  {https://arxiv.org/abs/1901.01315} {arXiv:1901.01315 [gr-qc]} \BibitemShut
  {NoStop}%
\bibitem [{\citenamefont {Doneva}\ and\ \citenamefont
  {Yazadjiev}(2018)}]{Doneva:2017bvd}%
  \BibitemOpen
  \bibfield  {author} {\bibinfo {author} {\bibfnamefont {D.~D.}\ \bibnamefont
  {Doneva}}\ and\ \bibinfo {author} {\bibfnamefont {S.~S.}\ \bibnamefont
  {Yazadjiev}},\ }\bibfield  {title} {\bibinfo {title} {{New Gauss-Bonnet Black
  Holes with Curvature-Induced Scalarization in Extended Scalar-Tensor
  Theories}},\ }\href {https://doi.org/10.1103/PhysRevLett.120.131103}
  {\bibfield  {journal} {\bibinfo  {journal} {Phys. Rev. Lett.}\ }\textbf
  {\bibinfo {volume} {120}},\ \bibinfo {pages} {131103} (\bibinfo {year}
  {2018})},\ \Eprint {https://arxiv.org/abs/1711.01187} {arXiv:1711.01187
  [gr-qc]} \BibitemShut {NoStop}%
\bibitem [{\citenamefont {Silva}\ \emph {et~al.}(2018)\citenamefont {Silva},
  \citenamefont {Sakstein}, \citenamefont {Gualtieri}, \citenamefont
  {Sotiriou},\ and\ \citenamefont {Berti}}]{Silva:2017uqg}%
  \BibitemOpen
  \bibfield  {author} {\bibinfo {author} {\bibfnamefont {H.~O.}\ \bibnamefont
  {Silva}}, \bibinfo {author} {\bibfnamefont {J.}~\bibnamefont {Sakstein}},
  \bibinfo {author} {\bibfnamefont {L.}~\bibnamefont {Gualtieri}}, \bibinfo
  {author} {\bibfnamefont {T.~P.}\ \bibnamefont {Sotiriou}},\ and\ \bibinfo
  {author} {\bibfnamefont {E.}~\bibnamefont {Berti}},\ }\bibfield  {title}
  {\bibinfo {title} {{Spontaneous scalarization of black holes and compact
  stars from a Gauss-Bonnet coupling}},\ }\href
  {https://doi.org/10.1103/PhysRevLett.120.131104} {\bibfield  {journal}
  {\bibinfo  {journal} {Phys. Rev. Lett.}\ }\textbf {\bibinfo {volume} {120}},\
  \bibinfo {pages} {131104} (\bibinfo {year} {2018})},\ \Eprint
  {https://arxiv.org/abs/1711.02080} {arXiv:1711.02080 [gr-qc]} \BibitemShut
  {NoStop}%
\bibitem [{\citenamefont {Carter}(1970)}]{Carter:1970ea}%
  \BibitemOpen
  \bibfield  {author} {\bibinfo {author} {\bibfnamefont {B.}~\bibnamefont
  {Carter}},\ }\bibfield  {title} {\bibinfo {title} {{The commutation property
  of a stationary, axisymmetric system}},\ }\href
  {https://doi.org/10.1007/BF01647092} {\bibfield  {journal} {\bibinfo
  {journal} {Commun. Math. Phys.}\ }\textbf {\bibinfo {volume} {17}},\ \bibinfo
  {pages} {233} (\bibinfo {year} {1970})}\BibitemShut {NoStop}%
\bibitem [{\citenamefont {Wald}(1984)}]{Wald:1984rg}%
  \BibitemOpen
  \bibfield  {author} {\bibinfo {author} {\bibfnamefont {R.~M.}\ \bibnamefont
  {Wald}},\ }\href {https://doi.org/10.7208/chicago/9780226870373.001.0001}
  {\emph {\bibinfo {title} {{General Relativity}}}}\ (\bibinfo  {publisher}
  {Chicago Univ. Pr.},\ \bibinfo {address} {Chicago, USA},\ \bibinfo {year}
  {1984})\BibitemShut {NoStop}%
\bibitem [{\citenamefont {Motohashi}\ and\ \citenamefont
  {Minamitsuji}(2018)}]{Motohashi:2018wdq}%
  \BibitemOpen
  \bibfield  {author} {\bibinfo {author} {\bibfnamefont {H.}~\bibnamefont
  {Motohashi}}\ and\ \bibinfo {author} {\bibfnamefont {M.}~\bibnamefont
  {Minamitsuji}},\ }\bibfield  {title} {\bibinfo {title} {{General Relativity
  solutions in modified gravity}},\ }\href
  {https://doi.org/10.1016/j.physletb.2018.04.041} {\bibfield  {journal}
  {\bibinfo  {journal} {Phys. Lett. B}\ }\textbf {\bibinfo {volume} {781}},\
  \bibinfo {pages} {728} (\bibinfo {year} {2018})},\ \Eprint
  {https://arxiv.org/abs/1804.01731} {arXiv:1804.01731 [gr-qc]} \BibitemShut
  {NoStop}%
\bibitem [{\citenamefont {Carter}(1969)}]{Carter:1969zz}%
  \BibitemOpen
  \bibfield  {author} {\bibinfo {author} {\bibfnamefont {B.}~\bibnamefont
  {Carter}},\ }\bibfield  {title} {\bibinfo {title} {{Killing horizons and
  orthogonally transitive groups in space-time}},\ }\href
  {https://doi.org/10.1063/1.1664763} {\bibfield  {journal} {\bibinfo
  {journal} {J. Math. Phys.}\ }\textbf {\bibinfo {volume} {10}},\ \bibinfo
  {pages} {70} (\bibinfo {year} {1969})}\BibitemShut {NoStop}%
\bibitem [{\citenamefont {Carter}(1973)}]{Carter:1973rla}%
  \BibitemOpen
  \bibfield  {author} {\bibinfo {author} {\bibfnamefont {B.}~\bibnamefont
  {Carter}},\ }\bibfield  {title} {\bibinfo {title} {{Black holes equilibrium
  states}},\ }in\ \href@noop {} {\emph {\bibinfo {booktitle} {{Les Houches
  Summer School of Theoretical Physics}: {Black Holes}}}}\ (\bibinfo {year}
  {1973})\ pp.\ \bibinfo {pages} {57--214}\BibitemShut {NoStop}%
\bibitem [{\citenamefont {Carter}(2009)}]{carter2009republication}%
  \BibitemOpen
  \bibfield  {author} {\bibinfo {author} {\bibfnamefont {B.}~\bibnamefont
  {Carter}},\ }\bibfield  {title} {\bibinfo {title} {Republication of: Black
  hole equilibrium states},\ }\href@noop {} {\bibfield  {journal} {\bibinfo
  {journal} {General Relativity and Gravitation}\ }\textbf {\bibinfo {volume}
  {41}},\ \bibinfo {pages} {2873} (\bibinfo {year} {2009})}\BibitemShut
  {NoStop}%
\bibitem [{\citenamefont {Avramidi}(1991)}]{Avramidi:1990je}%
  \BibitemOpen
  \bibfield  {author} {\bibinfo {author} {\bibfnamefont {I.~G.}\ \bibnamefont
  {Avramidi}},\ }\bibfield  {title} {\bibinfo {title} {{The Covariant Technique
  for Calculation of One Loop Effective Action}},\ }\href
  {https://doi.org/10.1016/0550-3213(91)90492-G, 10.1016/S0550-3213(97)00717-7}
  {\bibfield  {journal} {\bibinfo  {journal} {Nucl. Phys.}\ }\textbf {\bibinfo
  {volume} {B355}},\ \bibinfo {pages} {712} (\bibinfo {year} {1991})},\
  \bibinfo {note} {[Erratum: Nucl. Phys.B509,557(1998)]}\BibitemShut {NoStop}%
\bibitem [{\citenamefont {Avramidi}(1986)}]{Avramidi:1986mj}%
  \BibitemOpen
  \bibfield  {author} {\bibinfo {author} {\bibfnamefont {I.~G.}\ \bibnamefont
  {Avramidi}},\ }\emph {\bibinfo {title} {{Covariant methods for the
  calculation of the effective action in quantum field theory and investigation
  of higher derivative quantum gravity}}},\ \href@noop {} {Ph.D. thesis},\
  \bibinfo  {school} {Moscow State U.} (\bibinfo {year} {1986}),\ \Eprint
  {https://arxiv.org/abs/hep-th/9510140} {arXiv:hep-th/9510140 [hep-th]}
  \BibitemShut {NoStop}%
\bibitem [{\citenamefont {de~Rham}\ and\ \citenamefont
  {Tolley}(2020)}]{deRham:2019ctd}%
  \BibitemOpen
  \bibfield  {author} {\bibinfo {author} {\bibfnamefont {C.}~\bibnamefont
  {de~Rham}}\ and\ \bibinfo {author} {\bibfnamefont {A.~J.}\ \bibnamefont
  {Tolley}},\ }\bibfield  {title} {\bibinfo {title} {{Speed of gravity}},\
  }\href {https://doi.org/10.1103/PhysRevD.101.063518} {\bibfield  {journal}
  {\bibinfo  {journal} {Phys. Rev. D}\ }\textbf {\bibinfo {volume} {101}},\
  \bibinfo {pages} {063518} (\bibinfo {year} {2020})},\ \Eprint
  {https://arxiv.org/abs/1909.00881} {arXiv:1909.00881 [hep-th]} \BibitemShut
  {NoStop}%
\bibitem [{\citenamefont {Damour}\ and\ \citenamefont
  {Esposito-Far\`{e}se}(1992)}]{Damour:1992we}%
  \BibitemOpen
  \bibfield  {author} {\bibinfo {author} {\bibfnamefont {T.}~\bibnamefont
  {Damour}}\ and\ \bibinfo {author} {\bibfnamefont {G.}~\bibnamefont
  {Esposito-Far\`{e}se}},\ }\bibfield  {title} {\bibinfo {title} {{Tensor
  multiscalar theories of gravitation}},\ }\href
  {https://doi.org/10.1088/0264-9381/9/9/015} {\bibfield  {journal} {\bibinfo
  {journal} {Class. Quant. Grav.}\ }\textbf {\bibinfo {volume} {9}},\ \bibinfo
  {pages} {2093} (\bibinfo {year} {1992})}\BibitemShut {NoStop}%
\bibitem [{Note1()}]{Note1}%
  \BibitemOpen
  \bibinfo {note} {To be precise, Eq.~\protect \textup {\hbox {\mathsurround
  \z@ \protect \normalfont (\ignorespaces \ref {eq:R1}\unskip \@@italiccorr )}}
  does not immediately indicate $g_{\mu \nu }^{(1)}$ is circular. Instead, it
  indicates the metric corresponding to $\left (\protect \tensor {R}{_\mu ^\nu
  }\right )^{(1)}$, which could be different from $g_{\mu \nu }^{(1)}$ at
  ${\protect \cal O}(\alpha ^2)$, is circular. Nevertheless, it means there
  exist coordinates, in which certain components of the metric corresponding to
  $\left (\protect \tensor {R}{_\mu ^\nu }\right )^{(1)}$ vanish (cf.
  Eq.~\protect \textup {\hbox {\mathsurround \z@ \protect \normalfont
  (\ignorespaces \ref {eq:goff}\unskip \@@italiccorr )}}). Those components
  remain zero after truncating all ${\protect \cal O}(\alpha ^2)$ terms, in
  which case the metric corresponding to $\left (\protect \tensor {R}{_\mu ^\nu
  }\right )^{(1)}$ reduces to $g_{\mu \nu }^{(1)}$. Thus, the Ricci tensor
  associated with $g_{\mu \nu }^{(1)}$ is invertible and $g_{\mu \nu }^{(1)}$
  is circular.}\BibitemShut {Stop}%
\bibitem [{Note2()}]{Note2}%
  \BibitemOpen
  \bibinfo {note} {See Supplemental Material, which includes Refs. \cite
  {Carter:1969zz}, for a detailed discussion on tensor invertibility and its
  preservation through common operations.}\BibitemShut {Stop}%
\bibitem [{\citenamefont {Tasinato}(2014)}]{Tasinato:2014eka}%
  \BibitemOpen
  \bibfield  {author} {\bibinfo {author} {\bibfnamefont {G.}~\bibnamefont
  {Tasinato}},\ }\bibfield  {title} {\bibinfo {title} {{Cosmic Acceleration
  from Abelian Symmetry Breaking}},\ }\href
  {https://doi.org/10.1007/JHEP04(2014)067} {\bibfield  {journal} {\bibinfo
  {journal} {JHEP}\ }\textbf {\bibinfo {volume} {04}},\ \bibinfo {pages}
  {067}},\ \Eprint {https://arxiv.org/abs/1402.6450} {arXiv:1402.6450 [hep-th]}
  \BibitemShut {NoStop}%
\bibitem [{\citenamefont {Heisenberg}(2014)}]{Heisenberg:2014rta}%
  \BibitemOpen
  \bibfield  {author} {\bibinfo {author} {\bibfnamefont {L.}~\bibnamefont
  {Heisenberg}},\ }\bibfield  {title} {\bibinfo {title} {{Generalization of the
  Proca Action}},\ }\href {https://doi.org/10.1088/1475-7516/2014/05/015}
  {\bibfield  {journal} {\bibinfo  {journal} {JCAP}\ }\textbf {\bibinfo
  {volume} {05}},\ \bibinfo {pages} {015}},\ \Eprint
  {https://arxiv.org/abs/1402.7026} {arXiv:1402.7026 [hep-th]} \BibitemShut
  {NoStop}%
\bibitem [{\citenamefont {Allys}\ \emph
  {et~al.}(2016{\natexlab{a}})\citenamefont {Allys}, \citenamefont {Peter},\
  and\ \citenamefont {Rodriguez}}]{Allys:2015sht}%
  \BibitemOpen
  \bibfield  {author} {\bibinfo {author} {\bibfnamefont {E.}~\bibnamefont
  {Allys}}, \bibinfo {author} {\bibfnamefont {P.}~\bibnamefont {Peter}},\ and\
  \bibinfo {author} {\bibfnamefont {Y.}~\bibnamefont {Rodriguez}},\ }\bibfield
  {title} {\bibinfo {title} {{Generalized Proca action for an Abelian vector
  field}},\ }\href {https://doi.org/10.1088/1475-7516/2016/02/004} {\bibfield
  {journal} {\bibinfo  {journal} {JCAP}\ }\textbf {\bibinfo {volume} {02}},\
  \bibinfo {pages} {004}},\ \Eprint {https://arxiv.org/abs/1511.03101}
  {arXiv:1511.03101 [hep-th]} \BibitemShut {NoStop}%
\bibitem [{\citenamefont {Allys}\ \emph
  {et~al.}(2016{\natexlab{b}})\citenamefont {Allys}, \citenamefont
  {Beltran~Almeida}, \citenamefont {Peter},\ and\ \citenamefont
  {Rodr\'\i{}guez}}]{Allys:2016jaq}%
  \BibitemOpen
  \bibfield  {author} {\bibinfo {author} {\bibfnamefont {E.}~\bibnamefont
  {Allys}}, \bibinfo {author} {\bibfnamefont {J.~P.}\ \bibnamefont
  {Beltran~Almeida}}, \bibinfo {author} {\bibfnamefont {P.}~\bibnamefont
  {Peter}},\ and\ \bibinfo {author} {\bibfnamefont {Y.}~\bibnamefont
  {Rodr\'\i{}guez}},\ }\bibfield  {title} {\bibinfo {title} {{On the 4D
  generalized Proca action for an Abelian vector field}},\ }\href
  {https://doi.org/10.1088/1475-7516/2016/09/026} {\bibfield  {journal}
  {\bibinfo  {journal} {JCAP}\ }\textbf {\bibinfo {volume} {09}},\ \bibinfo
  {pages} {026}},\ \Eprint {https://arxiv.org/abs/1605.08355} {arXiv:1605.08355
  [hep-th]} \BibitemShut {NoStop}%
\bibitem [{\citenamefont {de~Rham}\ and\ \citenamefont
  {Pozsgay}(2020)}]{deRham:2020yet}%
  \BibitemOpen
  \bibfield  {author} {\bibinfo {author} {\bibfnamefont {C.}~\bibnamefont
  {de~Rham}}\ and\ \bibinfo {author} {\bibfnamefont {V.}~\bibnamefont
  {Pozsgay}},\ }\bibfield  {title} {\bibinfo {title} {{New class of Proca
  interactions}},\ }\href {https://doi.org/10.1103/PhysRevD.102.083508}
  {\bibfield  {journal} {\bibinfo  {journal} {Phys. Rev. D}\ }\textbf {\bibinfo
  {volume} {102}},\ \bibinfo {pages} {083508} (\bibinfo {year} {2020})},\
  \Eprint {https://arxiv.org/abs/2003.13773} {arXiv:2003.13773 [hep-th]}
  \BibitemShut {NoStop}%
\bibitem [{\citenamefont {Gair}\ \emph {et~al.}(2013)\citenamefont {Gair},
  \citenamefont {Vallisneri}, \citenamefont {Larson},\ and\ \citenamefont
  {Baker}}]{Gair:2012nm}%
  \BibitemOpen
  \bibfield  {author} {\bibinfo {author} {\bibfnamefont {J.~R.}\ \bibnamefont
  {Gair}}, \bibinfo {author} {\bibfnamefont {M.}~\bibnamefont {Vallisneri}},
  \bibinfo {author} {\bibfnamefont {S.~L.}\ \bibnamefont {Larson}},\ and\
  \bibinfo {author} {\bibfnamefont {J.~G.}\ \bibnamefont {Baker}},\ }\bibfield
  {title} {\bibinfo {title} {{Testing General Relativity with Low-Frequency,
  Space-Based Gravitational-Wave Detectors}},\ }\href
  {https://doi.org/10.12942/lrr-2013-7} {\bibfield  {journal} {\bibinfo
  {journal} {Living Rev. Rel.}\ }\textbf {\bibinfo {volume} {16}},\ \bibinfo
  {pages} {7} (\bibinfo {year} {2013})},\ \Eprint
  {https://arxiv.org/abs/1212.5575} {arXiv:1212.5575 [gr-qc]} \BibitemShut
  {NoStop}%
\bibitem [{\citenamefont {Yunes}\ and\ \citenamefont
  {Siemens}(2013)}]{Yunes:2013dva}%
  \BibitemOpen
  \bibfield  {author} {\bibinfo {author} {\bibfnamefont {N.}~\bibnamefont
  {Yunes}}\ and\ \bibinfo {author} {\bibfnamefont {X.}~\bibnamefont
  {Siemens}},\ }\bibfield  {title} {\bibinfo {title} {{Gravitational-Wave Tests
  of General Relativity with Ground-Based Detectors and Pulsar
  Timing-Arrays}},\ }\href {https://doi.org/10.12942/lrr-2013-9} {\bibfield
  {journal} {\bibinfo  {journal} {Living Rev. Rel.}\ }\textbf {\bibinfo
  {volume} {16}},\ \bibinfo {pages} {9} (\bibinfo {year} {2013})},\ \Eprint
  {https://arxiv.org/abs/1304.3473} {arXiv:1304.3473 [gr-qc]} \BibitemShut
  {NoStop}%
\bibitem [{\citenamefont {Berti}\ \emph
  {et~al.}(2018{\natexlab{a}})\citenamefont {Berti}, \citenamefont {Yagi},\
  and\ \citenamefont {Yunes}}]{Berti:2018cxi}%
  \BibitemOpen
  \bibfield  {author} {\bibinfo {author} {\bibfnamefont {E.}~\bibnamefont
  {Berti}}, \bibinfo {author} {\bibfnamefont {K.}~\bibnamefont {Yagi}},\ and\
  \bibinfo {author} {\bibfnamefont {N.}~\bibnamefont {Yunes}},\ }\bibfield
  {title} {\bibinfo {title} {{Extreme Gravity Tests with Gravitational Waves
  from Compact Binary Coalescences: (I) Inspiral-Merger}},\ }\href
  {https://doi.org/10.1007/s10714-018-2362-8} {\bibfield  {journal} {\bibinfo
  {journal} {Gen. Rel. Grav.}\ }\textbf {\bibinfo {volume} {50}},\ \bibinfo
  {pages} {46} (\bibinfo {year} {2018}{\natexlab{a}})},\ \Eprint
  {https://arxiv.org/abs/1801.03208} {arXiv:1801.03208 [gr-qc]} \BibitemShut
  {NoStop}%
\bibitem [{\citenamefont {Berti}\ \emph
  {et~al.}(2018{\natexlab{b}})\citenamefont {Berti}, \citenamefont {Yagi},
  \citenamefont {Yang},\ and\ \citenamefont {Yunes}}]{Berti:2018vdi}%
  \BibitemOpen
  \bibfield  {author} {\bibinfo {author} {\bibfnamefont {E.}~\bibnamefont
  {Berti}}, \bibinfo {author} {\bibfnamefont {K.}~\bibnamefont {Yagi}},
  \bibinfo {author} {\bibfnamefont {H.}~\bibnamefont {Yang}},\ and\ \bibinfo
  {author} {\bibfnamefont {N.}~\bibnamefont {Yunes}},\ }\bibfield  {title}
  {\bibinfo {title} {{Extreme Gravity Tests with Gravitational Waves from
  Compact Binary Coalescences: (II) Ringdown}},\ }\href
  {https://doi.org/10.1007/s10714-018-2372-6} {\bibfield  {journal} {\bibinfo
  {journal} {Gen. Rel. Grav.}\ }\textbf {\bibinfo {volume} {50}},\ \bibinfo
  {pages} {49} (\bibinfo {year} {2018}{\natexlab{b}})},\ \Eprint
  {https://arxiv.org/abs/1801.03587} {arXiv:1801.03587 [gr-qc]} \BibitemShut
  {NoStop}%
\bibitem [{\citenamefont {Barausse}\ \emph {et~al.}(2020)\citenamefont
  {Barausse} \emph {et~al.}}]{Barausse:2020rsu}%
  \BibitemOpen
  \bibfield  {author} {\bibinfo {author} {\bibfnamefont {E.}~\bibnamefont
  {Barausse}} \emph {et~al.},\ }\bibfield  {title} {\bibinfo {title}
  {{Prospects for Fundamental Physics with LISA}},\ }\href
  {https://doi.org/10.1007/s10714-020-02691-1} {\bibfield  {journal} {\bibinfo
  {journal} {Gen. Rel. Grav.}\ }\textbf {\bibinfo {volume} {52}},\ \bibinfo
  {pages} {81} (\bibinfo {year} {2020})},\ \Eprint
  {https://arxiv.org/abs/2001.09793} {arXiv:2001.09793 [gr-qc]} \BibitemShut
  {NoStop}%
\bibitem [{\citenamefont {Krawczynski}(2018)}]{Krawczynski:2018fnw}%
  \BibitemOpen
  \bibfield  {author} {\bibinfo {author} {\bibfnamefont {H.}~\bibnamefont
  {Krawczynski}},\ }\bibfield  {title} {\bibinfo {title} {{Difficulties of
  Quantitative Tests of the Kerr-Hypothesis with X-Ray Observations of Mass
  Accreting Black Holes}},\ }\href {https://doi.org/10.1007/s10714-018-2419-8}
  {\bibfield  {journal} {\bibinfo  {journal} {Gen. Rel. Grav.}\ }\textbf
  {\bibinfo {volume} {50}},\ \bibinfo {pages} {100} (\bibinfo {year} {2018})},\
  \Eprint {https://arxiv.org/abs/1806.10347} {arXiv:1806.10347 [astro-ph.HE]}
  \BibitemShut {NoStop}%
\bibitem [{\citenamefont {Psaltis}(2019)}]{Psaltis:2018xkc}%
  \BibitemOpen
  \bibfield  {author} {\bibinfo {author} {\bibfnamefont {D.}~\bibnamefont
  {Psaltis}},\ }\bibfield  {title} {\bibinfo {title} {{Testing General
  Relativity with the Event Horizon Telescope}},\ }\href
  {https://doi.org/10.1007/s10714-019-2611-5} {\bibfield  {journal} {\bibinfo
  {journal} {Gen. Rel. Grav.}\ }\textbf {\bibinfo {volume} {51}},\ \bibinfo
  {pages} {137} (\bibinfo {year} {2019})},\ \Eprint
  {https://arxiv.org/abs/1806.09740} {arXiv:1806.09740 [astro-ph.HE]}
  \BibitemShut {NoStop}%
\bibitem [{\citenamefont {Delsate}\ \emph {et~al.}(2015)\citenamefont
  {Delsate}, \citenamefont {Hilditch},\ and\ \citenamefont
  {Witek}}]{Delsate:2014hba}%
  \BibitemOpen
  \bibfield  {author} {\bibinfo {author} {\bibfnamefont {T.}~\bibnamefont
  {Delsate}}, \bibinfo {author} {\bibfnamefont {D.}~\bibnamefont {Hilditch}},\
  and\ \bibinfo {author} {\bibfnamefont {H.}~\bibnamefont {Witek}},\ }\bibfield
   {title} {\bibinfo {title} {{Initial value formulation of dynamical
  Chern-Simons gravity}},\ }\href {https://doi.org/10.1103/PhysRevD.91.024027}
  {\bibfield  {journal} {\bibinfo  {journal} {Phys. Rev. D}\ }\textbf {\bibinfo
  {volume} {91}},\ \bibinfo {pages} {024027} (\bibinfo {year} {2015})},\
  \Eprint {https://arxiv.org/abs/1407.6727} {arXiv:1407.6727 [gr-qc]}
  \BibitemShut {NoStop}%
\bibitem [{\citenamefont {Stein}(2014)}]{Stein:2014xba}%
  \BibitemOpen
  \bibfield  {author} {\bibinfo {author} {\bibfnamefont {L.~C.}\ \bibnamefont
  {Stein}},\ }\bibfield  {title} {\bibinfo {title} {{Rapidly rotating black
  holes in dynamical Chern-Simons gravity: Decoupling limit solutions and
  breakdown}},\ }\href {https://doi.org/10.1103/PhysRevD.90.044061} {\bibfield
  {journal} {\bibinfo  {journal} {Phys. Rev. D}\ }\textbf {\bibinfo {volume}
  {90}},\ \bibinfo {pages} {044061} (\bibinfo {year} {2014})},\ \Eprint
  {https://arxiv.org/abs/1407.2350} {arXiv:1407.2350 [gr-qc]} \BibitemShut
  {NoStop}%
\bibitem [{\citenamefont {Delsate}\ \emph {et~al.}(2018)\citenamefont
  {Delsate}, \citenamefont {Herdeiro},\ and\ \citenamefont
  {Radu}}]{Delsate:2018ome}%
  \BibitemOpen
  \bibfield  {author} {\bibinfo {author} {\bibfnamefont {T.}~\bibnamefont
  {Delsate}}, \bibinfo {author} {\bibfnamefont {C.}~\bibnamefont {Herdeiro}},\
  and\ \bibinfo {author} {\bibfnamefont {E.}~\bibnamefont {Radu}},\ }\bibfield
  {title} {\bibinfo {title} {{Non-perturbative spinning black holes in
  dynamical Chern\textendash{}Simons gravity}},\ }\href
  {https://doi.org/10.1016/j.physletb.2018.09.060} {\bibfield  {journal}
  {\bibinfo  {journal} {Phys. Lett. B}\ }\textbf {\bibinfo {volume} {787}},\
  \bibinfo {pages} {8} (\bibinfo {year} {2018})},\ \Eprint
  {https://arxiv.org/abs/1806.06700} {arXiv:1806.06700 [gr-qc]} \BibitemShut
  {NoStop}%
\bibitem [{\citenamefont {Yunes}\ and\ \citenamefont
  {Pretorius}(2009)}]{Yunes:2009hc}%
  \BibitemOpen
  \bibfield  {author} {\bibinfo {author} {\bibfnamefont {N.}~\bibnamefont
  {Yunes}}\ and\ \bibinfo {author} {\bibfnamefont {F.}~\bibnamefont
  {Pretorius}},\ }\bibfield  {title} {\bibinfo {title} {{Dynamical Chern-Simons
  Modified Gravity. I. Spinning Black Holes in the Slow-Rotation
  Approximation}},\ }\href {https://doi.org/10.1103/PhysRevD.79.084043}
  {\bibfield  {journal} {\bibinfo  {journal} {Phys. Rev. D}\ }\textbf {\bibinfo
  {volume} {79}},\ \bibinfo {pages} {084043} (\bibinfo {year} {2009})},\
  \Eprint {https://arxiv.org/abs/0902.4669} {arXiv:0902.4669 [gr-qc]}
  \BibitemShut {NoStop}%
\bibitem [{\citenamefont {Konno}\ \emph {et~al.}(2009)\citenamefont {Konno},
  \citenamefont {Matsuyama},\ and\ \citenamefont {Tanda}}]{Konno:2009kg}%
  \BibitemOpen
  \bibfield  {author} {\bibinfo {author} {\bibfnamefont {K.}~\bibnamefont
  {Konno}}, \bibinfo {author} {\bibfnamefont {T.}~\bibnamefont {Matsuyama}},\
  and\ \bibinfo {author} {\bibfnamefont {S.}~\bibnamefont {Tanda}},\ }\bibfield
   {title} {\bibinfo {title} {{Rotating black hole in extended Chern-Simons
  modified gravity}},\ }\href {https://doi.org/10.1143/PTP.122.561} {\bibfield
  {journal} {\bibinfo  {journal} {Prog. Theor. Phys.}\ }\textbf {\bibinfo
  {volume} {122}},\ \bibinfo {pages} {561} (\bibinfo {year} {2009})},\ \Eprint
  {https://arxiv.org/abs/0902.4767} {arXiv:0902.4767 [gr-qc]} \BibitemShut
  {NoStop}%
\bibitem [{\citenamefont {Konno}\ and\ \citenamefont
  {Takahashi}(2014)}]{Konno:2014qua}%
  \BibitemOpen
  \bibfield  {author} {\bibinfo {author} {\bibfnamefont {K.}~\bibnamefont
  {Konno}}\ and\ \bibinfo {author} {\bibfnamefont {R.}~\bibnamefont
  {Takahashi}},\ }\bibfield  {title} {\bibinfo {title} {{Scalar field excited
  around a rapidly rotating black hole in Chern-Simons modified gravity}},\
  }\href {https://doi.org/10.1103/PhysRevD.90.064011} {\bibfield  {journal}
  {\bibinfo  {journal} {Phys. Rev. D}\ }\textbf {\bibinfo {volume} {90}},\
  \bibinfo {pages} {064011} (\bibinfo {year} {2014})},\ \Eprint
  {https://arxiv.org/abs/1406.0957} {arXiv:1406.0957 [gr-qc]} \BibitemShut
  {NoStop}%
\bibitem [{\citenamefont {McNees}\ \emph {et~al.}(2016)\citenamefont {McNees},
  \citenamefont {Stein},\ and\ \citenamefont {Yunes}}]{McNees:2015srl}%
  \BibitemOpen
  \bibfield  {author} {\bibinfo {author} {\bibfnamefont {R.}~\bibnamefont
  {McNees}}, \bibinfo {author} {\bibfnamefont {L.~C.}\ \bibnamefont {Stein}},\
  and\ \bibinfo {author} {\bibfnamefont {N.}~\bibnamefont {Yunes}},\ }\bibfield
   {title} {\bibinfo {title} {{Extremal black holes in dynamical
  Chern\textendash{}Simons gravity}},\ }\href
  {https://doi.org/10.1088/0264-9381/33/23/235013} {\bibfield  {journal}
  {\bibinfo  {journal} {Class. Quant. Grav.}\ }\textbf {\bibinfo {volume}
  {33}},\ \bibinfo {pages} {235013} (\bibinfo {year} {2016})},\ \Eprint
  {https://arxiv.org/abs/1512.05453} {arXiv:1512.05453 [gr-qc]} \BibitemShut
  {NoStop}%
\bibitem [{\citenamefont {Kleihaus}\ \emph {et~al.}(2011)\citenamefont
  {Kleihaus}, \citenamefont {Kunz},\ and\ \citenamefont
  {Radu}}]{Kleihaus:2011tg}%
  \BibitemOpen
  \bibfield  {author} {\bibinfo {author} {\bibfnamefont {B.}~\bibnamefont
  {Kleihaus}}, \bibinfo {author} {\bibfnamefont {J.}~\bibnamefont {Kunz}},\
  and\ \bibinfo {author} {\bibfnamefont {E.}~\bibnamefont {Radu}},\ }\bibfield
  {title} {\bibinfo {title} {{Rotating Black Holes in Dilatonic
  Einstein-Gauss-Bonnet Theory}},\ }\href
  {https://doi.org/10.1103/PhysRevLett.106.151104} {\bibfield  {journal}
  {\bibinfo  {journal} {Phys. Rev. Lett.}\ }\textbf {\bibinfo {volume} {106}},\
  \bibinfo {pages} {151104} (\bibinfo {year} {2011})},\ \Eprint
  {https://arxiv.org/abs/1101.2868} {arXiv:1101.2868 [gr-qc]} \BibitemShut
  {NoStop}%
\bibitem [{\citenamefont {Pani}\ and\ \citenamefont
  {Cardoso}(2009)}]{Pani:2009wy}%
  \BibitemOpen
  \bibfield  {author} {\bibinfo {author} {\bibfnamefont {P.}~\bibnamefont
  {Pani}}\ and\ \bibinfo {author} {\bibfnamefont {V.}~\bibnamefont {Cardoso}},\
  }\bibfield  {title} {\bibinfo {title} {{Are black holes in alternative
  theories serious astrophysical candidates? The Case for
  Einstein-Dilaton-Gauss-Bonnet black holes}},\ }\href
  {https://doi.org/10.1103/PhysRevD.79.084031} {\bibfield  {journal} {\bibinfo
  {journal} {Phys. Rev. D}\ }\textbf {\bibinfo {volume} {79}},\ \bibinfo
  {pages} {084031} (\bibinfo {year} {2009})},\ \Eprint
  {https://arxiv.org/abs/0902.1569} {arXiv:0902.1569 [gr-qc]} \BibitemShut
  {NoStop}%
\bibitem [{\citenamefont {Delgado}\ \emph {et~al.}(2020)\citenamefont
  {Delgado}, \citenamefont {Herdeiro},\ and\ \citenamefont
  {Radu}}]{Delgado:2020rev}%
  \BibitemOpen
  \bibfield  {author} {\bibinfo {author} {\bibfnamefont {J.~F.~M.}\
  \bibnamefont {Delgado}}, \bibinfo {author} {\bibfnamefont {C.~A.~R.}\
  \bibnamefont {Herdeiro}},\ and\ \bibinfo {author} {\bibfnamefont
  {E.}~\bibnamefont {Radu}},\ }\bibfield  {title} {\bibinfo {title} {{Spinning
  black holes in shift-symmetric Horndeski theory}},\ }\href
  {https://doi.org/10.1007/JHEP04(2020)180} {\bibfield  {journal} {\bibinfo
  {journal} {JHEP}\ }\textbf {\bibinfo {volume} {04}},\ \bibinfo {pages}
  {180}},\ \Eprint {https://arxiv.org/abs/2002.05012} {arXiv:2002.05012
  [gr-qc]} \BibitemShut {NoStop}%
\bibitem [{\citenamefont {Sullivan}\ \emph {et~al.}(2020)\citenamefont
  {Sullivan}, \citenamefont {Yunes},\ and\ \citenamefont
  {Sotiriou}}]{Sullivan:2020zpf}%
  \BibitemOpen
  \bibfield  {author} {\bibinfo {author} {\bibfnamefont {A.}~\bibnamefont
  {Sullivan}}, \bibinfo {author} {\bibfnamefont {N.}~\bibnamefont {Yunes}},\
  and\ \bibinfo {author} {\bibfnamefont {T.~P.}\ \bibnamefont {Sotiriou}},\
  }\bibfield  {title} {\bibinfo {title} {{Numerical Black Hole Solutions in
  Modified Gravity Theories: Axial Symmetry Case}},\ }\href@noop {} {\
  (\bibinfo {year} {2020})},\ \Eprint {https://arxiv.org/abs/2009.10614}
  {arXiv:2009.10614 [gr-qc]} \BibitemShut {NoStop}%
\bibitem [{\citenamefont {Benkel}\ \emph {et~al.}(2017)\citenamefont {Benkel},
  \citenamefont {Sotiriou},\ and\ \citenamefont {Witek}}]{Benkel:2016rlz}%
  \BibitemOpen
  \bibfield  {author} {\bibinfo {author} {\bibfnamefont {R.}~\bibnamefont
  {Benkel}}, \bibinfo {author} {\bibfnamefont {T.~P.}\ \bibnamefont
  {Sotiriou}},\ and\ \bibinfo {author} {\bibfnamefont {H.}~\bibnamefont
  {Witek}},\ }\bibfield  {title} {\bibinfo {title} {{Black hole hair formation
  in shift-symmetric generalised scalar-tensor gravity}},\ }\href
  {https://doi.org/10.1088/1361-6382/aa5ce7} {\bibfield  {journal} {\bibinfo
  {journal} {Class. Quant. Grav.}\ }\textbf {\bibinfo {volume} {34}},\ \bibinfo
  {pages} {064001} (\bibinfo {year} {2017})},\ \Eprint
  {https://arxiv.org/abs/1610.09168} {arXiv:1610.09168 [gr-qc]} \BibitemShut
  {NoStop}%
\bibitem [{\citenamefont {Ripley}\ and\ \citenamefont
  {Pretorius}(2020)}]{Ripley:2019aqj}%
  \BibitemOpen
  \bibfield  {author} {\bibinfo {author} {\bibfnamefont {J.~L.}\ \bibnamefont
  {Ripley}}\ and\ \bibinfo {author} {\bibfnamefont {F.}~\bibnamefont
  {Pretorius}},\ }\bibfield  {title} {\bibinfo {title} {{Scalarized Black Hole
  dynamics in Einstein dilaton Gauss-Bonnet Gravity}},\ }\href
  {https://doi.org/10.1103/PhysRevD.101.044015} {\bibfield  {journal} {\bibinfo
   {journal} {Phys. Rev. D}\ }\textbf {\bibinfo {volume} {101}},\ \bibinfo
  {pages} {044015} (\bibinfo {year} {2020})},\ \Eprint
  {https://arxiv.org/abs/1911.11027} {arXiv:1911.11027 [gr-qc]} \BibitemShut
  {NoStop}%
\bibitem [{\citenamefont {Ripley}\ and\ \citenamefont
  {Pretorius}(2019)}]{Ripley:2019irj}%
  \BibitemOpen
  \bibfield  {author} {\bibinfo {author} {\bibfnamefont {J.~L.}\ \bibnamefont
  {Ripley}}\ and\ \bibinfo {author} {\bibfnamefont {F.}~\bibnamefont
  {Pretorius}},\ }\bibfield  {title} {\bibinfo {title} {{Gravitational collapse
  in Einstein dilaton-Gauss\textendash{}Bonnet gravity}},\ }\href
  {https://doi.org/10.1088/1361-6382/ab2416} {\bibfield  {journal} {\bibinfo
  {journal} {Class. Quant. Grav.}\ }\textbf {\bibinfo {volume} {36}},\ \bibinfo
  {pages} {134001} (\bibinfo {year} {2019})},\ \Eprint
  {https://arxiv.org/abs/1903.07543} {arXiv:1903.07543 [gr-qc]} \BibitemShut
  {NoStop}%
\bibitem [{\citenamefont {Nakashi}\ and\ \citenamefont
  {Kimura}(2020)}]{Nakashi:2020phm}%
  \BibitemOpen
  \bibfield  {author} {\bibinfo {author} {\bibfnamefont {K.}~\bibnamefont
  {Nakashi}}\ and\ \bibinfo {author} {\bibfnamefont {M.}~\bibnamefont
  {Kimura}},\ }\bibfield  {title} {\bibinfo {title} {{Towards rotating
  noncircular black holes in string-inspired gravity}},\ }\href
  {https://doi.org/10.1103/PhysRevD.102.084021} {\bibfield  {journal} {\bibinfo
   {journal} {Phys. Rev. D}\ }\textbf {\bibinfo {volume} {102}},\ \bibinfo
  {pages} {084021} (\bibinfo {year} {2020})},\ \Eprint
  {https://arxiv.org/abs/2008.04003} {arXiv:2008.04003 [gr-qc]} \BibitemShut
  {NoStop}%
\bibitem [{\citenamefont {Herdeiro}\ and\ \citenamefont
  {Radu}(2014)}]{Herdeiro:2014goa}%
  \BibitemOpen
  \bibfield  {author} {\bibinfo {author} {\bibfnamefont {C.~A.~R.}\
  \bibnamefont {Herdeiro}}\ and\ \bibinfo {author} {\bibfnamefont
  {E.}~\bibnamefont {Radu}},\ }\bibfield  {title} {\bibinfo {title} {{Kerr
  black holes with scalar hair}},\ }\href
  {https://doi.org/10.1103/PhysRevLett.112.221101} {\bibfield  {journal}
  {\bibinfo  {journal} {Phys. Rev. Lett.}\ }\textbf {\bibinfo {volume} {112}},\
  \bibinfo {pages} {221101} (\bibinfo {year} {2014})},\ \Eprint
  {https://arxiv.org/abs/1403.2757} {arXiv:1403.2757 [gr-qc]} \BibitemShut
  {NoStop}%
\bibitem [{\citenamefont {Herdeiro}\ and\ \citenamefont
  {Radu}(2015)}]{Herdeiro:2015waa}%
  \BibitemOpen
  \bibfield  {author} {\bibinfo {author} {\bibfnamefont {C.~A.~R.}\
  \bibnamefont {Herdeiro}}\ and\ \bibinfo {author} {\bibfnamefont
  {E.}~\bibnamefont {Radu}},\ }\bibfield  {title} {\bibinfo {title}
  {{Asymptotically flat black holes with scalar hair: a review}},\ }\href
  {https://doi.org/10.1142/S0218271815420146} {\bibfield  {journal} {\bibinfo
  {journal} {Int. J. Mod. Phys. D}\ }\textbf {\bibinfo {volume} {24}},\
  \bibinfo {pages} {1542014} (\bibinfo {year} {2015})},\ \Eprint
  {https://arxiv.org/abs/1504.08209} {arXiv:1504.08209 [gr-qc]} \BibitemShut
  {NoStop}%
\bibitem [{\citenamefont {Herdeiro}\ \emph {et~al.}(2018)\citenamefont
  {Herdeiro}, \citenamefont {Perapechka}, \citenamefont {Radu},\ and\
  \citenamefont {Shnir}}]{Herdeiro:2018daq}%
  \BibitemOpen
  \bibfield  {author} {\bibinfo {author} {\bibfnamefont {C.}~\bibnamefont
  {Herdeiro}}, \bibinfo {author} {\bibfnamefont {I.}~\bibnamefont
  {Perapechka}}, \bibinfo {author} {\bibfnamefont {E.}~\bibnamefont {Radu}},\
  and\ \bibinfo {author} {\bibfnamefont {Y.}~\bibnamefont {Shnir}},\ }\bibfield
   {title} {\bibinfo {title} {{Skyrmions around Kerr black holes and spinning
  BHs with Skyrme hair}},\ }\href {https://doi.org/10.1007/JHEP10(2018)119}
  {\bibfield  {journal} {\bibinfo  {journal} {JHEP}\ }\textbf {\bibinfo
  {volume} {10}},\ \bibinfo {pages} {119}},\ \Eprint
  {https://arxiv.org/abs/1808.05388} {arXiv:1808.05388 [gr-qc]} \BibitemShut
  {NoStop}%
\bibitem [{\citenamefont {Herdeiro}\ and\ \citenamefont
  {Oliveira}(2019)}]{Herdeiro:2019oqp}%
  \BibitemOpen
  \bibfield  {author} {\bibinfo {author} {\bibfnamefont {C.~A.~R.}\
  \bibnamefont {Herdeiro}}\ and\ \bibinfo {author} {\bibfnamefont {J.~a.
  M.~S.}\ \bibnamefont {Oliveira}},\ }\bibfield  {title} {\bibinfo {title} {{On
  the inexistence of solitons in Einstein\textendash{}Maxwell-scalar models}},\
  }\href {https://doi.org/10.1088/1361-6382/ab1859} {\bibfield  {journal}
  {\bibinfo  {journal} {Class. Quant. Grav.}\ }\textbf {\bibinfo {volume}
  {36}},\ \bibinfo {pages} {105015} (\bibinfo {year} {2019})},\ \Eprint
  {https://arxiv.org/abs/1902.07721} {arXiv:1902.07721 [gr-qc]} \BibitemShut
  {NoStop}%
\bibitem [{\citenamefont {Charmousis}\ \emph {et~al.}(2019)\citenamefont
  {Charmousis}, \citenamefont {Crisostomi}, \citenamefont {Gregory},\ and\
  \citenamefont {Stergioulas}}]{Charmousis:2019vnf}%
  \BibitemOpen
  \bibfield  {author} {\bibinfo {author} {\bibfnamefont {C.}~\bibnamefont
  {Charmousis}}, \bibinfo {author} {\bibfnamefont {M.}~\bibnamefont
  {Crisostomi}}, \bibinfo {author} {\bibfnamefont {R.}~\bibnamefont
  {Gregory}},\ and\ \bibinfo {author} {\bibfnamefont {N.}~\bibnamefont
  {Stergioulas}},\ }\bibfield  {title} {\bibinfo {title} {{Rotating Black Holes
  in Higher Order Gravity}},\ }\href
  {https://doi.org/10.1103/PhysRevD.100.084020} {\bibfield  {journal} {\bibinfo
   {journal} {Phys. Rev. D}\ }\textbf {\bibinfo {volume} {100}},\ \bibinfo
  {pages} {084020} (\bibinfo {year} {2019})},\ \Eprint
  {https://arxiv.org/abs/1903.05519} {arXiv:1903.05519 [hep-th]} \BibitemShut
  {NoStop}%
\bibitem [{\citenamefont {Babichev}\ \emph {et~al.}(2018)\citenamefont
  {Babichev}, \citenamefont {Charmousis}, \citenamefont {Esposito-Far\`ese},\
  and\ \citenamefont {Leh\'ebel}}]{Babichev:2018uiw}%
  \BibitemOpen
  \bibfield  {author} {\bibinfo {author} {\bibfnamefont {E.}~\bibnamefont
  {Babichev}}, \bibinfo {author} {\bibfnamefont {C.}~\bibnamefont
  {Charmousis}}, \bibinfo {author} {\bibfnamefont {G.}~\bibnamefont
  {Esposito-Far\`ese}},\ and\ \bibinfo {author} {\bibfnamefont
  {A.}~\bibnamefont {Leh\'ebel}},\ }\bibfield  {title} {\bibinfo {title}
  {{Hamiltonian unboundedness vs stability with an application to Horndeski
  theory}},\ }\href {https://doi.org/10.1103/PhysRevD.98.104050} {\bibfield
  {journal} {\bibinfo  {journal} {Phys. Rev. D}\ }\textbf {\bibinfo {volume}
  {98}},\ \bibinfo {pages} {104050} (\bibinfo {year} {2018})},\ \Eprint
  {https://arxiv.org/abs/1803.11444} {arXiv:1803.11444 [gr-qc]} \BibitemShut
  {NoStop}%
\bibitem [{\citenamefont {de~Rham}\ and\ \citenamefont
  {Zhang}(2019)}]{deRham:2019gha}%
  \BibitemOpen
  \bibfield  {author} {\bibinfo {author} {\bibfnamefont {C.}~\bibnamefont
  {de~Rham}}\ and\ \bibinfo {author} {\bibfnamefont {J.}~\bibnamefont
  {Zhang}},\ }\bibfield  {title} {\bibinfo {title} {{Perturbations of stealth
  black holes in degenerate higher-order scalar-tensor theories}},\ }\href
  {https://doi.org/10.1103/PhysRevD.100.124023} {\bibfield  {journal} {\bibinfo
   {journal} {Phys. Rev. D}\ }\textbf {\bibinfo {volume} {100}},\ \bibinfo
  {pages} {124023} (\bibinfo {year} {2019})},\ \Eprint
  {https://arxiv.org/abs/1907.00699} {arXiv:1907.00699 [hep-th]} \BibitemShut
  {NoStop}%
\bibitem [{\citenamefont {Ogawa}\ \emph {et~al.}(2016)\citenamefont {Ogawa},
  \citenamefont {Kobayashi},\ and\ \citenamefont {Suyama}}]{Ogawa:2015pea}%
  \BibitemOpen
  \bibfield  {author} {\bibinfo {author} {\bibfnamefont {H.}~\bibnamefont
  {Ogawa}}, \bibinfo {author} {\bibfnamefont {T.}~\bibnamefont {Kobayashi}},\
  and\ \bibinfo {author} {\bibfnamefont {T.}~\bibnamefont {Suyama}},\
  }\bibfield  {title} {\bibinfo {title} {{Instability of hairy black holes in
  shift-symmetric Horndeski theories}},\ }\href
  {https://doi.org/10.1103/PhysRevD.93.064078} {\bibfield  {journal} {\bibinfo
  {journal} {Phys. Rev. D}\ }\textbf {\bibinfo {volume} {93}},\ \bibinfo
  {pages} {064078} (\bibinfo {year} {2016})},\ \Eprint
  {https://arxiv.org/abs/1510.07400} {arXiv:1510.07400 [gr-qc]} \BibitemShut
  {NoStop}%
\bibitem [{\citenamefont {Kleihaus}\ \emph {et~al.}(2004)\citenamefont
  {Kleihaus}, \citenamefont {Kunz},\ and\ \citenamefont
  {Navarro-Lerida}}]{Kleihaus:2003sh}%
  \BibitemOpen
  \bibfield  {author} {\bibinfo {author} {\bibfnamefont {B.}~\bibnamefont
  {Kleihaus}}, \bibinfo {author} {\bibfnamefont {J.}~\bibnamefont {Kunz}},\
  and\ \bibinfo {author} {\bibfnamefont {F.}~\bibnamefont {Navarro-Lerida}},\
  }\bibfield  {title} {\bibinfo {title} {{Rotating dilaton black holes with
  hair}},\ }\href {https://doi.org/10.1103/PhysRevD.69.064028} {\bibfield
  {journal} {\bibinfo  {journal} {Phys. Rev. D}\ }\textbf {\bibinfo {volume}
  {69}},\ \bibinfo {pages} {064028} (\bibinfo {year} {2004})},\ \Eprint
  {https://arxiv.org/abs/gr-qc/0306058} {arXiv:gr-qc/0306058} \BibitemShut
  {NoStop}%
\bibitem [{\citenamefont {Kleihaus}\ and\ \citenamefont
  {Kunz}(2001)}]{Kleihaus:2000kg}%
  \BibitemOpen
  \bibfield  {author} {\bibinfo {author} {\bibfnamefont {B.}~\bibnamefont
  {Kleihaus}}\ and\ \bibinfo {author} {\bibfnamefont {J.}~\bibnamefont
  {Kunz}},\ }\bibfield  {title} {\bibinfo {title} {{Rotating hairy black
  holes}},\ }\href {https://doi.org/10.1103/PhysRevLett.86.3704} {\bibfield
  {journal} {\bibinfo  {journal} {Phys. Rev. Lett.}\ }\textbf {\bibinfo
  {volume} {86}},\ \bibinfo {pages} {3704} (\bibinfo {year} {2001})},\ \Eprint
  {https://arxiv.org/abs/gr-qc/0012081} {arXiv:gr-qc/0012081} \BibitemShut
  {NoStop}%
\bibitem [{\citenamefont {Barausse}\ and\ \citenamefont
  {Sotiriou}(2013)}]{Barausse:2013nwa}%
  \BibitemOpen
  \bibfield  {author} {\bibinfo {author} {\bibfnamefont {E.}~\bibnamefont
  {Barausse}}\ and\ \bibinfo {author} {\bibfnamefont {T.~P.}\ \bibnamefont
  {Sotiriou}},\ }\bibfield  {title} {\bibinfo {title} {{Black holes in
  Lorentz-violating gravity theories}},\ }\href
  {https://doi.org/10.1088/0264-9381/30/24/244010} {\bibfield  {journal}
  {\bibinfo  {journal} {Class. Quant. Grav.}\ }\textbf {\bibinfo {volume}
  {30}},\ \bibinfo {pages} {244010} (\bibinfo {year} {2013})},\ \Eprint
  {https://arxiv.org/abs/1307.3359} {arXiv:1307.3359 [gr-qc]} \BibitemShut
  {NoStop}%
\bibitem [{\citenamefont {Barausse}\ \emph {et~al.}(2016)\citenamefont
  {Barausse}, \citenamefont {Sotiriou},\ and\ \citenamefont
  {Vega}}]{Barausse:2015frm}%
  \BibitemOpen
  \bibfield  {author} {\bibinfo {author} {\bibfnamefont {E.}~\bibnamefont
  {Barausse}}, \bibinfo {author} {\bibfnamefont {T.~P.}\ \bibnamefont
  {Sotiriou}},\ and\ \bibinfo {author} {\bibfnamefont {I.}~\bibnamefont
  {Vega}},\ }\bibfield  {title} {\bibinfo {title} {{Slowly rotating black holes
  in Einstein-\ae{}ther theory}},\ }\href
  {https://doi.org/10.1103/PhysRevD.93.044044} {\bibfield  {journal} {\bibinfo
  {journal} {Phys. Rev. D}\ }\textbf {\bibinfo {volume} {93}},\ \bibinfo
  {pages} {044044} (\bibinfo {year} {2016})},\ \Eprint
  {https://arxiv.org/abs/1512.05894} {arXiv:1512.05894 [gr-qc]} \BibitemShut
  {NoStop}%
\end{thebibliography}%

\clearpage
\appendix

\section*{Supplemental Material}

In this Supplemental Material we show some mathematical details that help identify the invertibility of tensors, in particular $T_{\mu\nu}$ and $M_{\mu\nu}$, as presented in the main text. Our discussion here follows Carter's work in \cite{Carter:1969zz}. As in that work, we derive results in $n$ dimensions with a group of isometries whose surfaces of transitivity are $p$-dimensional. In addition, we assume that the group is Abelian. To apply the results here to the proof in our main text, one sets $n=4$ and lets $\xi^\mu$ and $\chi^\mu$ be the generators of the group. Note that the orthogonal transitivity of this group in the 4-dimensional spacetime is referred to as ``circularity'' in the main text.

We begin with Carter's definition of tensor invertibility. 
Let $\svctv{\zeta}{i}{\mu}\,(i=1,\cdots,p)$ be a set of independent vectors spanning a $p$-dimensional surface element at a point $P$, and let $\svcov{\eta}{j}{\nu}\,(i=p+1,\cdots,n)$ be a set of independent vectors spanning the orthogonal $(n-p)$ element at $P$. A tensor $\tensor{T}{_{\mu_1\cdots\mu_r}^{\nu_1\cdots\nu_s}}$ is said to be invertible in the $p$ element at $P$ if all scalars contracted as
\begin{align}
    \tensor{T}{_{\mu_1\cdots\mu_r}^{\nu_1\cdots\nu_s}}\,
    \svctv{\zeta}{i_1}{\mu_1}\cdots\svctv{\zeta}{i_r}{\mu_r}\,
    \svcov{\eta}{j_1}{\nu_1}\cdots\svcov{\eta}{j_s}{\nu_s} \label{eq:cscalar}
\end{align}
are invariant when $\svctv{\zeta}{i}{\mu}\rightarrow-\svctv{\zeta}{i}{\mu}$ and $\svcov{\eta}{j}{\nu}\rightarrow\svcov{\eta}{j}{\nu}$ for all $i$, $j$. For convenience let us call all scalars with the form \eqref{eq:cscalar} \emph{Carter scalars}. The statement that a tensor is invertible in an element is equivalent to the statement that its Carter scalars vanish whenever $r$ is odd. Finally, a tensor is said to be invertible in a group if it is invertible to the surfaces of transitivity of the group. In our main text, the word ``invertible'' is short for ``invertible in the group generated by $\xi^\mu$ and $\chi^\mu$.''

Remarkably, Carter also mentioned another type of invertibility, namely skew invertibility, in which case the Carter scalars change sign when $\svctv{\zeta}{i}{\mu}\rightarrow-\svctv{\zeta}{i}{\mu}$ and $\svcov{\eta}{j}{\nu}\rightarrow\svcov{\eta}{j}{\nu}$, or equivalently they vanish whenever $r$ is even. Skew invertibility is not considered in our main text. However, it will be considered here, because it brings completeness and requires little extra work.

As suggested by their definition, invertibility and skew invertibility describes how a tensor responds to the simultaneous inversion of all Killing vectors -- an invertible tensor is invariant, a skew invertible tensor changes its sign, and a generic tensor has its invertible part and skew invertible part each transform accordingly. 
It sounds reasonable to guess that two tensors each with a certain type of invertibility (invertible or skew invertible) should combine into a third tensor with a certain type of invertibility through common tensor operations. In particular, the sum of invertible tensors should still be invertible, and the sum of skew invertible tensors should still be skew invertible. We will show below that this guess is true in a very broad sense. 
The preservation of invertibililty through operations is important, as it allows one to identify the invertibility of a tensor by looking at how it is constructed. 
In the main text, the invertibility of the Ricci tensor is related to the invertibility of $T_{\mu\nu}$ and $M_{\mu\nu}$ through the modified Einstein equation, and the invertibility of $T_{\mu\nu}$ and $M_{\mu\nu}$ is then related to the invertibility of their building blocks such as the scalar fields, the curvature tensor, and their derivatives. 

The rest of this Supplemental Material is formulated as lemmas and proofs, and they proceed as follows. First, we identify operations that preserve invertibility, starting with elementary operations including linear combination, trace, tensor product, and contraction. 
Then, we continue to investigate the action of derivatives. We show that a generic covariant derivative preserves invertibiltiy when the group is orthogonally transitive. But for covariant derivatives of scalar fields and exterior derivatives of vector fields, invertibility is preserved regardless of orthogonal transitivity. 
Finally, we explore the invertibility of tensors that are common ingredients in various theories of gravity. We show that generic scalars and the metric tensor are manifestly invertible, the Levi-Civita tensor is invertible or skew invertible depending on the dimension, and the Riemann curvature tensor is invertible when the group is orthogonally transitive.

\subsection{Preservation of Invertibility Through Elementry Operations}
\begin{lemma}\label{lm:inv_sum}
Let $P$ be a point on an $n$-dimensional manifold with a pseudo-Riemannian metric, such that there is a $p$-dimensional nonnull surface element at $P$. 
Let $T$ and $S$ be tensors of the same rank and with the same type of invertibility in the $p$ element at $P$. Then any linear combination of $T$ and $S$, $(aT+bS)$ where $a$ and $b$ are numbers, has the same type of invertibility as $T$ and $S$.
\end{lemma}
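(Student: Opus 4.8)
The plan is to reduce the entire claim to the linearity of the Carter-scalar construction in Eq.~\eqref{eq:cscalar}. First I would observe that since $T$ and $S$ are of the same rank and valence, say type $(r,s)$, the combination $aT+bS$ is again of type $(r,s)$, so all three tensors admit Carter scalars built from the \emph{same} contraction pattern. Fixing the labels $i_1,\dots,i_r$ and $j_1,\dots,j_s$ once and for all, I would abbreviate the attached product of contracting vectors as $\Pi\equiv\svctv{\zeta}{i_1}{\mu_1}\cdots\svctv{\zeta}{i_r}{\mu_r}\,\svcov{\eta}{j_1}{\nu_1}\cdots\svcov{\eta}{j_s}{\nu_s}$. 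Because contraction with a fixed set of vectors distributes over tensor addition and commutes with multiplication by numbers, the Carter scalar of $aT+bS$ is simply
\begin{equation}\label{eq:lincomb}
\tensor{(aT+bS)}{_{\mu_1\cdots\mu_r}^{\nu_1\cdots\nu_s}}\,\Pi
= a\,\tensor{T}{_{\mu_1\cdots\mu_r}^{\nu_1\cdots\nu_s}}\,\Pi
+ b\,\tensor{S}{_{\mu_1\cdots\mu_r}^{\nu_1\cdots\nu_s}}\,\Pi\,.
\end{equation}

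Next I would split into the two cases according to the shared type of invertibility, using the parity restatement given after Eq.~\eqref{eq:cscalar}: invertibility means every Carter scalar with $r$ odd vanishes, while skew invertibility means every Carter scalar with $r$ even vanishes. Suppose first that both $T$ and $S$ are invertible, and fix any Carter scalar with an odd number $r$ of lower ($\zeta$-contracted) indices. By hypothesis both terms on the right-hand side of Eq.~\eqref{eq:lincomb} vanish, hence so does the left-hand side. Since this holds for every odd-$r$ contraction and every choice of labels, all odd-$r$ Carter scalars of $aT+bS$ are zero, which is precisely invertibility of $aT+bS$. The skew-invertible case is verbatim identical with ``odd'' replaced by ``even'' throughout.

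The main obstacle here is essentially absent: the content is pure linearity, and the only points requiring care are matching Carter's parity criterion and noting that $T$ and $S$ must share the same valence for the term-by-term comparison in Eq.~\eqref{eq:lincomb} to make sense, which is exactly what the ``same rank'' hypothesis guarantees. Equivalently, I could argue directly from the transformation law: under $\svctv{\zeta}{i}{\mu}\to-\svctv{\zeta}{i}{\mu}$ (with $\svcov{\eta}{j}{\nu}$ fixed) each Carter scalar of a type-$(r,s)$ tensor acquires the factor $(-1)^r$, so an invertible (respectively skew invertible) tensor has Carter scalars that are invariant (respectively change sign). Since $T$ and $S$ transform the same way, Eq.~\eqref{eq:lincomb} shows $aT+bS$ inherits that identical transformation behaviour, and therefore the same type of invertibility.
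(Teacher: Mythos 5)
Your proof is correct and follows essentially the same route as the paper's: expand the Carter scalar of $aT+bS$ by linearity of contraction and note that both resulting terms vanish for the relevant parity of $r$. The closing remark via the $(-1)^r$ transformation law is just a restatement of the same argument in terms of Carter's original sign-flip definition, so nothing substantively new is added or needed.
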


\begin{proof}
Let $\svctv{\zeta}{i}{\mu}\,(i=1,\cdots,p)$ be a set of independent vectors spanning the $p$-dimensional surface element at $P$, and let $\svctv{\eta}{i}{\mu}\,(i=p+1,\cdots,n)$ be a set of independent vectors spanning the orthogonal $(n-p)$ element at $P$. 

Consider the Carter scalar of $(aT+bS)$ at $P$:
\begin{align}
    &\tensor{(aT+bS)}{_{\mu_1\cdots\mu_r}_{\nu_1\cdots\nu_s}}
    \svctv{\zeta}{i_1}{\mu_1}\cdots\svctv{\zeta}{i_r}{\mu_r}
    \svctv{\eta}{j_1}{\nu_1}\cdots\svctv{\eta}{j_s}{\nu_s} \notag\\
    =&a\tensor{T}{_{\mu_1\cdots\mu_r}_{\nu_1\cdots\nu_s}}
    \svctv{\zeta}{i_1}{\mu_1}\cdots\svctv{\zeta}{i_r}{\mu_r}
    \svctv{\eta}{j_1}{\nu_1}\cdots\svctv{\eta}{j_s}{\nu_s} \notag\\
    &+b\tensor{S}{_{\mu_1\cdots\mu_r}_{\nu_1\cdots\nu_s}}
    \svctv{\zeta}{i_1}{\mu_1}\cdots\svctv{\zeta}{i_r}{\mu_r}
    \svctv{\eta}{j_1}{\nu_1}\cdots\svctv{\eta}{j_s}{\nu_s}. \label{eq:lm_sum}
\end{align}
If both $T$ and $S$ are invertible, then both terms on the right-hand side of (\ref{eq:lm_sum}) vanish whenever $r$ is odd, indicating that $(aT+bS)$ is invertible. If both $T$ and $S$ are skew invertible, then both terms on the right-hand side of (\ref{eq:lm_sum}) vanish whenever $r$ is even, indicating that $(aT+bS)$ is skew invertible. 
\end{proof}

\begin{lemma}\label{lm:inv_tr}
Let $P$ be a point on an $n$-dimensional manifold with a pseudo-Riemannian metric, such that there is a $p$-dimensional nonnull surface element at $P$. Let $T$ be a tensor which is invertible or skew invertible in the $p$ element at $P$. Then any trace of $T$, $\mathrm{tr}(T)$, has the same type of invertibility as $T$.
\end{lemma}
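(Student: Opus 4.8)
The plan is to reduce a Carter scalar of $\mathrm{tr}(T)$ to a sum of Carter scalars of $T$ and then conclude by the same parity argument used in Lemma~\ref{lm:inv_sum}. First I would fix a basis of the tangent space at $P$ adapted to the splitting: the vectors $\svctv{\zeta}{i}{\mu}\,(i=1,\dots,p)$ spanning the nonnull $p$-element $\Sigma$, together with vectors spanning the orthogonal complement $\Sigma^\perp$. Because the $p$-element is assumed nonnull, the Gram matrix $g(\svctv{\zeta}{i}{\cdot},\svctv{\zeta}{k}{\cdot})$ is invertible, the tangent space decomposes orthogonally as $\Sigma\oplus\Sigma^\perp$, and the inverse metric splits with no mixed term,
\[
g^{\mu\nu} = \Pi_\Sigma^{\mu\nu} + \Pi_\perp^{\mu\nu},
\]
where $\Pi_\Sigma^{\mu\nu}$ is a linear combination of products $\svctv{\zeta}{i}{\mu}\,\svctv{\zeta}{k}{\nu}$ and $\Pi_\perp^{\mu\nu}$ is built purely from the orthogonal directions. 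The absence of a cross $\zeta$--$\eta$ term is the structural fact that drives the lemma.

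Next I would evaluate an arbitrary Carter scalar of $\mathrm{tr}(T)$. Writing the trace as a contraction of the two relevant indices of $T$ with $g^{\mu\nu}$ (or, when one of the two indices is already raised, with $\delta^\mu_\nu$, whose resolution of the identity splits into a $\zeta$-sector and an $\eta$-sector in the same way), and contracting every remaining index with a basis vector, I would substitute $g^{\mu\nu}=\Pi_\Sigma^{\mu\nu}+\Pi_\perp^{\mu\nu}$. The $\Pi_\Sigma$ piece feeds two $\zeta$-vectors into the contraction while the $\Pi_\perp$ piece feeds in two orthogonal vectors, so in either case the traced pair contributes an \emph{even} number of $\zeta$'s. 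Hence every term produced is itself a Carter scalar of $T$, and its count of $\zeta$-contractions has the same parity as the count appearing in the original Carter scalar of $\mathrm{tr}(T)$.

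The conclusion then follows by parity. If $T$ is invertible, its Carter scalars with an odd number of $\zeta$'s all vanish; since taking the trace shifts the $\zeta$-count only by an even amount, every odd Carter scalar of $\mathrm{tr}(T)$ is a sum of odd Carter scalars of $T$ and therefore vanishes, so $\mathrm{tr}(T)$ is invertible. The skew-invertible case is identical, with ``odd'' replaced by ``even.'' I expect the main obstacle to be the first step, namely justifying the clean orthogonal resolution of $g^{\mu\nu}$ with no cross term: this is precisely where the nonnull hypothesis enters and where the preservation of $\zeta$-parity—and hence of the invertibility type—originates. Once that decomposition is established, the remaining counting argument is routine.
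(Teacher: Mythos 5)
Your proposal is correct and takes essentially the same route as the paper's own proof: the paper likewise uses the nonnull hypothesis to expand the inverse metric at $P$ as a sum of $\zeta\zeta$ products plus orthogonal $\eta\eta$ products with no cross term, substitutes this into an arbitrary Carter scalar of $\mathrm{tr}(T)$, and observes that the traced pair contributes either two $\zeta$'s or two $\eta$'s, so the $\zeta$-count parity is preserved and the invertibility type carries over. Your extra remark about handling a trace via $\delta^\mu_\nu$ when one index is already raised is a harmless refinement of the same argument.
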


\begin{proof}
Let $\svctv{\zeta}{i}{\mu}\,(i=1,\cdots,p)$ be a set of independent vectors spanning the $p$-dimensional surface element at $P$, and let $\svctv{\eta}{i}{\mu}\,(i=p+1,\cdots,n)$ be a set of independent vectors spanning the orthogonal $(n-p)$ element at $P$. Given that the $p$ element is nonnull, the metric at $P$ takes the following form:
\begin{align}
    g^{\mu\nu}=\sum_{i,j=1}^p\! \tensor*[^{(ij)}]{A}{} \svctv{\zeta}{i}{\mu}\svctv{\zeta}{j}{\nu} + \sum_{i,j=p+1}^n\!\!\! \tensor*[^{(ij)}]{B}{} \svctv{\eta}{i}{\mu}\svctv{\eta}{j}{\nu},
\end{align}
where $\tensor*[^{(ij)}]{A}{}$ and $\tensor*[^{(ij)}]{B}{}$ are numbers. 

Consider the Carter scalar of $\mathrm{tr}(T)$ at $P$:
\begin{align}
    &\tensor{\mathrm{tr}(T)}{_{\mu_1\cdots\mu_r\nu_1\cdots\nu_s}} \svctv{\zeta}{i_1}{\mu_1}\cdots\svctv{\zeta}{i_r}{\mu_r}
    \svctv{\eta}{j_1}{\nu_1}\cdots\svctv{\eta}{j_s}{\nu_s} \notag\\
    =&g^{\rho\sigma}\tensor{T}{_{\rho\sigma\mu_1\cdots\mu_r\nu_1\cdots\nu_s}} \svctv{\zeta}{i_1}{\mu_1}\cdots\svctv{\zeta}{i_r}{\mu_r}
    \svctv{\eta}{j_1}{\nu_1}\cdots\svctv{\eta}{j_s}{\nu_s} \notag\\
    =&\sum_{m,n=1}^p\! \tensor*[^{(mn)}]{A}{}
    \svctv{\zeta}{m}{\rho}\svctv{\zeta}{n}{\sigma}
    \tensor{T}{_{\rho\sigma\mu_1\cdots\mu_r\nu_1\cdots\nu_s}} \notag\\
    &\qquad\quad\times\svctv{\zeta}{i_1}{\mu_1}\cdots\svctv{\zeta}{i_r}{\mu_r}
    \svctv{\eta}{j_1}{\nu_1}\cdots\svctv{\eta}{j_s}{\nu_s} \notag\\
    &+\sum_{m,n=p+1}^{n}\!\!\! \tensor*[^{(mn)}]{B}{}
    \svctv{\eta}{m}{\rho}\svctv{\eta}{n}{\sigma}
    \tensor{T}{_{\rho\sigma\mu_1\cdots\mu_r\nu_1\cdots\nu_s}} \notag\\
    &\qquad\quad\times\svctv{\zeta}{i_1}{\mu_1}\cdots\svctv{\zeta}{i_r}{\mu_r}
    \svctv{\eta}{j_1}{\nu_1}\cdots\svctv{\eta}{j_s}{\nu_s}. \label{eq:lm_tr}
\end{align}
The first term sums over Carter scalars of $T$ with $(r+2)$ $\zeta$'s, and the second term sums over Carter scalars of $T$ with $r$ $\zeta$'s. If $T$ is invertible, then both sums vanish whenever $r$ is odd, indicating that $\mathrm{tr}(T)$ is invertible. Similarly one can also prove that $\mathrm{tr}(T)$ is skew invertible if $T$ is skew invertible. 
\end{proof}

\begin{lemma}\label{lm:inv_prod}
Let $P$ be a point on an $n$-dimensional manifold with a pseudo-Riemannian metric, such that there is a $p$-dimensional nonnull surface element at $P$. 
Let $T$ and $S$ be tensors which are invertible or skew invertible in the $p$ element at $P$. Then the tensor product of $T$ and $S$, $T\otimes S$, is invertible $T$ and $S$ are both invertible or both skew invertible, and is skew invertible $P$ if one of $T$ and $S$ is invertible and the other is skew invertible.
\end{lemma}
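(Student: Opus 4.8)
The plan is to reuse the Carter-scalar bookkeeping that drives the proofs of Lemmas~\ref{lm:inv_sum} and~\ref{lm:inv_tr}. I fix independent vectors $\svctv{\zeta}{i}{\mu}\,(i=1,\dots,p)$ spanning the $p$-element at $P$ and $\svctv{\eta}{i}{\mu}\,(i=p+1,\dots,n)$ spanning the orthogonal $(n-p)$-element, and I consider an arbitrary Carter scalar of $T\otimes S$, i.e.\ the contraction of every index of the product against one of these basis vectors. The single structural fact I would exploit is that the indices of $T\otimes S$ partition into the block inherited from $T$ and the disjoint block inherited from $S$; since the product tensor is literally $T$ times $S$ on these two blocks, any Carter scalar of $T\otimes S$ factorizes as (a Carter scalar of $T$)$\times$(a Carter scalar of $S$).

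With the factorization established, I would carry out the argument in two equivalent ways and present whichever reads more cleanly alongside the earlier lemmas. The conceptual version uses the sign characterization of invertibility: under the inversion $\svctv{\zeta}{i}{\mu}\to-\svctv{\zeta}{i}{\mu}$ with $\svctv{\eta}{i}{\mu}$ held fixed, a Carter scalar of an invertible tensor is unchanged (sign $+1$) while that of a skew invertible tensor changes sign ($-1$); because the scalar of $T\otimes S$ is the product of the two individual scalars, its sign under inversion is the product of the two signs, giving $(+)(+)=+$, $(-)(-)=+$, and $(+)(-)=-$, which is precisely the asserted rule. The explicit version, matching the style of Lemmas~\ref{lm:inv_sum} and~\ref{lm:inv_tr}, writes $r_1$ and $r_2$ for the numbers of $\zeta$'s contracted against the $T$-block and the $S$-block, so the total $\zeta$-count is $r=r_1+r_2$; invertibility of a factor forces its nonvanishing contributions to have an even $\zeta$-count and skew invertibility an odd one, and a short parity check over the three cases shows $r$ is even in the two like cases and odd in the mixed case, i.e.\ $T\otimes S$ is invertible, invertible, and skew invertible respectively.

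I do not anticipate a genuine obstacle, as this is essentially parity bookkeeping. The only point demanding care is the factorization step: I must verify that contracting the $T$- and $S$-blocks separates cleanly, which is immediate for an honest tensor product but would require a remark if indices were raised or lowered beforehand. As in the preceding lemmas I would handle all indices uniformly, invoking the invertibility of the metric together with Lemma~\ref{lm:inv_tr} to move indices up or down without altering the type of invertibility. A second minor point is to note that $r_1$ and $r_2$ vary independently over all admissible values as the contracting vectors range over the two bases, so that the conclusion ``vanishes whenever $r$ is odd (respectively even)'' is genuinely established for the product and not merely for one contraction pattern.
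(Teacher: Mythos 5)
Your proposal is correct and takes essentially the same route as the paper's proof: both rest on the observation that every Carter scalar of $T\otimes S$ factorizes into (a Carter scalar of $T$)$\times$(a Carter scalar of $S$), followed by parity bookkeeping on the $\zeta$-counts $r_1+r_2=r$ (the paper phrases it contrapositively, e.g.\ if $r_1+r_2$ is odd then one of $r_1$, $r_2$ is odd and the corresponding factor vanishes). Your alternative sign-under-inversion presentation is simply the paper's own definition of invertible versus skew invertible restated, so it differs only in exposition, not in substance.
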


\begin{proof}
Let $\svctv{\zeta}{i}{\mu}\,(i=1,\cdots,p)$ be a set of independent vectors spanning the $p$-dimensional surface element at $P$, and let $\svctv{\eta}{i}{\mu}\,(i=p+1,\cdots,n)$ be a set of independent vectors spanning the orthogonal $(n-p)$ element at $P$. 

Consider the Carter scalar of $T\otimes S$ at $P$:
\begin{align}
    &\tensor{(T\otimes S)}{_{\mu_1\cdots\mu_r\rho_1\cdots\rho_u \nu_1\cdots\nu_s\sigma_1\cdots\sigma_v}}  \notag\\
    &\times \svctv{\zeta}{i_1}{\mu_1}\cdots\svctv{\zeta}{i_r}{\mu_r}
    \svctv{\zeta}{j_1}{\nu_1}\cdots\svctv{\zeta}{j_s}{\mu_s} \notag\\
    &\times \svctv{\eta}{k_1}{\rho_1}\cdots\svctv{\eta}{k_u}{\rho_u}
    \svctv{\eta}{l_1}{\sigma_1}\cdots\svctv{\eta}{l_v}{\sigma_v} \notag\\
    =&\tensor{T}{_{\mu_1\cdots\mu_r\rho_1\cdots\rho_u}}
    \svctv{\zeta}{i_1}{\mu_1}\cdots\svctv{\zeta}{i_r}{\mu_r}
    \svctv{\eta}{k_1}{\rho_1}\cdots\svctv{\eta}{k_u}{\rho_u} \notag\\
    &\times\tensor{S}{_{\nu_1\cdots\nu_s\sigma_1\cdots\sigma_v}}
    \svctv{\zeta}{j_1}{\nu_1}\cdots\svctv{\zeta}{j_s}{\mu_s}
    \svctv{\eta}{l_1}{\sigma_1}\cdots\svctv{\eta}{l_v}{\sigma_v}. 
    \label{eq:lm_prod}
\end{align}
If $T$ and $S$ are both invertible, consider that $(r+s)$ is odd, then either $r$ or $s$ is odd, making either the $T$ part or the $S$ part on the right-hand side vanish. So in this case $T\otimes S$ is invertible. Similarly one can also show that $T\otimes S$ is invertible if $T$ and $S$ are both skew invertible. 

If $T$ is invertible and $S$ is skew invertible, consider that $(r+s)$ is even, then $r$ and $s$ are either both odd or both even, making either the $T$ part or the $S$ part on the right-hand side vanish. So in this case $T\otimes S$ is skew invertible. Similarly one can also show that $T\otimes S$ is skew invertible if $T$ is skew invertible and $S$ is invertible. 
\end{proof}

\begin{lemma}\label{lm:inv_contr}
Let $P$ be a point on an $n$-dimensional manifold with a pseudo-Riemannian metric, such that there is a $p$-dimensional nonnull surface element at $P$. 
Let $T$ and $S$ be tensors which are invertible or skew invertible in the $p$ element at $P$. Then any contraction of $T$ and $S$, $T\cdot S$, is invertible if $T$ and $S$ are both invertible or both skew invertible, and is skew invertible if one of $T$ and $S$ is invertible and the other is skew invertible.
\end{lemma}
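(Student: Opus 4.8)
The plan is to observe that a contraction is nothing more than a tensor product followed by a trace, so that Lemma~\ref{lm:inv_prod} and Lemma~\ref{lm:inv_tr} can simply be composed. Writing both tensors with all indices lowered (which costs no generality, since the metric raises and lowers indices), a single contraction of $T$ against $S$ takes the form
\begin{align}
(T\cdot S)_{\cdots} = g^{\rho\sigma}\, T_{\cdots\rho\cdots}\, S_{\cdots\sigma\cdots} = \mathrm{tr}\!\left(T\otimes S\right),
\end{align}
where the metric contracts one index inherited from $T$ against one index inherited from $S$. The conceptual point that makes the reduction work is that $T\otimes S$ is itself a single tensor at $P$, so a contraction of one of its $T$-indices with one of its $S$-indices is precisely ``a trace of $T\otimes S$'' in the sense of Lemma~\ref{lm:inv_tr}, even though the two slots originate from different factors.

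First I would invoke Lemma~\ref{lm:inv_prod} to fix the invertibility type of the product $T\otimes S$: it is invertible when $T$ and $S$ share the same type (both invertible or both skew invertible), and skew invertible when their types differ. Next I would apply Lemma~\ref{lm:inv_tr} to the single tensor $T\otimes S$, which guarantees that taking a trace preserves whatever type $T\otimes S$ carries. Composing these two statements returns exactly the claim: $T\cdot S$ is invertible precisely when $T$ and $S$ are both invertible or both skew invertible, and skew invertible when one is invertible and the other skew invertible. A contraction over several index pairs simultaneously is then handled by iterating the single-contraction argument, each step being a trace of the tensor produced by the previous step; since every trace leaves the type unchanged, the final type is still the one dictated by Lemma~\ref{lm:inv_prod} for the original product.

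I expect the main obstacle to be bookkeeping rather than mathematics. The one thing that genuinely needs checking is that the ``cross-factor'' trace really falls under the hypotheses of Lemma~\ref{lm:inv_tr}; but that lemma is stated for an arbitrary trace of a single tensor, so it applies verbatim once $T\otimes S$ is regarded as one tensor, and the nonnull-element decomposition of $g^{\rho\sigma}$ used there is available without any circularity assumption. No fresh parity count is required, because the parity argument underlying both the product and trace lemmas has already absorbed the two contracted indices: the trace adds either zero or two to the number of $\svctv{\zeta}{i}{\mu}$ contractions, and the tensor product concatenates the $\zeta$-counts of $T$ and $S$, so the parity of the total $\zeta$-count behaves exactly as claimed.
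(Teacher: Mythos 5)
Your proposal is correct and follows essentially the same route as the paper's own proof: reduce the contraction to a tensor product followed by tracing, fix the type of $T\otimes S$ via Lemma~\ref{lm:inv_prod}, and then apply Lemma~\ref{lm:inv_tr} to conclude the type is preserved under the trace. The extra checks you flag (that the cross-factor trace falls under Lemma~\ref{lm:inv_tr}, and the iteration over multiple contracted pairs) are sound and merely make explicit what the paper leaves implicit.
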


\begin{proof}
Since any contraction can be treated as a tensor product followed by a series of tracing, we may apply Lemma \ref{lm:inv_prod} and Lemma \ref{lm:inv_tr}. In particular, Lemma \ref{lm:inv_prod} suggests that $T\otimes S$ is invertible if $T$ and $S$ are both invertible or both skew invertible, and is skew invertible if one of $T$ and $S$ is invertible and the other is skew invertible. Then Lemma \ref{lm:inv_tr} suggests that $T\cdot S=\mathrm{tr}(T\otimes S)$ has the same type of invertiblity of $T\otimes S$.
\end{proof}

\subsection{Preservation of Invertibility Through Derivatives}
\begin{lemma} \label{lm:inv_chris}
Let $\mathcal{U}$ be an open subregion of an $n$-dimensional manifold with a pseudo-Riemannian metric, such that there is an Abelian $p$-parameter isometry group which is orthogonally transitive in $\mathcal{U}$.
Consequently, in $\mathcal{U}$ there exist $p$ independent Killing vectors generating the surfaces of transitivity, and $(n-p)$ independent vectors orthogonal to the surfaces of transitivity, such that they all commute. 
Let $\svctv{\zeta}{i}{\mu}\,(i=1,\cdots,p)$ be the Killing vectors, and $\svctv{\eta}{i}{\mu}\,(i=p+1,\cdots,n)$ be the orthogonal vectors. Then $\forall i$, $\nabla\svctv{\zeta}{i}{}$ is skew invertible, and $\nabla\svctv{\eta}{i}{}$ is invertible, in the group everywhere in $\mathcal{U}$. 
\end{lemma}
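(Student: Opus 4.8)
The plan is to unpack the definitions: since $\nabla \zeta_i$ and $\nabla \eta_i$ are rank-two tensors, skew invertibility of the former amounts to the vanishing of its Carter scalars containing an \emph{even} number of contracted Killing vectors (the $r=0$ and $r=2$ scalars), while invertibility of the latter amounts to the vanishing of those with an \emph{odd} number (the $r=1$ scalars). I would assemble three elementary tools: the Killing equation, which renders $\nabla_\mu (\zeta_i)_\nu$ antisymmetric; the mutual commutativity of all the $\zeta$'s and $\eta$'s, granted by the Abelian and orthogonally transitive hypotheses; and the orthogonality relations $(\zeta_i)^\mu (\eta_j)_\mu = 0$, which may be freely differentiated since they hold identically.

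First I would dispatch the $r=2$ scalars $(\zeta_a)^\mu (\zeta_b)^\nu \nabla_\mu (\zeta_i)_\nu$. Killing antisymmetry makes this antisymmetric under exchanging the two contracted Killing slots, whereas the commutativity of the Killing fields (e.g. $[\zeta_a,\zeta_i]=0$) supplies a second relation under which one contracted slot may be traded with the differentiated label \emph{symmetrically}. Because these two exchanges act as conjugate transpositions of the three Killing labels yet carry opposite signs, iterating them forces the scalar to equal minus itself, hence to vanish.

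Next come the $r=0$ scalars $W_{ab}\equiv (\eta_a)^\mu (\eta_b)^\nu \nabla_\mu (\zeta_i)_\nu$, which I expect to be the crux and the sole place where orthogonal transitivity is genuinely used. I would differentiate $(\zeta_i)^\nu (\eta_b)_\nu = 0$ along $\eta_a$; the Leibniz rule produces $W_{ab}$ together with a term $(\zeta_i)_\nu (\eta_a)^\mu \nabla_\mu (\eta_b)^\nu$ which, by $[\eta_a,\eta_b]=0$, is symmetric under $a\leftrightarrow b$. Since $W_{ab}$ is itself antisymmetric (Killing) yet is forced to equal a symmetric quantity, it must vanish. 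This completes the proof that $\nabla \zeta_i$ is skew invertible.

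Finally, the invertibility of $\nabla \eta_i$ follows by bootstrapping off the previous step. Each of the two $r=1$ scalars of $\nabla \eta_i$ reduces to an already-vanishing $r=0$ scalar of some $\nabla \zeta$: the scalar $(\zeta_a)^\mu (\eta_b)^\nu \nabla_\mu (\eta_i)_\nu$ is turned by $[\zeta_a,\eta_i]=0$ into $(\eta_i)^\mu (\eta_b)^\nu \nabla_\mu (\zeta_a)_\nu = 0$, while the scalar with the two contraction vectors exchanged is handled by differentiating $(\eta_i)^\nu (\zeta_a)_\nu = 0$ along $\eta_b$ and again invoking the vanishing of $W$. Hence $\nabla \eta_i$ is invertible. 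The main obstacle is concentrated in the $r=0$ step: there the availability of \emph{commuting} orthogonal vectors — guaranteed precisely because orthogonal transitivity makes the orthogonal distribution integrable — is what converts the differentiated orthogonality relation into the symmetric-versus-antisymmetric clash that annihilates the scalar. By contrast, the $r=2$ permutation bookkeeping is merely fiddly and relies only on the Abelian Killing structure, and the $\nabla \eta_i$ result is essentially free once $\nabla \zeta_i$ is controlled.
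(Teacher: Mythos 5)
Your proposal is correct and follows essentially the same route as the paper: the same splitting into the $r=2$, $r=0$, and two $r=1$ Carter scalars, the same interplay of Killing antisymmetry, commutativity, and differentiated orthogonality relations, and the same reduction of the $\nabla\eta$ scalars to the already-vanishing $\nabla\zeta$ ones. Your symmetric-versus-antisymmetric phrasing of the $r=0$ step is just a repackaging of the paper's chain of sign-flipping equalities, so there is nothing substantive to add.
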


\begin{proof}
This is equivalent to proving that all following Carter scalars are zero:
\begin{align*} 
&\svctv{\zeta}{i}{\mu}\svctv{\zeta}{j}{\nu}\nabla_\mu(\svcov{\zeta}{k}{\nu}),
\quad\svctv{\eta}{i}{\mu}\svctv{\eta}{j}{\nu}\nabla_\mu(\svcov{\zeta}{k}{\nu}),\\
&\svctv{\zeta}{i}{\mu}\svctv{\eta}{j}{\nu}\nabla_\mu(\svcov{\eta}{k}{\nu}),
\quad\svctv{\eta}{i}{\mu}\svctv{\zeta}{j}{\nu}\nabla_\mu(\svcov{\eta}{k}{\nu}).
\end{align*}
In the first scalar, $(i)(j)$ are antisymmetric due to the Killing equation of $\svctv{\zeta}{k}{}$, and $(i)(k)$ are symmetric because $\svctv{\zeta}{i}{}$ and $\svctv{\zeta}{k}{}$ commute. With these in mind, we have
\begin{align}
    &\svctv{\zeta}{i}{\mu}\svctv{\zeta}{j}{\nu}\nabla_\mu(\svcov{\zeta}{k}{\nu})
    =-\svctv{\zeta}{j}{\mu}\svctv{\zeta}{i}{\nu}\nabla_\mu(\svcov{\zeta}{k}{\nu})\notag\\
    =&-\svctv{\zeta}{k}{\mu}\svctv{\zeta}{i}{\nu}\nabla_\mu(\svcov{\zeta}{j}{\nu})
    =\svctv{\zeta}{i}{\mu}\svctv{\zeta}{k}{\nu}\nabla_\mu(\svcov{\zeta}{j}{\nu})\notag\\
    =&\svctv{\zeta}{j}{\mu}\svctv{\zeta}{k}{\nu}\nabla_\mu(\svcov{\zeta}{i}{\nu})
    =-\svctv{\zeta}{k}{\mu}\svctv{\zeta}{j}{\nu}\nabla_\mu(\svcov{\zeta}{i}{\nu})\notag\\
    =&-\svctv{\zeta}{i}{\mu}\svctv{\zeta}{j}{\nu}\nabla_\mu(\svcov{\zeta}{k}{\nu})=0.
\end{align}
For the second scalar, again using the Killing equation, together with the orthogonality and commutation,
\begin{align}
    &\svctv{\eta}{i}{\mu}\svctv{\eta}{j}{\nu}\nabla_\mu(\svcov{\zeta}{k}{\nu})
    =-\svctv{\eta}{j}{\mu}\svctv{\eta}{i}{\nu}\nabla_\mu(\svcov{\zeta}{k}{\nu})\notag\\
    =&\svctv{\eta}{j}{\mu}\svctv{\zeta}{k}{\nu}\nabla_\mu(\svcov{\eta}{i}{\nu})
    =\svctv{\eta}{i}{\mu}\svctv{\zeta}{k}{\nu}\nabla_\mu(\svcov{\eta}{j}{\nu})\notag\\
    =&-\svctv{\eta}{i}{\mu}\svctv{\eta}{j}{\nu}\nabla_\mu(\svcov{\zeta}{k}{\nu})=0.
\end{align}
This result can then be applied to the last two scalars:
\begin{align}
    &\svctv{\zeta}{i}{\mu}\svctv{\eta}{j}{\nu}\nabla_\mu(\svcov{\eta}{k}{\nu})
    =\svctv{\eta}{k}{\mu}\svctv{\eta}{j}{\nu}\nabla_\mu(\svcov{\zeta}{i}{\nu})=0, \\
    &\svctv{\eta}{i}{\mu}\svctv{\zeta}{j}{\nu}\nabla_\mu(\svcov{\eta}{k}{\nu})
    =-\svctv{\eta}{i}{\mu}\svctv{\eta}{k}{\nu}\nabla_\mu(\svcov{\zeta}{j}{\nu})=0.
\end{align}
Thus $\nabla\svctv{\zeta}{k}{}$ is skew invertible and $\nabla\svctv{\eta}{k}{}$ is invertible.
\end{proof}

\begin{lemma} \label{lm:inv_d}
Let $\mathcal{U}$ be an open subregion of an $n$-dimensional manifold with a pseudo-Riemannian metric, such that there is an Abelian $p$-parameter isometry group which is orthogonally transitive in $\mathcal{U}$.
Let $T$ be a tensor which is invertible or skew invertible in the group everywhere in $\mathcal{U}$, and is Lie transported by the generators of the group. Then the covariant derivative of $T$, $\nabla T$, has the same type of invertibility as $T$.
\end{lemma}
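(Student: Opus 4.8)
The plan is to reduce the statement to the vanishing of Carter scalars and then feed in Lemma~\ref{lm:inv_chris} together with the two hypotheses (Lie transport and invertibility throughout the open set $\mathcal{U}$). Since $\nabla T$ carries one more lower index than $T$, a generic Carter scalar of $\nabla T$ is built by contracting the derivative index with some basis vector $W\in\{\zeta,\eta\}$ and the remaining indices of $T$ with basis vectors, $r$ of which are $\zeta$'s. Let $R$ be the total number of $\zeta$ contractions; I must show the scalar vanishes whenever $R$ is odd (invertible case) or whenever $R$ is even (skew case). Writing $C[\,\cdot\,]$ for the Carter scalar with this fixed choice of contracting vectors, the product rule gives
\begin{equation}
C[\,W\!\cdot\!\nabla T\,] = W^\lambda \nabla_\lambda\, C[T] - \sum_a C_a ,
\end{equation}
where $C[\,W\!\cdot\!\nabla T\,]$ is the Carter scalar of $\nabla T$ under consideration, and each correction term $C_a$ is $C[T]$ with one contracted basis vector $V$ replaced by $W^\lambda\nabla_\lambda V$.

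For the correction terms I would expand $W^\lambda\nabla_\lambda V$ in the basis $\{\zeta,\eta\}$, which is complete because the surface element is nonnull, and read off its components by contracting with the basis. Lemma~\ref{lm:inv_chris} --- that $\nabla\zeta$ is skew invertible while $\nabla\eta$ is invertible --- forces a clean dichotomy: $W^\lambda\nabla_\lambda V$ lies tangent to the surfaces of transitivity when $W$ and $V$ are of opposite types, and orthogonal to them when $W$ and $V$ are of the same type. Either way $W^\lambda\nabla_\lambda V$ is a pure vector, so each $C_a$ is a linear combination of Carter scalars of $T$, all carrying a definite number of $\zeta$ contractions (the function-valued coefficients play no role). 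A short count shows this number always has the same parity as $R$, so at the target parity it is exactly the parity that the invertibility of $T$ annihilates, and every $C_a$ vanishes.

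The remaining term $W^\lambda\nabla_\lambda C[T]$ is where the two hypotheses act separately, and I expect it to be the crux of the proof. If $W$ is a Killing vector, then because $T$ is Lie transported and the basis vectors mutually commute (Lemma~\ref{lm:inv_chris}), the scalar $C[T]$ is itself Lie transported along $W$, so $W^\lambda\nabla_\lambda C[T]=\pounds_W C[T]=0$. If instead $W$ is an orthogonal vector $\eta$, Lie transport is silent and a directional derivative of $C[T]$ need not vanish at a point --- this is the genuine obstacle. The resolution uses that here $R=r$, so at the target parity the differentiated factor $C[T]$ is a Carter scalar of $T$ of exactly the parity that invertibility kills; because $T$ is invertible everywhere in $\mathcal{U}$, $C[T]$ is the zero function on $\mathcal{U}$, and its derivative along any direction --- in particular along $\eta$ --- vanishes. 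This is precisely why the hypothesis demands invertibility on an open region rather than at a single point.

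Putting this together, at the target parity every term on the right-hand side of the identity vanishes, so $C[\,W\!\cdot\!\nabla T\,]=0$; as this holds for every admissible contraction, $\nabla T$ inherits the invertibility of $T$. The skew-invertible case is identical with all parities exchanged. The only laborious part is the four-way case analysis of $W^\lambda\nabla_\lambda V$ through Lemma~\ref{lm:inv_chris}, while the one conceptual subtlety is the open-region argument that disposes of the $\eta$-directional derivative.
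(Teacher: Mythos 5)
Your proof is correct and takes essentially the same route as the paper's: both apply the Leibniz rule to the Carter scalars of $\nabla T$, kill the $\zeta$-directional derivative using Lie transport of $T$ together with the commutativity of the basis vectors, kill the $\eta$-directional derivative by noting that the differentiated Carter scalar of $T$ vanishes identically on $\mathcal{U}$ at the relevant parity (the open-region argument you correctly flag), and dispose of the correction terms via Lemma~\ref{lm:inv_chris}. The only cosmetic difference is bookkeeping: you expand $W^\lambda\nabla_\lambda V$ in the $\{\zeta,\eta\}$ basis and count parities directly, whereas the paper packages the correction terms as contractions $(\nabla\zeta)\cdot T$ and $(\nabla\eta)\cdot T$ and invokes Lemma~\ref{lm:inv_contr}.
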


\begin{proof}
Since the group is Abelian and orthogonally transitive, there exist independent Killing vectors $\svctv{\zeta}{i}{\mu}\,(i=1,\cdots,p)$ generating the surfaces of transitivity, and independent vectors $\svctv{\eta}{i}{\mu}\,(i=p+1,\cdots,n)$ orthogonal to the surfaces of transitivity, such that they all commute. 

Consider the Carter scalars of $\nabla T$. Typically there are two types of these scalars depending on whether the $\nabla$ is contracted with a $\zeta$ or $\eta$. In the $\zeta\cdot\nabla$ case, let $\svctv{\zeta}{i_1}{}$ be the one contracting with $\nabla$, and the condition that $T$ is Lie transported by the group generators tells us that
\begin{align}
    &\svctv{\zeta}{i_1}{\mu_1}\nabla_{\mu_1}\tensor{T}{_{\mu_2\cdots\mu_{t}}}\notag\\
    &\quad=-\sum_{a=2}^{t}\nabla_{\mu_a}(\svctv{\zeta}{i_1}{\mu_1})\, \tensor{T}{_{\mu_2\cdots\mu_{a-1}\mu_1\mu_{a+1}\cdots\mu_{t}}}.
\end{align}
Thus the Carter scalar is
\begin{align}
    &\nabla_{\mu_1}\tensor{T}{_{\mu_2\cdots\mu_{r+s}}}\notag\\
    &\quad\times\svctv{\zeta}{i_1}{\mu_1}\cdots\svctv{\zeta}{i_r}{\mu_{r}}
    \svctv{\eta}{i_{r+1}}{\mu_{r+1}}\cdots\svctv{\zeta}{i_{r+s}}{\mu_{r+s}} \notag\\
    =&-\sum_{a=2}^{r+s} 
    \nabla_{\mu_a}(\svctv{\zeta}{i_1}{\mu_1})\,\tensor{T}{_{\mu_2\cdots\mu_{a-1}\mu_1\mu_{a+1}\cdots\mu_{r+s}}} \notag\\
    &\quad\times\svctv{\zeta}{i_2}{\mu_2}\cdots\svctv{\zeta}{i_{r}}{\mu_{r}} \svctv{\eta}{i_{r+1}}{\mu_{r+1}}\cdots\svctv{\eta}{i_{r+s}}{\mu_{r+s}}. \label{eq:lm_dzeta}
\end{align}
The right-hand side is a summation of Carter scalars of $(\nabla\zeta)\cdot T$ with $(r-1)$ directions of transitivity. According to Lemma \ref{lm:inv_chris}, $\nabla\zeta$ is skew invertible. Then according to Lemma \ref{lm:inv_contr}, if $T$ is invertible, $(\nabla\zeta)\cdot T$ is skew invertible, and the summation becomes zero whenever $r$ is odd. If $T$ is skew invertible, $(\nabla\zeta)\cdot T$ is invertible, and the summation becomes zero whenever $r$ is even.

In the $\eta\cdot\nabla$ case, let $\svctv{\eta}{j_1}{}$ be the one contracting with $\nabla$. The Carter scalar is
\begin{align}
    &\nabla_{\nu_1} \tensor{T}{_{\mu_1\cdots\mu_r\nu_2\cdots\nu_{s}}}
    \svctv{\zeta}{i_1}{\mu_1}\cdots\svctv{\zeta}{i_r}{\mu_r}
    \svctv{\eta}{j_1}{\nu_1}\cdots\svctv{\zeta}{j_{s}}{\nu_{s}} \notag\\
    =&\svctv{\eta}{j_1}{\nu_1}\nabla_{\nu_1} \Big( \tensor{T}{_{\mu_1\cdots\mu_r\nu_2\cdots\nu_{s}}} \notag\\
    &\quad\times\svctv{\zeta}{i_1}{\mu_1}\cdots\svctv{\zeta}{i_r}{\mu_r}
    \svctv{\eta}{j_2}{\nu_2}\cdots\svctv{\zeta}{j_{s}}{\nu_{s}} \Big) \notag\\
    &-\sum_{a=1}^{r}
    \nabla_{\nu_1}(\svctv{\zeta}{i_a}{\mu_a})\,
    \tensor{T}{_{\mu_1\cdots\mu_{r}\nu_2\cdots\nu_{s}}} \svctv{\eta}{j_{1}}{\nu_{1}}\cdots\svctv{\eta}{j_{s}}{\nu_{s}} \notag\\
    &\quad\times \svctv{\zeta}{i_1}{\mu_1}\cdots\svctv{\zeta}{i_{a-1}}{\mu_{a-1}}\svctv{\zeta}{i_{a+1}}{\mu_{a+1}}\cdots\svctv{\zeta}{i_r}{\mu_r} \notag\\
    &-\sum_{a=2}^{s} 
    \nabla_{\nu_1}(\svctv{\eta}{j_a}{\nu_a})\, \tensor{T}{_{\mu_1\cdots\mu_r\nu_2\cdots\nu_{s}}}
    \svctv{\zeta}{i_1}{\mu_1}\cdots\svctv{\zeta}{i_r}{\mu_r} \notag\\
    &\quad\times \svctv{\eta}{j_1}{\nu_1}\cdots\svctv{\eta}{j_{a-1}}{\nu_{a-1}}\svctv{\eta}{j_{a+1}}{\nu_{a+1}}\cdots\svctv{\eta}{j_{s}}{\nu_{s}}. \label{eq:lm_deta}
\end{align}
The first term is a derivative of a Carter scalar created upon $T$ with $r$ directions of transitivity, the second term sums over Carter scalars of $(\nabla\zeta)\cdot T$ with $(r-1)$ directions of transitivity, and the third term sums over Carter scalars of $(\nabla\eta)\cdot T$ with $r$ directions of transitivity. We again apply Lemma \ref{lm:inv_chris} and Lemma \ref{lm:inv_contr}. If $T$ is invertible, then all three terms vanish whenever $r$ is odd. If $T$ is skew invertible, then all three terms vanish whenever $r$ is even.

To summarize, if $T$ is invertible, then any Carter scalar of $\nabla T$ with an odd $r$ vanishes. If $T$ is skew invertible, then any the Carter scalar of $\nabla T$ with an even $r$ vanishes. In other words, $\nabla T$ is invertible if $T$ is invertible, and is skew invertible if $T$ is skew invertible.
\end{proof}

\begin{lemma}\label{lm:inv_dphi}
Let $\mathcal{U}$ be an open subregion of an $n$-dimensional manifold with a pseudo-Riemannian metric, such that there is an Abelian $p$-parameter isometry group in $\mathcal{U}$. Let $\varphi$ be a scalar defined on $\mathcal{U}$ which is Lie transported by the generators of the group. Then $\nabla\varphi$ is invertible in the group everywhere in $\mathcal{U}$.
\end{lemma}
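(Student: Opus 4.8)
The plan is to apply the definition of invertibility directly, exploiting the fact that $\nabla\varphi$ is only a rank-one covector. Its Carter scalars are formed by contracting the single lower index with one of the frame vectors at a point, i.e. either a Killing vector $\svctv{\zeta}{i}{\mu}$ or an orthogonal vector $\svctv{\eta}{i}{\mu}$. Invertibility requires every Carter scalar with an odd number $r$ of contracted $\zeta$'s to vanish, and for a rank-one object the only such case is $r=1$. Thus the entire lemma reduces to establishing that $\svctv{\zeta}{i}{\mu}\nabla_\mu\varphi = 0$ for each generator, everywhere in $\mathcal{U}$.

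To do this, I would first use that the covariant derivative of a scalar coincides with its partial derivative, so that $\svctv{\zeta}{i}{\mu}\nabla_\mu\varphi = \svctv{\zeta}{i}{\mu}\partial_\mu\varphi = \pounds_{\zeta_{(i)}}\varphi$. By the hypothesis that $\varphi$ is Lie transported by the group generators, this Lie derivative vanishes identically throughout $\mathcal{U}$. The unique odd-$r$ Carter scalar is therefore zero at every point, which by definition means $\nabla\varphi$ is invertible in the group. This is exactly the observation already made for the leading-order stress-energy tensor in the main text [cf.~Eq.~\eqref{eq:lie_dphi}], now promoted to a standalone statement about the gradient.

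The point worth emphasizing, and the reason this lemma is listed separately from Lemma~\ref{lm:inv_d}, is that no orthogonal transitivity is assumed in its hypotheses. The general covariant-derivative result relied on Lemma~\ref{lm:inv_chris}, whose control of $\nabla\svctv{\zeta}{i}{}$ and $\nabla\svctv{\eta}{i}{}$ required the group to be orthogonally transitive. For a scalar, however, the derivative carries no Christoffel contribution, so the Lie-transport condition alone does all the work and the circularity assumption can be dropped. Consequently there is no genuine obstacle here; the only subtlety is to confirm that the Abelian and orthogonal-transitivity assumptions used by the earlier lemmas are truly inert once one recognizes that the computation involves nothing more than $\partial_\mu\varphi$.
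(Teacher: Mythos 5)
Your proposal is correct and follows essentially the same route as the paper's own proof: both reduce the claim to the single odd-$r$ Carter scalar $\svctv{\zeta}{i}{\mu}\nabla_\mu\varphi$ and observe that it vanishes by the Lie-transport hypothesis, exactly as in Eq.~\eqref{eq:lie_dphi} of the main text. Your added remark that orthogonal transitivity plays no role here (since a scalar's covariant derivative has no Christoffel term) is a correct and welcome clarification, consistent with the paper's own framing of this lemma.
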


\begin{proof}
Let $\svctv{\zeta}{i}{\mu}\,(i=1,\cdots,p)$ be independent Killing vectors generating the surfaces of transitivity. The condition that $\varphi$ is Lie transported by the group generators suggests that
\begin{align}
\svctv{\zeta}{i}{\mu}\nabla_\mu\varphi=0.
\end{align}
Note that the left-hand side also represents all Carter scalars of $\nabla\varphi$ with an odd number of $\zeta$'s. Therefore this condition indicates that $\nabla\varphi$ is invertible. 
\end{proof}

\begin{lemma}\label{lm:inv_dA}
Let $\mathcal{U}$ be an open subregion of an $n$-dimensional manifold with a pseudo-Riemannian metric, such that there is an Abelian $p$-parameter isometry group in $\mathcal{U}$.
Let $A$ be a 1-form defined on $\mathcal{U}$ which is both Lie transported by the generators of the group and invertible in the group. Then the exterior derivative of $A$, $({\rm d}A)_{\mu\nu}\equiv\nabla_\mu A_\nu-\nabla_\nu A_\mu$, is invertible in the group everywhere in $\mathcal{U}$.
\end{lemma}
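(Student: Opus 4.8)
The plan is to reduce invertibility of $\mathrm{d}A$ to a single family of Carter scalars and then dispatch them with Cartan's identity, exploiting that the exterior derivative carries no metric data. Because $\mathrm{d}A$ is a rank-$2$ tensor, its Carter scalars are the contractions $(\mathrm{d}A)_{\mu\nu}\,U^\mu W^\nu$ with $U,W$ each drawn from $\{\svctv{\zeta}{i}{\mu},\svctv{\eta}{j}{\mu}\}$, and the number $r$ of contracted Killing vectors can only be $0$, $1$, or $2$. Invertibility requires vanishing precisely when $r$ is odd, so the whole lemma collapses to showing that the $r=1$ scalars $\svctv{\zeta}{i}{\mu}\,\svctv{\eta}{j}{\nu}\,(\mathrm{d}A)_{\mu\nu}$ vanish; it therefore suffices to prove $\svctv{\zeta}{i}{\mu}(\mathrm{d}A)_{\mu\nu}=0$ for every Killing generator and then contract freely with any $\svctv{\eta}{j}{\nu}$.

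First I would restate the two hypotheses at the level of the $1$-form. Invertibility of $A$ (a rank-$1$ tensor) is exactly the vanishing of its lone odd Carter scalar, i.e. $\iota_{\zeta_{(i)}}A=\svctv{\zeta}{i}{\mu}A_\mu=0$ for each $i$, while Lie transport says $\pounds_{\zeta_{(i)}}A=0$. Next I would apply Cartan's magic formula $\pounds_\zeta A=\iota_\zeta\,\mathrm{d}A+\mathrm{d}(\iota_\zeta A)$: the left-hand side vanishes by Lie transport and the last term vanishes because $\iota_\zeta A=0$ by invertibility, leaving $\iota_\zeta\,\mathrm{d}A=0$, that is $\svctv{\zeta}{i}{\mu}(\mathrm{d}A)_{\mu\nu}=0$. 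Contracting with any $\svctv{\eta}{j}{\nu}$ then annihilates the required $r=1$ Carter scalars and establishes invertibility of $\mathrm{d}A$. If one prefers to avoid the Cartan formula, the same identity follows by writing $(\mathrm{d}A)_{\mu\nu}=\partial_\mu A_\nu-\partial_\nu A_\mu$ and combining the Lie-transport relation $\svctv{\zeta}{i}{\mu}\partial_\mu A_\nu=-A_\lambda\,\partial_\nu\svctv{\zeta}{i}{\lambda}$ with $\partial_\nu(\svctv{\zeta}{i}{\mu}A_\mu)=0$, whereupon the two resulting terms cancel.

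The feature worth emphasizing---and the only place demanding care---is that, unlike Lemma \ref{lm:inv_d}, this argument never assumes orthogonal transitivity. Both $\pounds_\zeta$ and $\iota_\zeta$ reference only the Killing generators and are metric-independent, so I never need to know how the orthogonal vectors $\svctv{\eta}{j}{\mu}$ behave under $\nabla$---the very step that forced orthogonal transitivity through Lemma \ref{lm:inv_chris}. Equivalently, the Christoffel symbols cancel in $\mathrm{d}A$, so the only derivative of $A$ that enters is the metric-free exterior one. There is consequently no genuine obstacle; the sole conceptual point is recognizing that invertibility of $A$ is nothing but the condition $\iota_\zeta A=0$ that makes Cartan's formula collapse.
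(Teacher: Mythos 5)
Your proof is correct and is essentially the paper's own argument: the paper contracts $\svctv{\zeta}{i}{\mu}$ into $({\rm d}A)_{\mu\nu}$, uses the Lie-transport relation $\svctv{\zeta}{i}{\mu}\nabla_\mu A_\nu=-A_\mu\nabla_\nu\svctv{\zeta}{i}{\mu}$ to rewrite the result as $-\nabla_\nu(A_\mu\svctv{\zeta}{i}{\mu})\,\svctv{\eta}{j}{\nu}$, and kills it with $A_\mu\svctv{\zeta}{i}{\mu}=0$, which is precisely Cartan's formula $\iota_\zeta\,{\rm d}A=\pounds_\zeta A-{\rm d}(\iota_\zeta A)$ written in indices. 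Your ``alternative'' index computation at the end is literally the paper's proof, and your observation that orthogonal transitivity is never needed matches the paper's framing of this lemma.
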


\begin{proof}
Let $\svctv{\zeta}{i}{\mu}\,(i=1,\cdots,p)$ be independent Killing vectors generating the surfaces of transitivity, and $\svctv{\eta}{i}{\mu}\,(i=p+1,\cdots,n)$ be independent vectors orthogonal to the surfaces of transitivity. The condition that $A$ is Lie transported by the group generators suggests that
\begin{align}
\svctv{\zeta}{i}{\mu}\nabla_\mu A_\nu+A_\mu\nabla_\nu\svctv{\zeta}{i}{\mu}=0.
\end{align}
Due to the antisymmetry of ${\rm d}A$, we only needs to show that the following Carter scalar vanishes:
\begin{align}
\svctv{\zeta}{i}{\mu}\svctv{\eta}{j}{\nu}({\rm d}A)_{\mu\nu}
=&\svctv{\zeta}{i}{\mu}\svctv{\eta}{j}{\nu}(\nabla_\mu A_\nu-\nabla_\nu A_\mu) \notag\\
=&-\svctv{\eta}{j}{\nu}(A_\mu\nabla_\nu\svctv{\zeta}{i}{\mu}+\svctv{\zeta}{i}{\mu}\nabla_\nu A_\mu) \notag\\
=&-\nabla_\nu(A_\mu\svctv{\zeta}{i}{\mu})\svctv{\eta}{j}{\nu}.
\end{align}
Since $A$ is invertible, $A_\mu\svctv{\zeta}{i}{\mu}$ vanishes, and this Carter scalar also vanishes. Therefore ${\rm d}A$ is invertible.
\end{proof}

\subsection{Invertibility of Common Tensors}
\begin{lemma}\label{lm:inv_phi}
Let $P$ be a point on an $n$-dimensional manifold with a pseudo-Riemannian metric, such that there is a $p$-dimensional nonnull surface element at $P$. Then any scalar $\varphi$ is invertible in the $p$ element at $P$. 
\end{lemma}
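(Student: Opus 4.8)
The plan is to argue directly from Carter's definition of invertibility given at the start of this Supplemental Material, according to which a tensor is invertible precisely when all of its Carter scalars vanish whenever the number $r$ of Killing vectors $\svctv{\zeta}{i}{\mu}$ contracted into its indices is odd. First I would observe that a scalar $\varphi$ carries no free indices, so there is nothing into which any $\svctv{\zeta}{i}{\mu}$ or $\svctv{\eta}{j}{\nu}$ can be contracted. Consequently the only object qualifying as a Carter scalar of $\varphi$ is $\varphi$ itself, which corresponds to $r=0$.

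Next I would note that since $r=0$ is even, there simply exist no Carter scalars of $\varphi$ with odd $r$. The defining condition for invertibility --- the vanishing of all Carter scalars with odd $r$ --- is therefore satisfied vacuously, so $\varphi$ is invertible in the $p$ element at $P$. Equivalently, invoking the transformation characterization of invertibility, $\varphi$ is manifestly invariant under $\svctv{\zeta}{i}{\mu}\rightarrow-\svctv{\zeta}{i}{\mu}$ because it does not involve the $\svctv{\zeta}{i}{\mu}$ at all; this is exactly the statement that it is invertible (rather than skew invertible).

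There is no genuine computational obstacle here, and the main point to flag is merely presentational: the conclusion rests on a \emph{vacuous} satisfaction of the defining condition, so one must phrase invertibility as a property quantified over all Carter scalars with odd $r$, whereupon the absence of any such scalars yields the result immediately. I would also remark that the nonnull hypothesis on the $p$ element plays no active role in this particular lemma and is carried along only to keep the statement uniform with the companion lemmas that do require it. The chief value of establishing this trivial fact explicitly is that, combined with Lemmas~\ref{lm:inv_prod}, \ref{lm:inv_contr}, and \ref{lm:inv_d}, it licenses treating any scalar building block (and in particular the light fields $\varphi$ and coupling functions in the main text) as an invertible factor when assembling $T_{\mu\nu}$ and $M_{\mu\nu}$.
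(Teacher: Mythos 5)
Your proposal is correct and takes essentially the same route as the paper: both identify $\varphi$ itself (the $r=0$ case) as the only Carter scalar that can be formed from a scalar, and conclude invertibility from its trivial invariance under the simultaneous inversion of the $p$-element directions. Your additional remark that the odd-$r$ condition is satisfied vacuously is just a rephrasing of the same observation, so there is nothing substantively different to compare.
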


\begin{proof}
The only Carter scalar that can be created upon $\varphi$ is $\varphi$ itself, which is invariant when all directions on the $p$ element are simultaneously inverted. Therefore $\varphi$ is invertible.
\end{proof}

\begin{lemma}\label{lm:inv_g}
Let $P$ be a point on an $n$-dimensional manifold with a pseudo-Riemannian metric, such that there is a $p$-dimensional nonnull surface element at $P$. Then the metric $g_{\mu\nu}$ is invertible in the $p$ element at $P$. 
\end{lemma}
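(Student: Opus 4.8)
The plan is to enumerate the Carter scalars of the rank-two metric tensor and to verify that every one with an odd number of contracted Killing directions vanishes by orthogonality. First I would set up, exactly as in the preceding lemmas, a basis $\svctv{\zeta}{i}{\mu}\,(i=1,\cdots,p)$ spanning the $p$-dimensional surface element at $P$ together with $\svctv{\eta}{i}{\mu}\,(i=p+1,\cdots,n)$ spanning the orthogonal $(n-p)$ element. Recall that invertibility requires the Carter scalars to vanish whenever the number $r$ of contracted $\zeta$'s is odd.

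Since $g_{\mu\nu}$ carries two lower indices, every Carter scalar is obtained by contracting both indices with vectors drawn from the combined basis, leaving only three structurally distinct possibilities: the scalar $g_{\mu\nu}\svctv{\zeta}{i}{\mu}\svctv{\zeta}{j}{\nu}$ with $r=2$, the scalar $g_{\mu\nu}\svctv{\eta}{i}{\mu}\svctv{\eta}{j}{\nu}$ with $r=0$, and the mixed scalar $g_{\mu\nu}\svctv{\zeta}{i}{\mu}\svctv{\eta}{j}{\nu}$ with $r=1$. The two even-$r$ scalars are irrelevant to the invertibility test, so the entire claim reduces to showing that the single odd-$r$ scalar, the mixed one, vanishes.

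The mixed Carter scalar $g_{\mu\nu}\svctv{\zeta}{i}{\mu}\svctv{\eta}{j}{\nu}$ is precisely the inner product of a $\zeta$ vector with an $\eta$ vector. By the very construction of the orthogonal element, each $\svctv{\eta}{j}{\nu}$ is orthogonal to the full $p$-dimensional surface element, hence to every $\svctv{\zeta}{i}{\mu}$, so this scalar vanishes for all $i$ and $j$. Consequently every odd-$r$ Carter scalar of $g_{\mu\nu}$ is zero and the metric is invertible.

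I expect no real obstacle here: the result is an immediate consequence of the defining orthogonality between the $p$ element and its complement, and the only point demanding even minimal care is the bookkeeping that identifies the mixed contraction as the sole odd-$r$ scalar for a rank-two tensor. Notably, the nonnull assumption that was essential in the proof of Lemma \ref{lm:inv_tr} is not needed for this statement at all.
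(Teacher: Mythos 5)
Your proof is correct and takes essentially the same route as the paper's: the only odd-$r$ Carter scalars of $g_{\mu\nu}$ are the mixed contractions $g_{\mu\nu}\svctv{\zeta}{i}{\mu}\svctv{\eta}{j}{\nu}$, and these vanish by the defining orthogonality of the two elements. One caveat on your closing remark: although the computation never invokes nonnullness, that hypothesis is still what guarantees the orthogonal $(n-p)$ element is a genuine complement of the $p$ element, so that the basis of $\zeta$'s and $\eta$'s, and hence the notion of invertibility itself, is well defined at $P$.
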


\begin{proof}
Let $\svctv{\zeta}{i}{\mu}\,(i=1,\cdots,p)$ be a set of independent vectors spanning the $p$-dimensional surface element at $P$, and let $\svctv{\eta}{i}{\mu}\,(i=p+1,\cdots,n)$ be a set of independent vectors spanning the orthogonal $(n-p)$ element at $P$. By orthogonality we have
\begin{align}
    g_{\mu\nu}\svctv{\zeta}{i}{\mu}\svctv{\eta}{j}{\nu}=0.
\end{align}
Note that the left-hand side also represents all Carter scalars of $g_{\mu\nu}$ with an odd number of $\zeta$'s. Therefore this orthogonality indicates that $g_{\mu\nu}$ is invertible. 
\end{proof}

\begin{lemma}\label{lm:inv_e}
Let $P$ be a point on an $n$-dimensional manifold with a pseudo-Riemannian metric, such that there is a $p$-dimensional nonnull surface element at $P$. Then the Levi-Civita tensor $\epsilon_{\mu_1\cdots\mu_n}$ is invertible in the $p$ element at $P$ if $p$ is even, and is skew invertible in the $p$ element at $P$ if $p$ is odd. 
\end{lemma}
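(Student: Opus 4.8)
The plan is to work straight from the definition of Carter scalars and to lean entirely on the total antisymmetry of $\epsilon_{\mu_1\cdots\mu_n}$. As in the preceding point-lemmas, I would first fix a basis adapted to the splitting at $P$: let $\svctv{\zeta}{i}{\mu}\,(i=1,\cdots,p)$ span the $p$-dimensional surface element and let $\svctv{\eta}{i}{\mu}\,(i=p+1,\cdots,n)$ span the orthogonal $(n-p)$-dimensional element. Since $\epsilon$ carries exactly $n$ indices, every Carter scalar of $\epsilon$ contracts all of them, and so it necessarily takes the form of $r$ contractions with $\zeta$'s and $s=n-r$ contractions with $\eta$'s, with $r+s=n$.

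The key step is a counting argument. Because $\epsilon$ is totally antisymmetric, any Carter scalar in which one of the basis vectors is contracted more than once vanishes identically. There are only $p$ distinct $\zeta$'s and $(n-p)$ distinct $\eta$'s, so by the pigeonhole principle, if $r>p$ then two of the contracted $\zeta$'s must coincide, and if $s>n-p$ then two of the contracted $\eta$'s must coincide; in either case the scalar is zero. Together with the constraint $r+s=n=p+(n-p)$, this forces the only possibly nonvanishing Carter scalars to be those saturating both blocks, i.e., with exactly $r=p$ and $s=n-p$ (each distinct $\zeta$ and $\eta$ used once, up to a sign from reordering).

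With that reduction in hand, the conclusion follows by parity. Under the inversion $\svctv{\zeta}{i}{\mu}\rightarrow-\svctv{\zeta}{i}{\mu}$ with the $\eta$'s held fixed, a Carter scalar carrying $r$ factors of $\zeta$ acquires a factor $(-1)^r$. Since every nonvanishing scalar has $r=p$, the behavior is dictated solely by the parity of $p$: if $p$ is even, all nonvanishing Carter scalars carry an even number of $\zeta$'s, so those with odd $r$ vanish and $\epsilon$ is invertible; if $p$ is odd, every nonvanishing scalar carries an odd number of $\zeta$'s, so those with even $r$ vanish and $\epsilon$ is skew invertible.

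I do not anticipate a serious obstacle, since the statement is at bottom a parity bookkeeping fact. The one point demanding care is the pigeonhole step, i.e., checking that antisymmetry together with the dimension count genuinely annihilates every Carter scalar except the ones saturating both the $\zeta$ and $\eta$ blocks. Once the reduction to $r=p$ is secured, the invertible versus skew-invertible dichotomy is determined immediately by whether $p$ is even or odd.
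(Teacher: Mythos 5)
Your proof is correct and follows essentially the same route as the paper: both reduce the problem to observing that, by total antisymmetry of $\epsilon_{\mu_1\cdots\mu_n}$, every Carter scalar with $r\neq p$ vanishes, after which the invertible/skew-invertible dichotomy is immediate from the parity of $p$. Your pigeonhole step merely spells out the justification that the paper compresses into the single sentence ``Given the antisymmetrization, this scalar vanishes $\forall r\neq p$,'' so no further comparison is needed.
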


\begin{proof}
Let $\svctv{\zeta}{i}{\mu}\,(i=1,\cdots,p)$ be a set of independent vectors spanning the $p$-dimensional surface element at $P$, and let $\svctv{\eta}{i}{\mu}\,(i=p+1,\cdots,n)$ be a set of independent vectors spanning the orthogonal $(n-p)$ element at $P$. 

Consider the Carter scalar of $\epsilon_{\mu_1\cdots\mu_n}$ at $P$:
\begin{align}
    &\tensor{\epsilon}{_{\mu_1\cdots\mu_r}_{\nu_1\cdots\nu_{n-r}}}
    \svctv{\zeta}{i_1}{\mu_1}\cdots\svctv{\zeta}{i_r}{\mu_r}
    \svctv{\eta}{j_1}{\nu_1}\cdots\svctv{\eta}{j_{n-r}}{\nu_{n-r}}.
\end{align}
Given the antisymmetrization, this scalar vanishes $\forall r\neq p$. If $p$ is even, then this scalar vanishes whenever $r$ is odd, indicating that $\epsilon_{\mu_1\cdots\mu_n}$ is invertible. If $p$ is odd, then this scalar vanishes whenever $r$ is even, indicating that $\epsilon_{\mu_1\cdots\mu_n}$ is skew invertible. 
\end{proof}

\begin{lemma} \label{lm:inv_R}
Let $\mathcal{U}$ be an open subregion of an $n$-dimensional manifold with a pseudo-Riemannian metric, such that there is an Abelian $p$-parameter isometry group which is orthogonally transitive in $\mathcal{U}$. Then the Riemann tensor $R_{\rho\sigma\mu\nu}$ is invertible in the group everywhere in $\mathcal{U}$.
\end{lemma}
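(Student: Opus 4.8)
The plan is to reconstruct the Riemann tensor's action on the isometry generators from the derivative lemmas already proven, using the standard Killing identity as the bridge. For any Killing vector $\zeta^\mu$ one has, schematically, $\nabla_\mu\nabla_\nu\zeta_\rho = R_{\rho\nu\mu}{}^{\lambda}\zeta_\lambda$, i.e. the contraction of $R$ on a single slot with $\zeta$ equals $\nabla\nabla\zeta$. Because $R_{\rho\sigma\mu\nu}$ is antisymmetric within each index pair and symmetric under exchange of the two pairs, contracting $R$ with a single $\zeta$ in \emph{any} one of its four slots is related by these symmetries to this expression, so controlling $\nabla\nabla\zeta$ controls every single-$\zeta$ contraction of $R$.

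First I would establish that $\nabla\svctv{\zeta}{i}{}$ is skew invertible: this is exactly the content of Lemma \ref{lm:inv_chris}, which applies since the group is assumed orthogonally transitive in $\mathcal{U}$. Next I would verify that $\nabla\svctv{\zeta}{i}{}$ is Lie transported by the generators. Since each generator is Killing it preserves the Levi-Civita connection, so $\pounds_{\zeta'}$ commutes with $\nabla$; and since the group is Abelian $\pounds_{\zeta'}\svctv{\zeta}{i}{}=0$, whence $\pounds_{\zeta'}(\nabla\svctv{\zeta}{i}{})=\nabla(\pounds_{\zeta'}\svctv{\zeta}{i}{})=0$. With skew invertibility and Lie transport in hand, Lemma \ref{lm:inv_d} (applicable because of orthogonal transitivity) immediately yields that $\nabla\nabla\svctv{\zeta}{i}{}$ is skew invertible.

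It then remains to convert skew invertibility of $\nabla\nabla\zeta$ into invertibility of $R$ by a parity count. Consider any Carter scalar of $R$ carrying an odd number $k$ of $\zeta$'s among its four contracted slots, the remaining $4-k$ slots carrying $\eta$'s. Since $k\geq 1$, I would single out one of the $\zeta$'s and use the Killing identity to rewrite that contraction as $\nabla\nabla\zeta$; the result is a Carter scalar of the rank-three tensor $\nabla\nabla\zeta$ in which $m=k-1$ of its slots carry $\zeta$'s. Because $k$ is odd, $m$ is even, and the skew invertibility of $\nabla\nabla\zeta$ forces this scalar to vanish. Thus every Carter scalar of $R$ with an odd number of $\zeta$'s vanishes, which is precisely the statement that $R_{\rho\sigma\mu\nu}$ is invertible in the group.

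The step I expect to be the main obstacle is this last one: the bookkeeping that turns a single-slot identity into a statement about all odd contractions. One must check that the Riemann symmetries genuinely permit the chosen $\zeta$ to be moved into the slot dictated by the Killing identity for every distribution of $\zeta$'s and $\eta$'s, and that the parity is tracked correctly, with odd $k$ for $R$ mapping to even $m$ for $\nabla\nabla\zeta$. The remaining ingredients---skew invertibility of $\nabla\zeta$ and its Lie transport---are direct applications of Lemmas \ref{lm:inv_chris} and \ref{lm:inv_d} together with standard Killing-vector facts.
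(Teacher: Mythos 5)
Your proposal is correct and follows essentially the same route as the paper: both use the Killing identity to turn a single-$\zeta$ contraction of $R_{\rho\sigma\mu\nu}$ into a Carter scalar of $\nabla\nabla\zeta$, establish skew invertibility of $\nabla\nabla\zeta$ via Lemmas \ref{lm:inv_chris} and \ref{lm:inv_d}, and conclude with the same parity count (odd $k$ for $R$ giving even $k-1$ for $\nabla\nabla\zeta$). Your explicit check that $\nabla\zeta$ is Lie transported by the generators---a hypothesis of Lemma \ref{lm:inv_d} that the paper invokes without comment---is a welcome extra detail, but not a different argument.
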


\begin{proof}
Since the group is Abelian and orthogonally transitive, there exist independent Killing vectors $\svctv{\zeta}{i}{\mu}\,(i=1,\cdots,p)$ generating the surfaces of transitivity, and independent vectors $\svctv{\eta}{i}{\mu}\,(i=p+1,\cdots,n)$ orthogonal to the surfaces of transitivity, such that they all commute. 

Due to the symmetry of the Riemann tensor, we only need to show that the following two Carter scalars vanish:
\begin{align*}
R_{\rho\sigma\mu\nu}\svctv{\eta}{k}{\rho}\svctv{\eta}{l}{\sigma}\svctv{\eta}{i}{\mu}\svctv{\zeta}{j}{\nu}, ~
R_{\rho\sigma\mu\nu}\svctv{\eta}{k}{\rho}\svctv{\zeta}{l}{\sigma}\svctv{\zeta}{i}{\mu}\svctv{\zeta}{j}{\nu}.
\end{align*}
Note that as a consequence of the Killing equation,
\begin{align}
R_{\rho\sigma\mu\nu}\svctv{\zeta}{j}{\nu}=\nabla_\mu\nabla_\sigma\svcov{\zeta}{j}{\rho}.
\end{align}
So the above Carter scalars can be treated as Carter scalars of $\nabla\nabla\svcov{\zeta}{j}{}$, i.e.
\begin{align}
&R_{\rho\sigma\mu\nu}\svctv{\eta}{k}{\rho}\svctv{\eta}{l}{\sigma}\svctv{\eta}{i}{\mu}\svctv{\zeta}{j}{\nu}\notag\\
&~\quad=\svctv{\eta}{k}{\rho}\svctv{\eta}{l}{\sigma}\svctv{\eta}{i}{\mu}\nabla_\mu\nabla_\sigma\svcov{\zeta}{j}{\rho},\\
&R_{\rho\sigma\mu\nu}\svctv{\eta}{k}{\rho}\svctv{\zeta}{l}{\sigma}\svctv{\zeta}{i}{\mu}\svctv{\zeta}{j}{\nu}\notag\\
&~\quad=\svctv{\eta}{k}{\rho}\svctv{\zeta}{l}{\sigma}\svctv{\zeta}{i}{\mu}\nabla_\mu\nabla_\sigma\svcov{\zeta}{j}{\rho}.
\end{align}
According to Lemma \ref{lm:inv_chris}, $\nabla\svcov{\zeta}{j}{}$ is skew invertible. Then according to Lemma \ref{lm:inv_d}, $\nabla\nabla\svcov{\zeta}{j}{}$ is also skew invertible. Therefore the above two Carter scalars both vanish, and $R_{\rho\sigma\mu\nu}$ is invertible. 
\end{proof}


\end{document}